\newcommand{\leqnos}{\tagsleft@true\let\veqno\@@leqno}
\newcommand{\reqnos}{\tagsleft@false\let\veqno\@@eqno}
\def\lQ{\scalebox{-1}[1]{''}}
\newcommand{\mc}[2]{\multicolumn{#1}{c}{#2}}
\definecolor{Gray}{gray}{0.85}
\definecolor{LightCyan}{rgb}{0.88,1,1}
\newcolumntype{a}{>{\columncolor{Gray}}c}
\newcolumntype{b}{>{\columncolor{white}}c}
\def\max{\text{max}}
\def\min{\text{min}}
\def\min{\text{min}}
\def\var{\text{var}}
\def\diag\text{diag}
\def\thetab{\theta_{b}}
\def\thetabf{\theta_{b+1}}
\def\gd{\textsc{gd}}
\def\nr{\textsc{nr}}
\def\dmk{\textsc{dmk}}
\def\ks{\textsc{ks}}
\def\rnr{r\textsc{nr}}
\def\rqn{rq\textsc{n}}
\def\rgd{r\textsc{gd}}
\def\sgd{s\textsc{gd}}
\def\snr{s\textsc{nr}}
\def\FL{\textsc{re}}
\def\bar{\overline}
\def\pk{P_k}
\def\pb{P_b}
 \newtheorem{theorem}{Theorem}
 \newtheorem{proposition}{Proposition}
 \newtheorem{lemma}{Lemma}
 \newtheorem{corollary}{Corollary}
 \newtheorem{assumption}{Assumption}
 \newcommand{\neutralize}[1]{\expandafter\let\csname c@#1\endcsname\count@}
\begin{document}
\title{{\bf Estimation and Inference by Stochastic Optimization}}
\author{Jean-Jacques Forneron\thanks{Department of Economics, Boston University, 270 Bay State Rd, MA 02215 Email: jjmf@bu.edu
\newline  I warmly thank Serena Ng for her many insights and feedback. I would also like to thank Tim Christensen, Iv\`an Fern\`andez-Val, S\'ilvia Gon\c{c}alves, Jessie Li, Elie Tamer, participants of the  Optimization-Conscious Econometrics Conference held at the University of Chicago, the World Congress of the Econometric Society, the 2021 New York Camp Econometrics, and the 2021 ASSA meetings, and seminar participants at UC Santa-Cruz, BU/BC, Georgetown, Tilburg University, Yale, UPenn, and HKUST for many helpful comments.  This paper and \citet{jjng-aerpp:21} supersede the manuscript ``Inference by Stochastic Optimization: A Free-Lunch Bootstrap'' \citep{jjng-rnr}.}
}
\date{\today}
\maketitle
\begin{abstract} 
In non-linear estimations, it is common to assess sampling uncertainty by bootstrap inference. For complex models, this can be computationally intensive.  
This paper combines optimization with resampling: turning stochastic optimization into a fast resampling device. Two methods are introduced: a resampled Newton-Raphson (\rnr) and a resampled quasi-Newton (\rqn) algorithm. Both produce draws that can be used to compute consistent estimates, confidence intervals, and standard errors \textit{in a single run}. The draws are generated by a gradient and Hessian (or an approximation) computed from batches of data that are resampled at each iteration. The proposed methods transition quickly from optimization to resampling when the objective is smooth and strictly convex. Simulated and empirical applications illustrate the properties of the methods on large scale and computationally intensive problems. Comparisons with frequentist and Bayesian methods highlight the features of the algorithms.


\end{abstract}
\noindent JEL Classification: C2, C3

\noindent Keywords: Stochastic gradient descent, M-estimation, $m$ out of $n$ and multiplier Bootstrap.

\setcounter{page}{0}
\thispagestyle{empty}
\baselineskip=18.pt
\newpage

\section{Introduction}


Many empirical economic analyses involve the process of estimating parameters $\theta$ from a sample of $n$ observations and assessing sampling uncertainty. The typical routine is to first produce a consistent M-estimator $\hat\theta_n$ of the true $\theta^\dagger$ by minimizing the sample objective $Q_n$. Then, standard errors and confidence intervals are computed to perform inferences. Because the sandwich estimator is harder to implement for more complicated models, it is very common to use a bootstrap at that stage. For more computationally intensive nonlinear estimations, it is common to report standard errors based on few bootstrap draws, e.g. 20, 25, or 50.\footnote{71 of all papers published in the American Economic Review between 2012 and 2016 relied on bootstrap inference, 49 of them report bootstrap standard errors. More than one in five of those 49 papers use $100$ bootstrap replications or fewer to compute standard errors. Survey conducted by the author, details available upon request.} The main bottleneck is that each replication requires another estimation on resampled data, which is nontrivial for these complex models. There is a long-standing interest in finding shortcuts to relieve this computational burden without sacrificing too much accuracy. Examples include \citet{Davidson1999}, \citet{Andrews2002}, \citet{Kline2012}, and more recently \citet{Honore2017}. These methods compute standard errors taking a converged estimate $\hat\theta_n$ as given. As such, estimation always precedes inference. 

This paper proposes to combine resampling with optimization to produce estimates, confidence intervals, and standard errors in a single step. This is very close in spirit to Markov-chain Monte Carlo (MCMC) methods that produce draws from a Bayesian posterior to be used for both estimation and inference. Unlike MCMC, the information matrix equality is not required for valid inference. The main loop of the Algorithm is a stochastic optimization routine which evaluates the objective on batches of resampled data with size $m\leq n$. Starting from an initial guess $\theta_0$, one updates $\theta_b$ to $\theta_{b+1}$, $b\geq 0$, using the resampled gradient and inverse Hessian (or an approximation) as a conditioning matrix, with a fixed learning rate $\gamma$. The Algorithm can also be used with reweighted samples of data. For a suitable choice of $\gamma$ and a strictly convex objective $Q_n$, it is shown that the average over $B$ draws $\theta_b$ is equal to $\hat\theta_n$ up to order $1/\min(m,B)$. Even with $m$ and $B \ll n$, the estimates are consistent and asymptotically normal as long as $m$ and $B \gg \sqrt{n}$. The results also show that the distribution, conditional on the sample, of $\sqrt{m}(\theta_b-\hat\theta_n)$ is first-order equivalent to that of $\sqrt{n}(\hat\theta_n-\theta^\dagger)$, up to scale, and standard errors are consistent. Inference results are valid for a resampled Newton-Raphson (\rnr) and a resampled quasi-Newton (\rqn) algorithm based on a new quasi-Newton update. For other conditioning matrices, the draws can produce consistent estimates but they need not be valid for inference.

The main theoretical insight of the paper is that \rnr\, and \rqn\, draws are well approximated by a simple AR(1) process when the resampling size $m$ is sufficiently large. The process is centered around the full sample estimate $\hat\theta_n$, and its innovations have variance proportional to the sandwich formula. Hence, after burning-in the first couple of iterations, the distribution of the draws can be used as a bootstrap distribution -- up to a simple adjustment which accounts for the Markov-chain properties of the algorithms. The main results assume the sample and resampled objective functions are smooth and strictly convex. Additional results and illustrations are given in the Supplement for some non-strictly convex settings. In particular, the algorithms can perform well in situations where their non-resampled counterparts are typically challenged and return inconsistent estimates.  This is another advantage of combining optimization with resampling. 


After introducing the algorithms, a small empirical example illustrates the properties of the methods. Then one simulated and two empirical examples illustrate different ways in which they can be applied in empirical work. First, the Monte-Carlo example considers the estimation of a dynamic discrete choice model with heterogeneous agents. Estimates are significantly biased because of non-linearity, resulting in poor inferences due to size distortions. A bootstrap bias correction would be computationally intensive. \rqn, combined with a simple split-panel bias-correction technique, produces accurate estimates and valid inferences in reasonable time. Next, using data from \citet{helpman2008},  \rnr\, and \rqn\, are benchmarked against MCMC and stochastic gradient-descent (\sgd) on a large-scale probit. \rnr\, and \rqn\, converge faster and have better mixing so that fewer draws are needed. Allowing for dependence to compute cluster robust standard errors is straightforward, which is not the case with MCMC. Finally, a more complex model is considered. A replication of \citet{donaldson2018} shows how a computationally intensive grid search used for both estimation and bootstrapping on a 100-core cluster environment can be replaced with \rnr, running in less than six hours on a desktop computer. The time saved allows to explore other model specifications. The results also highlight identification issues which are easily diagnosed with \rnr. Additional empirical and simulated examples are given in a companion paper \citep{jjng-aerpp:21}. Finally, while the results in this paper are targeted at a class of smooth and convex M-estimation problems, the challenges of non-smooth and non-convex estimation constitute a serious impediment to empirical research. Another companion paper provides algorithms and theoretical results for finite-sample generalized method of moments estimation without assuming smoothness or convexity \citep{Forneron2022}.

\paragraph{Outline of the paper.}  Section \ref{sec:classical} begins with a review of classical and stochastic optimization. The proposed algorithms are presented in Section \ref{sec:algorithm}, related methods are discussed. A simple analytical and empirical example highlight the theoretical aspects of the paper. Section \ref{sec:rnr1} provides the theoretical results.  Section \ref{sec:examples} covers the simulated and empirical examples. Appendices \ref{apx:inter}-\ref{apx:add_ex} provide proofs, sample R codes, and primitive conditions for some of the assumptions. Issues related to non-convexity are covered in Appendix \ref{sec:discussion}. 

\section{Setting and Overview of Gradient-Based Optimization} \label{sec:classical}
Consider minimizing a twice-differentiable and convex sample objective function $Q_n(\theta)$ with respect to parameters $\theta \in \Theta \subset \mathbb{R}^{d_\theta}$:
\[ \hat\theta_n = \text{argmin}_{\theta \in \Theta} Q_n(\theta),\quad Q_n(\theta) = \frac{1}{n} \sum_{i=1}^n q(z_i;\theta), \]
using a sample of $n$ iid observations $z_i = (y_i,x_i)$. Examples include non-linear least-squares (NLS) for which $q(z_i;\theta) = [y_i - f(x_i;\theta)]^2$ where $f$ is the regression function, known up to the parameters of interest $\theta$. For (pseudo) maximum likelihood estimation (MLE), $q(z_i;\theta) = -\ell(y_i;\theta|x_i)$ is the negative of the log-likelihood. The true value, denoted by $\theta^\dagger$, minimizes the limiting objective $Q = \text{plim}_{n\to \infty} Q_n$. 
Under the conditions in Theorem 2.1 of \citet{newey-mcfadden-handbook},  $\hat\theta_n$ is consistent for $\theta^\dagger$. If, in addition, the assumptions in Theorem 3.1 of \citet{newey-mcfadden-handbook} hold, then $\hat\theta_n$ is also asymptotically normal:
 \[\sqrt{n}V_n^{-1/2}(\hat\theta_n-\theta^\dagger)\overset{d}{\to}  N(0,I_d), \]
where $V_n =[H_n(\hat\theta_n)]^{-1} \Sigma_n [H_n(\hat\theta_n)]^{-1}$ is a consistent estimator of the sandwich variance, $[H_n(\hat\theta_n)]^{-1} \overset{p}{\to} [H(\theta^\dagger)]^{-1}$ approximates the bread, and $\Sigma_n \overset{p}{\to} \lim_{n\to\infty}\text{var}[\sqrt{n}G_n(\theta^\dagger)]$ approximates the meat. Practitionners are often interested in conducting inference for a function of the parameters $h(\theta) \in \mathbb{R}$. Examples include inference on a single coefficient or a counterfactual based on the structural estimates $\hat\theta_n$. When $h$ is continuously differentiable, the delta-method gives the standard error formula $\text{se}[h(\hat\theta_n)] = \sqrt{ \nabla h(\hat\theta_n) V_n \nabla h(\hat\theta_n)^\prime /n}$ and the $95\%$ level confidence interval $h(\hat\theta_n) \pm 1.96 \times \text{se}[h(\hat\theta_n)]$. It is also common to compute confidence intervals from bootstrap draws using quantiles or the standard deviation of $h(\theta_n^{(b)})$, where $\hat\theta_n^{(b)}$ minimizes the resampled objectives $Q_n^{(b)}$. 

For most problems, the estimator $\hat\theta_n$ does not have closed-form and is usually obtained by numerical optimization. The methods below rely on the sample gradient and Hessian of $Q_n$, defined respectively by:
\begin{eqnarray*} 
G_n(\theta)=\nabla Q_n(\theta) = \frac{1}{n} \sum_{i=1}^n \nabla q(z_i;\theta),\quad 
H_n(\theta)=\nabla^2 Q_n(\theta) = \frac{1}{n} \sum_{i=1}^n \nabla^2 q(z_i;\theta).
\end{eqnarray*}  

\subsection{Classical Optimizers} \label{sec:classic}
When $Q_n$ is quadratic, as in OLS or IV regressions, the solution has closed form and $\hat\theta_n = \theta_0-[H_n(\theta_0)]^{-1} G_n(\theta_0)$, for any initial guess $\theta_0$. For general non-quadratic objectives, the solution does not have closed-form. A typical strategy is to consider a sequence of approximations:
\begin{align}
      Q_n(\theta) \simeq Q_n(\theta_k) + G_n(\theta_k)(\theta-\theta_k) + \frac{1}{2\gamma_k} (\theta-\theta_k)^\prime P_k^{-1} (\theta-\theta_k),\quad k=0,1,2,\dots \label{eq:approx_quad}
\end{align}
where $P_k$ is a symmetric positive definite conditioning matrix, $\gamma_k \in (0,1]$ is called the learning rate, it penalizes for the quality of the approximation (\ref{eq:approx_quad}). For $\gamma_k=1$ and $P_k^{-1} = H_n(\theta_k)$, the approximation is exact for $Q_n$ quadratic. Otherwise, it coincides with a second-order Taylor expansion around $\theta=\theta_k$, which is not exact. From a guess $\theta_k$, minimizing (\ref{eq:approx_quad}) leads to:
\begin{eqnarray}
   \theta_{k+1} &\equiv &\theta_k-\gamma_k  P_k G_n(\theta_k),\quad k=0,1,2,\dots \label{eq:gradient-rule}
 \end{eqnarray}
 Different choices of conditioning matrix $P_k$ lead to different algorithms. Newton-Raphson (\nr) iterations use $P_k = [H_n(\theta_k)]^{-1}$ and $\gamma_k=1$; a damped update uses $\gamma_k<1$. For strongly convex problems, \nr\, enjoys a fast quadratic convergence property, but it comes at the cost of computing both the gradient and the Hessian at each iteration $k$.\footnote{In statistical computing, the convergence of $\theta_k$ to $\hat\theta_n$ is said to be linear if $\|\theta_{k+1}-\hat\theta_n\|/\|\theta_k-\hat\theta_n\|^q<r$ for some $r\in (0,1)$ if $q=1$ and quadratic if $q=2$. Convergence is superlinear if $\lim_{k\rightarrow\infty} \|\theta_{k+1}-\hat\theta_n\|/ \|\theta_k-\hat\theta_n\|=0$. See \citet{boyd-vandenberghe:04} Section 9.3.1 for linear convergence of gradient methods, and \citet[Theorem 3.5]{nocedal-wright:06} for quadratic convergence of Newton's method when $\gamma=1$ or $\gamma_k \to 1$ at an appropriate rate.   `Damped Newton' updating with  $\gamma_k \in (0,1)$ has a linear convergence rate, see \citet{boyd-vandenberghe:04} Section 9.5.3 and \citet{nesterov2018} Section 1.2.4.}  
To avoid the cost of computing the $(d_\theta+1)d_\theta/2$ second-order derivatives in $H_n$. Other updates rely on different choices of $P_k$. These include quasi-Newton (q\textsc{n}) and gradient-descent (\gd). 

The most popular quasi-Newton update is \textsc{bfgs} for which $P_k$ approximates $[H_n(\theta_k)]^{-1}$ iteratively using only gradients. Although not as fast as \nr, convergence for quasi-Newton with an appropriate choice of $\gamma_k$ can be superlinear.\footnote{See \citet{dennis1977} and \citet[Ch6]{nocedal-wright:06} for an overview of quasi-Newton updates and their properties.}  Quasi-Newton methods bypass Hessian computations, but analytical convergence results are more difficult to obtain.  
The gradient-descent algorithm (\textsc{gd}) sets $P_k=I_d$, it only requires computing a single gradient $G_n(\theta_k)$.  Convergence, however, requires choosing $\gamma_k$ small enough that all eigenvalues of $I_d-\gamma_k H_n(\tilde \theta_k)$ lie in the restricted region $(-1,1)$, for some intermediate value $\tilde\theta_k$. This results in very slow convergence when the ratio of the largest to smallest eigenvalue of $H_n$ is large.

The necessary conditions for a local minimum are $\|G_n(\hat\theta_n)\|=0$ and $H_n(\hat\theta_n)$ positive semi-definite. The sufficient conditions are $\|G_n(\hat\theta_n)\|=0$ and $H_n(\hat\theta_n)$ positive definite.
Note that for non-convex objectives, any $\theta_k$ with $G_n(\theta_k)=0$ is a fixed-point of the update rule (\ref{eq:gradient-rule}). This includes local and global minima and maxima as well as locally suboptimal solutions, called saddle points, for which $H_n$ has both positive and negative eigenvalues. 

\subsection{Stochastic Optimizers} \label{sec:stochastic}
In their seminal paper on stochastic approximation, \citet{Robbins1951} consider the situation where only noisy gradients are available $G_n(\theta_{k})+e_k$ with $e_k$ iid, mean-zero. Using the notation above, they propose the gradient-descent update :
$ \theta_{k+1}=\theta_{k}- \gamma_{k} [G(\theta_{k})+e_k]$, and show under regularity conditions that $\theta_k \overset{a.s.}{\to} \hat\theta_n$ when $\gamma_k > 0$ satisfy
\begin{equation}
\label{eq:monro-robbins-conditions}
 \text{(i)}\quad \sum_{k=1}^\infty \gamma_k= +\infty, \quad\quad   \text{(ii)} \quad\sum_{k=1}^\infty \gamma_k^2<+\infty.
\end{equation}
The first condition ensures that all possible solutions will be reached with high probability regardless of the $\theta_0$ while the second ensures convergence to the true value. \citet{Kiefer1952Stochastic} extend the results to non-linear least squares with gradients computed by finite-differences. Feasible choices of learning rate schedules include $\gamma_k = \gamma_0 k^{-\delta}$, with $\delta \in (1/2,1]$ and $\gamma_0>0$ as choice parameters. Depending on $\delta$, convergence as measured by $\mathbb{E}(\|\theta_k-\hat\theta_n\|^2)$ can occur at a $1/k$ rate or slower. \citet{ruppert1988efficient} and \citet{polyak1992acceleration} show that the averaged values $\overline{\theta}_k = \frac{1}{k}\sum_{i=1}^k \theta_i$, converge at the fastest $1/k$ rate for all choices of $\delta \in (1/2,1]$. This is commonly refered to as Polyak-Ruppert averaging. Building on these results, modern stochastic gradient-descent updates are given by
\begin{eqnarray*}
       \theta_{k+1}&=&\theta_{k}- \gamma_k G_m(\theta_{k})
\end{eqnarray*}
where $G_m(\theta_k)=\frac{1}{m}\sum_{i=1}^m \nabla q(z_i^{(k)},\theta_k)$ is an unbiased estimate of $G_n(\theta_k)$ based on $m<n$ randomly chosen observations $z_1^{(k)},\dots,z_m^{(k)}$ drawn at each $k$. Setting $e_k = G_m(\theta_k) - G_n(\theta_k)$ results in the same setting as stochastic approximation above. Though $m=1$ is computationally inexpensive and is a popular choice, the stochastic gradients $G_m(\theta_k)$ are very noisy and a small $\gamma_k$ satisfying (\ref{eq:monro-robbins-conditions}) is needed to compensate. This results in slow convergence making it well suited for problems where the gradients $G_m$ are very cheap to computed and running very many iterations is feasible. Unfortunately, this is typically not the case for more complicated economic models. A number of methods can accelerate convergence,\footnote{Improvements to \sgd\, include  momentum \citep{polyak:64} and accelerated gradient-descent \citep{nesterov:83}.  As for \gd, using $P_k=I_d$ in \sgd\, can lead to very slow convergence. Popular extensions of stochastic gradient descent include \textsc{adagrad} \citep{duchi-acm:11}, and \textsc{adam} \citep{kingma2014}.} but they tend to converge slower than \nr, or \textsc{bfgs}. More closely related to this paper, some study stochastic \textsc{bfgs}, but compute the full sample $G_n$ many times which is more demanding.\footnote{See e.g. Algorithm 1, step 4, in \citet{pmlr-v51-moritz16}. Note that stochastic Newton-Raphson (\snr), with a small $m =1$, is not used in practice because $H_m^{(k)}$ is often not invertible when $m<d_\theta$.} 

Beyond reducing the computional cost, stochastic optimization can improve on classical methods in non-convex settings. For instance, suboptimal solutions aren't fixed points for \sgd\, \citep{ge2015escaping,jin2017escape}. Though \sgd\, requires $\gamma_k \to 0$, it is common to use $\gamma_k = \gamma$ fixed in practice. \citet{dieuleveut-aos:20} consider $Q_n$ quadratic with $\gamma$ and $m$ fixed and find that it leads to estimation bias which needs to be corrected. Few results are available for inference, notably \citet{chen-aos:20} consider the computation of standard errors for \sgd. This paper considers $\gamma$ fixed with $Q_n$ non-quadratic but requires $m \to \infty$. For inference, $P_k$ is either a Hessian or the new quasi-Newton update.  The connection with bootstrap inference and the AR(1) representation are new.

\section{Algorithms and Intuition} \label{sec:algorithm}
Because the iterations are used both for estimation and bootstrap inference, they will be indexed by $b=0,1,2,\dots$, rather than $k=0,1,2,\dots$ used for estimation only.
The main algorithm below relies on an resampled objective $Q_m^{(b)}$, and its derivatives $G_m^{(b)},H_m^{(b)}$, computed using either $m \leq n$ resampled observations, or using all $m=n$ observations but with random reweighting. Unless otherwise stated, $Q_m^{(b)}$ will refer to either a resampled or reweighted objective. Using these quantities, Algorithm \ref{algo:REI} describes a way to perform both estimation and inference for a function $h$ of $\theta$.

\begin{algorithm}[h] 
\caption{Resampled Estimation and Inference} \label{algo:REI} 
      \begin{algorithmic}
        \State 1) \textbf{Inputs} (a) an initial guess $\theta_{0}$, (b) a bootstrap sample size $B$ and a burn-in period \textsc{burn}, (c) a batch size $m\leq n$ such that $m/n\rightarrow c \in [0,1]$, and $\sqrt{n}/m\rightarrow 0$,   (d) a fixed learning rate $\gamma \in (0,1]$, (e) a resampling or re-weighting scheme. 
      \State 2) \textbf{Burn-in and Resample} 
      \For{$b=0,\dots,\textsc{burn}+B-1$} 
            \State Resample, or re-weight, the data, \Comment{see (\ref{eq:resample}), (\ref{eq:reweight})}
            \State Update the conditioning matrix $P_b$, \Comment{see Algorithms \ref{algo:rnr} and \ref{algo:rqn}}
            \State Compute $\theta_{b+1} = \thetab - \gamma P_b G_m^{(b+1)}(\thetab).$
      \EndFor
      \State 3) \textbf{Discard} the first \textsc{burn} draws, re-index $\theta_{\text{burn}+b}$ to $\thetab$ for $b=1,\ldots B$.
      \State 4) \textbf{Estimation}
      \begin{itemize} \setlength\itemsep{0em}
            \item Estimates: $\overline{\theta}_{\textsc{re}}=\frac{1}{B} \sum_{b = 1}^B \thetab$
      \end{itemize}
      \State 5) \textbf{Inference} (\rnr\, and \rqn) 
      \begin{itemize} \setlength\itemsep{0em}
            \item Standard errors: $\textsc{se}[h(\hat\theta_n)] = \sqrt{ \frac{m}{n\phi(\gamma)}\frac{1}{B} \sum_{b=1}^B [h(\theta_b)-h(\overline{\theta}_{\textsc{re}})]^2 }$,
            \item Confidence interval ($1-\alpha$ level): $\Big[c_{h,b}(\alpha/2),c_{h,b}(1-\alpha/2)\Big]$,
      \end{itemize}     
      where $\phi(\gamma)=\frac{\gamma^2}{1-(1-\gamma)^2}$, $c_{h,b}(\alpha)$ is the $\alpha$-th quantile of $h(\tilde\theta_b)-\overline{h}_B$, $\alpha \in (0,1)$, and $\tilde\theta_b = \overline{\theta}_{\textsc{re}} +  \sqrt{\frac{m}{n\phi(\gamma)}}(\theta_b-\overline{\theta}_{\textsc{re}})$.
      \end{algorithmic}
\end{algorithm}

Similar to MCMC, after discarding an initial burn-in period, estimates are computed by averaging over draws. Also, standard errors and confidence intervals are computed from the standard deviation and quantiles of the draws but here with an adjustment $\sqrt{\frac{m}{n\phi(\gamma)}}$ which accounts for $m$ potentially smaller than $n$, and $\phi(\gamma)$ adjusts for the Markov-chain properties of $\theta_b$ as explained below. Compared to \sgd, Algorithm \ref{algo:REI} requires $m \gg 1$ but allows for $m/n \to 0$ with the restriction that $\sqrt{n}/m \to 0$. This is the cost of using a fixed learning rate $\gamma$ rather than a decaying sequence $\gamma_k$ satisfying (\ref{eq:monro-robbins-conditions}), used in nearly all of the literature.

The learning rate $\gamma$ controls the optimization behaviour of the Algorithm, a larger value of $\gamma$ is associated with faster convergence. For quadratic $Q_n$, any $\gamma \in (0,1]$ is feasible, otherwise a smaller $\gamma < 1$ is needed, $\gamma \in [0.1,0.3]$ performs well in the Examples. The parameter $\gamma$ also affects the distributional properties of $\theta_b$ which can be more concentrated around $\hat\theta_n$ than conventional bootstrap draws. For instance, with $m=n$ and $\gamma=0.1$ we have $1/\phi(\gamma) = 19$: the variance of $\theta_b$ is $19$ times smaller than the variance of $\hat\theta_n$.\footnote{$\phi$ is decreasing in $\gamma$: the smaller $\gamma$, the more localized the Markov-chain will be. For comparison $\phi(1)=1$, $\phi(0.4)=4$, $\phi(0.2)=9$, and $\phi(0.01)=199$.} A number of different implementations are possible, a resampled Gradient-Descent (\rgd) would use the same $P_b = I_d$ for each iteration $b$. Valid inference, however, requires implementations for which $P_b$ approximates the inverse Hessian sufficiently well. This is the case for the resampled Newton-Raphson and resampled quasi-Newton algorithms. The first is conceptually simple and provides good intuition for the results. For most applications, the second is more computationally efficient and will be preferred.

Depending on the choice of resampling or reweighting scheme, $G_m^{(b)},H_m^{(b)}$ will be computed differently. First, the resampled objective, gradient, and Hessian are:
\begin{align} 
      &Q_m^{(b)}(\theta) = \frac{1}{m} \sum_{i=1}^m q(z_i^{(b)},\theta), \notag\\ &G_m^{(b)}(\theta) = \frac{1}{m} \sum_{i=1}^m \nabla q(z_i^{(b)},\theta),\quad H_m^{(b)}(\theta) = \frac{1}{m} \sum_{i=1}^m \nabla^2 q(z_i^{(b)},\theta), \label{eq:resample}
\end{align}
where $z_1^{(b)},\dots,z_m^{(b)}$ are drawn independently with replacement from the empirical data $(z_1,\dots,z_n)$ at each iteration $b$. Unlike \sgd, the resampling should match the dependence structure of the data. For instance, clustered data should be resampled at the cluster rather than individual level. Second, the re-weighted objective, gradient, and Hessian are:
\begin{align}
      &Q_n^{(b)}(\theta) = \frac{1}{n} \sum_{i=1}^n w_i^{(b)} q(z_i,\theta),\notag\\ &G^{(b)}_n(\theta) = \frac{1}{n} \sum_{i=1}^n w_i^{(b)} \nabla q(z_i,\theta),\quad H^{(b)}_m(\theta) = \frac{1}{n} \sum_{i=1}^n w_i^{(b)} \nabla^2 q(z_i,\theta), \label{eq:reweight}
\end{align}
where $w_i^{(b)}$ are iid with mean and variance equal to $1$. Examples include Gaussian weights $w_i^{(b)} \sim \mathcal{N}(1,1)$, exponential $w_i^{(b)} \sim \exp(1)$, or Poisson $w_i^{(b)} \sim \text{Pois}(1)$. This is known as a multiplier, or exchangeably weighted, bootstrap.\footnote{See \citet[Ch3.6.2]{VanderVaart1996} and \citet[Ch10]{kosorok2007}. The Poisson bootstrap approximates, with computational advantages, the standard $n$ out of $n$ bootstrap \citep{chamandy2012}.} Similar to resampling, the weights should vary only between clusters in clustered data, and more generally match the dependence of the data. Depending on the setting, reweighting can be preferred to resampling. For instance, if $\theta$ includes fixed effects, resampling can produce samples with no observation that is informative about one of the fixed effects.\footnote{This is also an issue for censored models where resampling can result in datasets with no censored observations. This also affects \sgd; with $m=1$ a  gradient is only informative about at one fixed effect at a time. This makes convergence very slow, see Appendix \ref{apx:add_ex2} for an illustration.} 

\subsection{Resampled Newton-Raphson (\rnr)}
The resampled Newton-Raphson update described in Algorithm \ref{algo:rnr} below can be used within Algorithm \ref{algo:REI} to produce consistent estimates and valid inferences.  
\begin{algorithm}[h] 
      \caption{Resampled Newton-Raphson (\rnr)} \label{algo:rnr} 
            \begin{algorithmic}
              \State Compute $P_b = [H_m^{(b+1)}(\theta_b)]^{-1}$
            \end{algorithmic}
\end{algorithm}
The implementation is very simple, as it only requires computing a Hessian. In many settings, however, repeatedly computing this matrix of second-order derivatives can be very demanding so that a quasi-Newton update is often preferred. Nevertheless, the \rnr\, implementation gives a good first intuition about why and how Algorithm \ref{algo:REI} produces estimates and valid inferences. This is illustrated with a simple analytical and a small empirical example below.


\paragraph{A pen and pencil example.} The following illustrates the main optimization and bootstrap properties of \rnr\, using a tractable OLS regression of $y_i$ on $x_i$: $Q_n(\theta) = \frac{1}{2n} \sum_{i=1}^n (y_i - x_i^\prime \theta)^2$. Using $m$ out of $n$ resampling, we have $G_m^{(b+1)}(\theta) = -\frac{1}{m} \sum_{i=1}^m x_i^{(b+1)} (y_i^{(b+1)} - x_i^{(b+1)\prime} \theta)$, and $H_m^{(b)}(\theta) = \frac{1}{m} \sum_{i=1}^m x_i^{(b+1)} x_i^{(b+1)\prime}$. In more compact notation: $G_m^{(b+1)}(\theta) = - \frac{1}{m} X_m^{(b+1)\prime} ( y_m^{(b+1)} - X_m^{(b+1)}\theta)$, $H_m^{(b+1)}(\theta) = \frac{1}{m} X_m^{(b+1)\prime} X_m^{(b+1)}$. For $b \geq 0$, one \rnr\, update takes the form:\footnote{Derivations for \rnr\, and other methods are given in Appendix \ref{apx:OLS}.}
\[ \theta_{b+1} = \theta_b + \gamma \left( X_m^{(b+1)\prime} X_m^{(b+1)} \right)^{-1} X_m^{(b+1)\prime} ( y_m^{(b+1)} - X_m^{(b+1)}\theta_b). \]
Notice that $\left( X_m^{(b+1)\prime} X_m^{(b+1)} \right)^{-1} X_m^{(b+1)\prime} y_m^{(b+1)} = \hat\theta_m^{(b+1)}$ is the bootstrap OLS estimate. Substract $\hat\theta_n$ on both sides and re-arrange terms to find that:
\[ \theta_{b+1} -\hat\theta_n = (1-\gamma)(\theta_b-\hat\theta_n) + \gamma ( \hat\theta_m^{(b+1)}-\hat\theta_n) \]
is an AR(1) process with conditionally iid innovations based on the centered bootstrap estimates: $\hat\theta_m^{(b)}-\hat\theta_n$. Notice that neither $\hat\theta_n$ nor $\hat\theta_m^{(b)}$ are computed explicitly in the \rnr\, update. Using a backwards recursion, we have:
\[ \theta_{b+1} -\hat\theta_n = \underbrace{(1-\gamma)^b(\theta_0-\hat\theta_n) \vphantom{\gamma \sum_{j=0}^{b} (1-\gamma)^j ( \hat\theta_m^{(b-j)}-\hat\theta_n)}}_{\text{Starting value bias}} + \underbrace{\gamma \sum_{j=0}^{b} (1-\gamma)^j ( \hat\theta_m^{(b+1-j)}-\hat\theta_n)}_{\text{Resampling noise}}  \]
Suppose, for simplicity, that under the bootstrap expectation $\mathbb{E}^\star(\hat\theta_m^{(b+1)})=\hat\theta_n$. Then the starting value bias due to $\theta_0 \neq \hat\theta_n$ is 
\[ \mathbb{E}^\star(\theta_{b+1}-\hat\theta_n) = (1-\gamma)^b(\theta_0-\hat\theta_n),\] which converges exponentially fast to zero. Very quickly, the bias becomes negligible and the average draw consistently estimates $\hat\theta_n$. A seamless transition occurs from optimization to resampling. Using the formula for the variance of an AR(1) process, we have 
\[ \text{var}^\star ( \theta_{b+1} ) = \gamma^2\frac{1- (1-\gamma)^{b+1}}{1-(1-\gamma)^2} \text{var}^\star(\hat\theta_m^{(b)}) \simeq \phi(\gamma) \text{var}^\star ( \hat\theta_{m}^{(b)} ),\] which also converges exponentially fast. The factor $\phi(\gamma) = \gamma^2/[1-(1-\gamma)^2]$ is used in Algorithm \ref{algo:REI}. Assuming $\text{var}^\star ( \hat\theta_{m}^{(b)} )$ consistently estimates standard errors, then so does \rnr\, with the suggested adjustment. The AR(1) representation only holds if $P_b$ approximates the inverse Hessian sufficiently well, this is not the case for \gd\, and several quasi-Newton updates.

\paragraph{Illustration with real data.} The main results extend the properties described for OLS above to more general non-quadratic strictly convex objectives. The main insight is that even though the AR(1) representation above does not hold exactly, there exists a sequence $\theta_b^\star$, called coupling, which follows an AR(1) process and is uniformly close to the Markov-chain $\theta_b$ up to an error of order $1/m$. A good approximation requires $m \to \infty$ but still allows for $m/n \to 0$.
\begin{figure}[ht] \caption{Probit Example:  draws for $\theta_{\text{educ}}$} \label{fig:mroz}
      \centering  
      \includegraphics[scale=0.75]{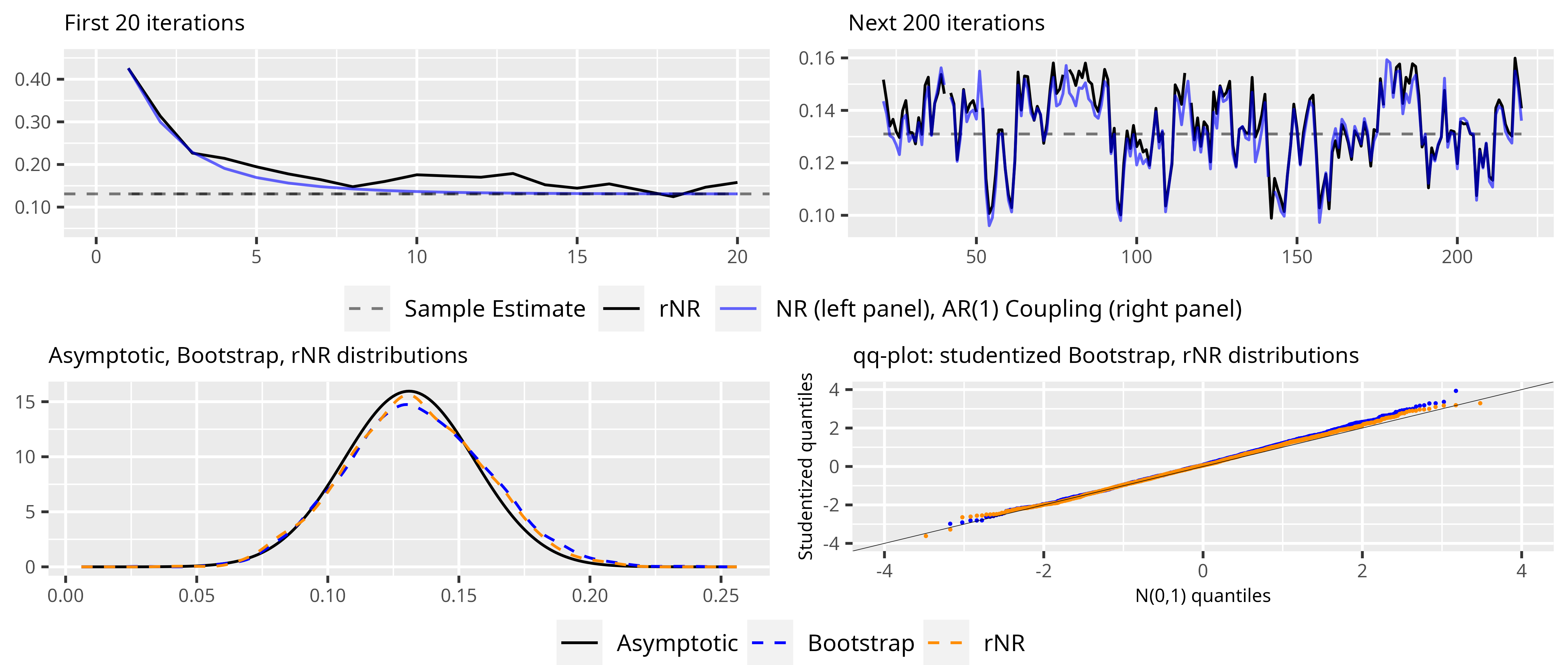}
\end{figure}
To illustrate this feature, Figure \ref{fig:mroz} shows the draws for the coefficient on education for a probit model fitted on data from \citet{mroz:87}. This example uses $\gamma=0.3$ and resamples $m=n/2$ out of $n=753$ observations with replacement. The top left panel shows the first $20$ iterations, which corresponds to the initial convergence phase. The resampled \rnr\, (black) and the deterministic \nr\, (blue) have very similar paths, moving quickly to the true value (dashed black). Then, the top right panel shows the next $200$ iterations -- the resampling phase. Even with $m<n$, the \rnr\, draws (black) are very close to the AR(1) process $\theta_b^\star$ (blue) used to prove the results. The bottom panel illustrates the bootstrap properties of the draws. The left panel compares a $\mathcal{N}(\hat\theta_n,V_n/n)$ with the standard $n$ out of $n$ bootstrap, and \rnr\, draws adjusted for $m$ and $\phi(\gamma)$. The distributions are very close. The right panel further confirms this with a more detailed q-q plot view. Replication code for this example is provided in Appendix \ref{apx:Rcode} with a detailed comparison of \rnr, \rqn, for different choices of $m$, with standard errors computed using the sandwich formula, the standard bootstrap, and methods by \citet{Davidson1999}, \citet{Kline2012}. Additional comparisons with these methods can be found in \citet{jjng-aerpp:21}.

\subsection{Resampled quasi-Newton (\rqn)}

In practice, quasi-Newton methods are often preferred to Newton-Raphson because the latter requires computing the full Hessian matrix many times, which can be costly.\footnote{In some settings, it is possible to compute the Hessian `for free' once the gradient is computed; see Example 3, Section \ref{sec:examples} and Appendix \ref{apx:Ex3}.} The main idea behind the quasi-Newton update in Algorithm \ref{algo:rqn} below is that the $d_\theta(d_\theta+1)/2$ unique elements in the Hessian can be recovered using only $d_\theta$ scalar derivatives. Consider the Hessian-vector product $y_b = H_m^{(b+1)}(\theta_b)s_b$, for a direction $s_b \neq 0$. Notice that $y_b = \lim_{\varepsilon \to 0} [G_m^{(b+1)}(\theta_b + \varepsilon s_b)-G_m^{(b+1)}(\theta_b - \varepsilon s_b)]/2\varepsilon$ is a scalar derivative which can be computed without the full matrix $H_m^{(b+1)}(\theta_b)$. The full Hessian matrix can be recovered from Hessian-vector products computed in $d_\theta$ linearly independent directions $s_b,s_{b-1},\dots$ by running a simple linear regression of $y_b$ on $s_b$. Using an overdetermined system with $L > d_\theta$ directions ensures the linear regression is well conditionned. In practice, the different $y_b$ are computed at different values of $\theta_b$ so that the Hessian recovered by least-squares need not be symmetric nor positive definite. A simple transformation is used to enforce these two features, while retaining its approximation properties. The resulting quasi-Newton update is new. The while loop ensures the regression is well-conditionned. It never ran in the examples of Section \ref{sec:examples} but should included as a safeguard. A more detailed discussion of difference with existing updates and the choice of tuning parameters is deferred to Section \ref{sec:corro}.

\begin{algorithm}[H] 
      \caption{Resampled quasi-Newton (\rqn)} \label{algo:rqn} 
            \begin{algorithmic}
            \State 1) \textbf{Inputs} (a) number of secants $L \geq d_\theta$, (b) cutoffs $\lambda_S>0$, $\underline{\lambda}>0$
            \State 2) \textbf{Least-Squares Approximation}
              \If{$b=0$}  \Comment{Initialization}
                  \State Set an initial guess $\hat H_0$, e.g. $H_m^{(0)}(\theta_0)$
                  \State Draw $s_j$, normalize $s_j = s_j/\|s_j\|_2$, $j = 0,\dots,-L+1$
                  \State Compute $y_j = \hat H_0 s_j$, $j=0\dots,-L+1$
              \Else     
                  \State Compute $s_b = \theta_b-\theta_{b-1}$, normalize $s_b = s_b/\|s_b\|_2$  \Comment{Update direction}
                  \State Compute $y_b = H_m^{(b+1)}(\theta_b)s_b$  \Comment{Hessian-vector product}
              \EndIf
              \State Combine $S_b = (s_b,\dots,s_{b-L+1})^\prime$, $Y_b = (y_b,\dots,y_{b-L+1})^\prime$
              \While{ $\lambda_{\min}(S_b^\prime S_b) < \underline{\lambda}_S$ } 
                  \State Discard $s_{b-L+1}$, re-index $s_j$ to $s_{j-1}$, $j = b,\dots,b-L+2$
                  \State Draw $s_b$, normalize $s_b = s_b/\|s_b\|_2$
                  \State Compute $y_b = H_m^{(b+1)}(\theta_b)s_b$
                  \State Combine $S_b = (s_b,\dots,s_{b-L+1})^\prime$, $Y_b = (y_b,\dots,y_{b-L+1})^\prime$
            \EndWhile
            \State Compute $\hat H_b = Y_b^\prime S_b (S_b^\prime S_b)^{-1}$ \Comment{Least-Squares Approximation}
            \State 3) \textbf{Conditioning Matrix} 
            \State Compute $P_b = ( \hat H_b^\prime \hat H_b + \tau_b I_d )^{-1/2}$, where $\tau_b = \underline{\lambda}^2$ if $\lambda_{\min}(\hat H_b^\prime \hat H_b) \leq \underline{\lambda}^2$, $\tau_b = 0$ otherwise
            \end{algorithmic}
\end{algorithm}

\subsection{Comparison with other methods} 
To generate $B$ draws, the standard bootstrap with resampling or reweighting requires solving the minimization problem $\hat\theta_m^{(b)} = \text{argmin}_{\theta \in \Theta} Q_m^{(b)}(\theta)$ $B$ times which can be computationally intensive. Faster alternatives have been proposed in the literature. \citet[\dmk]{Davidson1999} consider a $k$-step \nr\, update: $\hat\theta_{n,k+1}^{(b)} = \hat\theta_{n,k}^{(b)} - [H_n^{(b)}(\hat\theta_{n,k}^{(b)})]^{-1}G_n^{(b)}(\hat\theta_{n,k}^{(b)})$ starting at the same full sample estimate $\hat\theta_{n,0}^{(b)} = \hat\theta_n$ for all $b$. \citet{Andrews2002} shows that $k=1$ is enough for asymptotically valid inference, and $k>1$ provides asymptotic refinements. \dmk\, also propose a $k$-step quasi-Newton update which requires $k>1$ steps.\footnote{Following earlier work by \citet{robinson1988}, the results are based on quadratic and super-linear convergence properties of \nr\, and \textsc{bfgs}, respectively, when $\gamma=1$ which holds only if $\|\hat\theta_{n}^{(b)}-\hat\theta_n\|$ is sufficiently small and if (and only if) the quasi-Newton matrix $P_k$ satisfies certain properties \citep[Th3.1]{dennis1977}.} In comparison, using the same $m=n$, a $k=1$-step implementation of \dmk\, has the same cost as \rnr, but is higher than \rnr\, for $k>1$ steps. The $k$-step quasi-Newton update requires $k>1$ making it $k$ times more costly than \rqn\, for $m=n$. Using $m<n$ further reduces the cost of both estimation and inference for \rnr , \rqn. The original idea of \dmk\, has been extended to fast subsampling \citep{hong2006}, two-step estimation \citep{Armstrong2014}, and $\ell_1$-penalized estimation \citep{li2021}.

Building on \citet{hu1995} and \citet{hu2000}, \citet[\ks]{Kline2012} propose a wild score bootstrap which can be written as a one-step \nr\, update: $\hat\theta_{n}^{(b)} = \hat\theta_{n} - [H_n(\hat\theta_{n})]^{-1}G_n^{(b)}(\hat\theta_{n})$ where $G_n^{(b)}$ is reweighted using de-meaned weights. If both the estimation and using $m=n$ are not too demanding, this approach is computationally attractive. \citet[Sec6.2]{fastwild:19} point out some caveats with this approach. By only evaluating $G_m^{(b)}$ at $\hat\theta_n$, it does not capture sampling variation in $H_n(\hat\theta_{n})$ or non-linearities in the objective. The latter also applies to \dmk\, with $k=1$. By design, \rnr\, and \rqn\, reflect variation in $H_m^{(b)},G_m^{(b)}$ around $\hat\theta_n$. Notice that if the optimizer returns a suboptimal solution $\tilde\theta_n$, then \dmk\, with $k>1$ may not be centered around $\tilde\theta_n$ which is a red flag. By de-meaning the weights, \ks\, produces draws centered around $\tilde\theta_n$ so that the issue goes unnoticed.\footnote{Appendix \ref{apx:add_ex}, Table \ref{tab:DDC_het_50}, illustrates failed optimizations for a dynamic discrete choice model, in comparison \rqn\, produces accurate estimates. It would be problematic to apply \dmk\, or \ks\, to the inconsistent estimates.}  \citet{Honore2017} propose to approximate the sandwich $V_n$ using $ d_\theta(d_\theta+1)/2 \times B$ scalar minimizations of $Q_n^{(b)}$. When $d_\theta$ increases, this quickly becomes taxing.\footnote{In Section \ref{sec:examples}, Example 2, $55945 \times B$ scalar optimizations would be needed which is quite large.} Using a preliminary estimate $\hat\theta_n$ to resample is the defining characteristic of these methods compared to \rnr\, and \rqn\, which never compute $\hat\theta_n$ explicitly.


For likelihood problems, it is common to use Bayesian inference and deploy the MCMC toolkit. Unlike frequentist estimation, Bayesian analyses rely on sampling from a posterior distribution rather than optimization. Well known samplers include Gibbs and Metropolis-Hastings. Because they can be slow to converge, gradient-based samplers are increasingly popular with Metropolis-adjusted Langevin dynamics (MALA), Hamiltonian Monte-Carlo (HMC), or stochastic gradient Langevin dynamics (SGLD) algorithms. A comparison between \rnr, \rqn, and MALA is given in Section \ref{sec:examples}. Unlike Bayesian inference, this paper does not require the information matrix equality for valid inference. Convergence diagnostics used for MCMC algorithms such as trace plots or formal tests \citep[][Ch11.4]{gelman1992,Gelman1998,gelman2013} can also be used to monitor convergence of \rnr\, and \rqn. Lastly, while the detailed balance condition guarantees the posterior is a stationary solution of a Metropolis-Hastings algorithm, convergence rate results typically assume a (nearly) concave log-posterior -- similar to the convexity assumption in this paper.\footnote{Recall that the detailed balance condition ensures that the random walk Metropolis-Hastings algorithm is ergodic but deriving the rate of convergence is more difficult. See e.g. \citet{mengersen1996}, \citet{brooks1998}, \citet{belloni2009}.} \citet{jjng-16} show how optimization can be used to sample from a Bayesian posterior with an intractable likelihood. Here, optimization is used for resampling.

\section{Statistical Properties of \rnr\, and \rqn} \label{sec:rnr1}
This section has three parts. The first subsection presents the main assumptions on $Q_n$, $Q_m^{(b)}$, $P_b$, and derives convergence results for classical optimizers and Algorithm \ref{algo:REI} as well as the pivotal coupling result with the AR(1) process $\theta_b^\star$. These results are non-asymptotic. The second subsection focuses on large sample estimation and inference. The theorems can be applied to any implementation of Algorithm \ref{algo:REI} for which $P_b$ has the required properties. The third subsection specializes them to \rnr\, and \rqn\, described in Algorithms \ref{algo:rnr} and \ref{algo:rqn}. 


\subsection{Convergence of Classical and Resampled Optimizers}
The econometric theory for extremum estimation in e.g. \citet{newey-mcfadden-handbook} takes as a given that an optimizer finds the unique solution.  But the econometric assumptions are not enough to ensure a numerical optimizer will provide a globally convergent solution to the sample problem  $Q_n(\theta)$. The following is standard in gradient-based optimization theory: \vspace{-0.5cm}
\begin{assumption} \label{ass:A1} There exist constants $\underline{\lambda}_H,\overline{\lambda}_H,C_1$ such that for all $\theta \in \Theta$: \vspace{-\topsep}
\begin{itemize} \setlength\itemsep{0em}
      \item[i.] $0 < \underline{\lambda}_H \leq \lambda_{\min}(H_n(\theta)) \leq \lambda_{\max}(H_n(\theta)) \leq \overline{\lambda}_H < \infty$ with probability $1$,
      \item[ii.]  $\|H_n(\theta)-H_n(\hat\theta_n)\|_2 \leq C_1\|\theta-\hat\theta_n\|_2$ with probability $1$.
\end{itemize}  
The matrix $P_k$ is symmetric, such that for some $\underline{\lambda}_P,\overline{\lambda}_P$ and all $k \geq 1$: \vspace{-\topsep}
      \begin{itemize} \setlength\itemsep{0em}
            \item[iii.] $0 < \underline{\lambda}_P \leq \lambda_{\min}(P_k) \leq \lambda_{\max}(P_k) \leq \overline{\lambda}_P < \infty$.
      \end{itemize}
\end{assumption}
\noindent  Condition i. implies that $Q_n$ is strongly convex with probability $1$ while condition ii. requires Lipschitz continuity of the Hessian.\footnote{A function $Q_n$ is strongly convex on $\Theta$ if for all $\theta\in\Theta$,  there exists some $\underline{\lambda}>0$ such that $\nabla^2 Q_n(\theta) \ge \underline{\lambda} I_d$. For bounded $\Theta$, there also exists $\overline{\lambda}$ such that  $\nabla^2 Q_n(\theta) \le \overline{\lambda} I_d$. Then $\overline{\lambda}/\underline{\lambda}$ is an upper bound on the condition number of $\nabla^2 Q_n(\theta)$.} Strong convexity of the sample objective function can be less restrictive than it appears as some non-convex problems can be convexified by penalization.\footnote{In some cases, the strict convexity assumption restricts the parameter space to be bounded in the absence of penalization. See e.g. the discussion of Assumption (H3) in \citet{Bach2011}.}  Assumption \ref{ass:A1} implies the so-called  Polyak-\L{}ojasiewicz inequalities:
\begin{align} 
\label{eq:cvx1} \langle \theta - \hat \theta_n, G_n(\theta) \rangle &= (\theta - \hat \theta_n)^\prime H_n(\tilde \theta_n) (\theta - \hat \theta_n) \geq \underline{\lambda}_{H}  \|\theta-\hat\theta_n\|_2^2, \\
\label{eq:cvx2}
      \| G_n(\theta) \|_2^2 &=  (\theta - \hat \theta_n)^\prime H_n(\tilde \theta_n)^2 (\theta - \hat \theta_n) \leq \overline{\lambda}_H^2  \|\theta-\hat\theta_n\|^2_2,
\end{align}
where $\tilde{\theta}_n$ is an intermediate value, between $\theta$ and $\hat{\theta}_n$. 
The first inequality follows from the positive definiteness of $H_n(\hat\theta_n)$.  To see the usefulness of Assumption \ref{ass:A1}, note that for any symmetric positive definite conditioning matrix $P_k$:
\begin{align*} 
      \|\theta_{k+1}-\hat\theta_n\|_2^2 &= \|\theta_{k}-\hat\theta_n - \gamma \pk G_n(\theta_k)\|_2^2 \nonumber\\
      &= \|\theta_{k}-\hat\theta_n\|_2^2 - 2 \gamma  \langle \theta_{k}-\hat\theta_n, \pk G_n(\theta_k) \rangle + \gamma^2 \| \pk G_n(\theta_k) \|_2^2 \nonumber\\
      &\le \underbrace{\left( 1 - 2\gamma \underline{\lambda}_P \underline{\lambda}_H  +\gamma^2 [\overline{\lambda}_P  \overline{\lambda}_H]^2 \right)}_{=A(\gamma)}
      \|\theta_{k}-\hat\theta_n\|_2^2,
\end{align*}
where the last inequality follows from Assumption \ref{ass:A1} and $0<\underline{\lambda}_P \leq \overline{\lambda}_P < \infty$ bound the eigenvalues of $P_k$ (see Assumption \ref{ass:A1} iii). 
Now a contraction occurs if the choice of $\gamma$ is such that $A(\gamma)\in [0,1)$. To see that this is feasible, note that $A(0)=1$ and $\partial _\gamma A(0)<0$. Thus by continuity and local monotonicity of $A(\cdot)$, there exists a $\gamma^\star \in(0,1]$  such that $A(\gamma)\in[0,1)$ for all $\gamma\in(0,\gamma^\star]$ as desired. For such choice of $\gamma$, define $\overline{\gamma}(\underline \lambda_P,\underline \lambda_H,\bar\lambda_P,\bar\lambda_H)  \in (0,1]$  independent of $k$ be such that:
$A(\gamma) = (1-\overline{\gamma})^2  \in [0,1)$. It follows that:
\begin{align*}
      \|\theta_{k+1}-\hat\theta_n\|_2 &\leq \sqrt{A(\gamma)}\|\theta_{k}-\hat\theta_n\|_2\\& \leq (1-\overline{\gamma})   \|\theta_{k}-\hat\theta_n\|_2 \nonumber \\
      &\leq (1-\overline{\gamma})^{k}   \|\theta_{0} - \hat\theta_n\|_2\rightarrow 0,\quad \text{as } k\rightarrow \infty.\qed
\end{align*}
Global convergence follows from iterating on the contraction at each $k$. The derivations above rely on the positive definiteness of both $H_n$ and $P_k$. A non-definite $P_k$ can be problematic for convergence. While Assumption \ref{ass:A1} i-ii require a smooth and strictly convex objective function, Assumption \ref{ass:A1} iii restricts the choice of conditioning matrix.

The finite upper bound in Assumption \ref{ass:A1} iii is particularly important to ensure the optimizer is well behaved. It is automatically satisfied for \gd\, since $P_b = I_d$ but requires the lower bound in Assumption \ref{ass:A1} i for \nr. For quasi-Newton, it needs to be enforced as in Algorithm \ref{algo:rqn}.  Note that for \nr, with an adjustement similar to the $\tau_b$ in Algorithm \ref{algo:rqn}, it possible to get convergence even when $H_n$ is singular away from $\hat\theta_n$.\footnote{This is illustrated in Appendix \ref{sec:bad}. Note that modifications to the resampled hessian is also present in \citet{gonccalveswhite2005} when deriving consistency of bootstrap standard errors in OLS regressions.} 

\begin{lemma}
\label{lem:cv_non_stochastic}
Under Assumption \ref{ass:A1} i-iii there exists $\gamma \in (0,1]$ such that $A(\gamma)\in [0,1)$. Let $\overline{\gamma}$ be such that $A(\gamma) = (1-\overline{\gamma})^2$. Then
$ \|\theta_k - \hat\theta_n\|_2 \leq (1-\overline{\gamma})^k\|\theta_0-\hat\theta_n\|_2 \to 0, \text{ as } k\to\infty. $
\end{lemma}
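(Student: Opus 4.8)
The statement is, up to the feasibility discussion, exactly the chain of inequalities displayed just before the lemma, so the plan is to assemble that computation cleanly and then verify that a learning rate $\gamma$ with $A(\gamma)\in[0,1)$ really exists. First I would record the per-step contraction. Substituting the update rule (\ref{eq:gradient-rule}) into $\|\theta_{k+1}-\hat\theta_n\|_2^2$ and expanding gives
\[
\|\theta_{k+1}-\hat\theta_n\|_2^2 = \|\theta_{k}-\hat\theta_n\|_2^2 - 2\gamma\,\langle \theta_k-\hat\theta_n,\, P_k G_n(\theta_k)\rangle + \gamma^2\,\|P_k G_n(\theta_k)\|_2^2 .
\]
I would bound the cross term from below using $\lambda_{\min}(P_k)\ge\underline{\lambda}_P$ from Assumption~\ref{ass:A1}~iii together with the first Polyak--\L{}ojasiewicz inequality (\ref{eq:cvx1}), and the quadratic term from above using $\lambda_{\max}(P_k)\le\overline{\lambda}_P$ and (\ref{eq:cvx2}); both (\ref{eq:cvx1}) and (\ref{eq:cvx2}) themselves follow from a mean-value expansion of $G_n$ around $\hat\theta_n$, using $G_n(\hat\theta_n)=0$ and the Hessian eigenvalue bounds of Assumption~\ref{ass:A1}~i. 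This yields $\|\theta_{k+1}-\hat\theta_n\|_2^2 \le A(\gamma)\,\|\theta_k-\hat\theta_n\|_2^2$ with $A(\gamma)=1-2\gamma\underline{\lambda}_P\underline{\lambda}_H+\gamma^2[\overline{\lambda}_P\overline{\lambda}_H]^2$, where the constants depend only on the eigenvalue bounds, not on $k$ nor on the particular iterate.

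Next I would check feasibility of $\gamma$. The map $\gamma\mapsto A(\gamma)$ is a quadratic with strictly positive leading coefficient $[\overline{\lambda}_P\overline{\lambda}_H]^2$, with $A(0)=1$ and $A'(0)=-2\underline{\lambda}_P\underline{\lambda}_H<0$, all constants being strictly positive by Assumption~\ref{ass:A1}~i and iii. By continuity and strict decrease at the origin there is $\gamma^\star\in(0,1]$ with $A(\gamma)<1$ for all $\gamma\in(0,\gamma^\star]$; moreover the global minimum of $A$ equals $1-(\underline{\lambda}_P\underline{\lambda}_H/\overline{\lambda}_P\overline{\lambda}_H)^2\ge 0$, since $\underline{\lambda}_P\le\overline{\lambda}_P$ and $\underline{\lambda}_H\le\overline{\lambda}_H$, so in fact $A(\gamma)\in[0,1)$ on $(0,\gamma^\star]$. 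Fixing such a $\gamma$ and setting $\overline{\gamma}=1-\sqrt{A(\gamma)}\in(0,1]$ gives $A(\gamma)=(1-\overline{\gamma})^2$ as in the statement.

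Finally I would iterate: taking square roots in the per-step bound gives $\|\theta_{k+1}-\hat\theta_n\|_2\le\sqrt{A(\gamma)}\,\|\theta_k-\hat\theta_n\|_2=(1-\overline{\gamma})\,\|\theta_k-\hat\theta_n\|_2$ for each $k$, and chaining from $k=0$ yields $\|\theta_k-\hat\theta_n\|_2\le(1-\overline{\gamma})^k\|\theta_0-\hat\theta_n\|_2$, which tends to $0$ because $1-\overline{\gamma}\in[0,1)$. There is really no hard obstacle here since the core inequality is already in the text; the only step carrying any content is the feasibility check for $\gamma$, and the point worth stating explicitly is that $\overline{\gamma}$ is pinned down by the eigenvalue bounds alone and is uniform in $k$, which is precisely what upgrades the one-step contraction into a geometric global convergence rate.
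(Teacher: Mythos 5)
Your proposal is correct and follows essentially the same route as the paper: the paper's proof of Lemma \ref{lem:cv_non_stochastic} is exactly the inline chain of inequalities preceding the lemma statement (expand $\|\theta_{k+1}-\hat\theta_n\|_2^2$, bound the cross term via (\ref{eq:cvx1}) together with the eigenvalue bounds on $P_k$, bound the quadratic term via (\ref{eq:cvx2}), obtain $A(\gamma)$, choose $\gamma$ with $A(\gamma)\in[0,1)$ using $A(0)=1$ and $\partial_\gamma A(0)<0$, and iterate the contraction). Your only addition is the explicit check that the global minimum of the quadratic $A(\cdot)$ equals $1-(\underline{\lambda}_P\underline{\lambda}_H/[\overline{\lambda}_P\overline{\lambda}_H])^2\ge 0$, so that $A(\gamma)\ge 0$ automatically; the paper leaves this implicit when asserting $A(\gamma)\in[0,1)$.
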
 
The first Lemma derives the deterministic global convergence of a classical gradient-based optimizer for an appropriate choice of $\gamma$ at a dimension-free linear rate of $(1-\overline{\gamma})$. Convergence of Newton-Raphson with $P_k=H_k^{-1}$ is implied by the Lemma under Assumption \ref{ass:A1} only. This first result is useful for deriving the stochastic convergence of the resampled Algorithm \ref{algo:REI}. In particular, notice that the resampled update is:
\[ \theta_{b+1}  =
 \underbrace{ \vphantom{\theta_b - \gamma P_b G_n(\theta_b)} \vphantom{\gamma P_b [ G_n(\theta_b) - G_m^{(b+1)}(\theta_b)  ]}\theta_b - \gamma P_b G_m^{(b+1)}(\theta_b)}_{\text{Resampled update}} =
 \underbrace{ \vphantom{\gamma P_b [ G_n(\theta_b) - G_m^{(b+1)}(\theta_b)  ]}\vphantom{\theta_b - \gamma P_b G_m^{(b+1)}(\theta_b)} \theta_b - \gamma P_b G_n(\theta_b)}_{\text{Deterministic update}} +
 \underbrace{ \vphantom{\theta_b - \gamma P_b G_n(\theta_b)} \vphantom{\theta_b - \gamma P_b G_m^{(b+1)}(\theta_b)} \gamma P_b [ G_n(\theta_b) - G_m^{(b+1)}(\theta_b)  ]}_{\text{Resampling noise}}, \]
which is the sum of the deterministic update studied in Lemma \ref{lem:cv_non_stochastic} with resampling noise. The convergence has a deterministic and a stochastic component. The deterministic part is covered by Assumption \ref{ass:A1} and the condition on $\gamma$ in Lemma \ref{lem:cv_non_stochastic}. The following assumption will be used to study the stochastic part of the update.  

\begin{assumption}
\label{ass:A2}
Let $\mathbb{E}^\star$ denote the expectation under resampling or reweighting, i.e. conditional on $(z_1,\dots,z_n)$. Suppose there exists $\underline{\lambda}_H,\overline{\lambda}_H,C_2,C_3$ such that for $p \geq 2$: \vspace{-\topsep}
\begin{itemize} \setlength\itemsep{0em}
      \item[i.] $[\mathbb{E}^\star(\sup_{\theta \in \Theta}\|G_m^{(b)}(\theta)-G_n(\theta)\|^p)]^{1/p} \leq C_2 m^{-1/2}$ and $\mathbb{E}^\star[G_m^{(b)}(\theta)]=G_n(\theta)$,
      \item[ii.] $[\mathbb{E}^\star(\sup_{\theta \in \Theta}\|H_m^{(b)}(\theta)-H_n(\theta)\|^p)]^{1/p} \leq C_3 m^{-1/2}$,
      \item[iii.] $0 < \underline{\lambda}_H \leq \lambda_{\min}(H_m^{(b)}(\theta)) \leq \lambda_{\max}(H_m^{(b)}(\theta)) \leq \overline{\lambda}_H < \infty$, with probability $1$.
\end{itemize}
The conditioning matrix $P_b$ is such that for all $b \geq 1$: \vspace{-\topsep}
      \begin{itemize} \setlength\itemsep{0em}
            \item[iv.] $0 < \underline{\lambda}_P \leq \lambda_{\min}(P_b) \leq \lambda_{\max}(P_b) \leq \overline{\lambda}_P < \infty$, with probability $1$.
      \end{itemize}
\end{assumption}
These high-level conditions will be used to prove convergence of $\thetab$ in $L_p$-norm, $p \geq 2$. Appendix \ref{apx:primA2} provides sufficient conditions for i-ii. using either $m$ out of $n$ resampling with replacement or reweighting with Gaussian multiplier weights.\footnote{Dependence of $C_2,C_3$ on $n$ is omitted to simplify notation. The conditions under Gaussian reweighting are the same for $p=2$ and $p>2$ but not with resampling. The results can be extended to other weighting distributions with the same conditions as resampling by adapting the proof of Theorem 3.6.13 in \citet{VanderVaart1996} similar to \citet[pp18-20]{cheng2015}.} Consistent estimation and valid inference using quantiles in Algorithm \ref{algo:REI} (steps 4 and 5), will require $p=2$ in Assumption \ref{ass:A2}. Consistency of standard errors further requires $p=4$. Consistency of moments for the standard bootstrap is also associated with particular moment conditions, see \citet{gonccalveswhite2005}, \citet{Kato:11}, \citet{cheng2015}. Since the estimator in Algorithm \ref{algo:REI} is computed by averaging draws, the derivations are fundamentally moment-based.

Assumption \ref{ass:A2} iv reiterates the importance of the bounds for $P_b$ for convergence but in the stochastic case. Under Assumptions \ref{ass:A1}, \ref{ass:A2}, and the same choice of $\gamma$ used in Lemma \ref{lem:cv_non_stochastic}:
\[ \|\theta_{b+1}-\hat\theta_n\|_2 \leq (1-\overline{\gamma})\|\theta_{b}-\hat\theta_n\|_2 + \gamma \overline{\lambda}_P \sup_{\theta \in \Theta} \| G_m^{(b+1)}(\theta)-G_n(\theta) \|_2,  \]
with probability 1. Iterating the recursion reveals the general behaviour of $\theta_b$:
\[ \|\theta_{b+1}-\hat\theta_n\|_2 \leq (1-\overline{\gamma})^b\|\theta_{0}-\hat\theta_n\|_2 + \gamma \overline{\lambda}_P \sum_{j=0}^{b-1} (1-\overline{\gamma})^j\sup_{\theta \in \Theta} \| G_m^{(b+1-j)}(\theta)-G_n(\theta) \|_2,  \]
with probability 1. The leading term converges deterministically and exponentially fast to zero. It corresponds to a starting values bias. When Assumption \ref{ass:A2} i holds for some $p \geq 2$ and $(1-\overline{\gamma})\in [0,1)$, the second term has bounded $L_p$-norm, of order $1/\sqrt{m}$. This characterizes the stochastic convergence of $\theta_b$ as show in the following Lemma.

\begin{lemma} \label{lem:cv_stochastic} 
 Suppose Assumptions \ref{ass:A1}-\ref{ass:A2} hold
and $\gamma\in(0,1]$ is such that $(1-\overline{\gamma})^2=A(\gamma)\in[0,1)$, as defined in Lemma \ref{lem:cv_non_stochastic}. Then there exists a constant $C_4$ which only depends on $\gamma,\overline{\lambda}_P,C_3$ such that:
      \begin{eqnarray*} 
      \left[ \mathbb{E}^\star \left( \|\theta_{b+1}-\hat\theta_n\|_2^p \right) \right]^{1/p}
      &\le &  (1-\overline{\gamma})^{b+1} \bigg[\mathbb{E}^\star(\|\theta_0-\hat\theta_n\|_2^p)\bigg]^{1/p}+\frac{C_4}{\overline{\gamma} \sqrt{m}}  .
      \end{eqnarray*}
\end{lemma}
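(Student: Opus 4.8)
The plan is to take the pathwise one-step bound for $\|\theta_{b+1}-\hat\theta_n\|_2$ — which is already assembled in the discussion preceding the statement from Lemma \ref{lem:cv_non_stochastic} and Assumption \ref{ass:A2} — and propagate it through $L_p(\mathbb{E}^\star)$-norms. Concretely, writing the resampled update as the deterministic update $\theta_b-\gamma P_b G_n(\theta_b)$ plus resampling noise $\gamma P_b[G_n(\theta_b)-G_m^{(b+1)}(\theta_b)]$, the contraction argument of Lemma \ref{lem:cv_non_stochastic} (valid for the chosen $\gamma$ under Assumption \ref{ass:A1}) handles the deterministic part, while Assumption \ref{ass:A2} iii--iv keeps $P_b$ and the resampled Hessians uniformly well-conditioned so that $\lambda_{\max}(P_b)\le \overline{\lambda}_P$. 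This gives, with probability one,
\[ \|\theta_{b+1}-\hat\theta_n\|_2 \;\le\; (1-\overline{\gamma})\,\|\theta_{b}-\hat\theta_n\|_2 \;+\; \gamma\,\overline{\lambda}_P\,\sup_{\theta\in\Theta}\|G_m^{(b+1)}(\theta)-G_n(\theta)\|_2 . \]

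Next I would apply $[\mathbb{E}^\star(\cdot)^p]^{1/p}$ to both sides and use Minkowski's inequality together with Assumption \ref{ass:A2} i, which bounds $\big[\mathbb{E}^\star(\sup_{\theta}\|G_m^{(b+1)}(\theta)-G_n(\theta)\|^p)\big]^{1/p}\le C_2 m^{-1/2}$. Setting $a_b:=[\mathbb{E}^\star\|\theta_b-\hat\theta_n\|_2^p]^{1/p}$, this produces the scalar linear recursion $a_{b+1}\le (1-\overline{\gamma})\,a_b+\gamma\overline{\lambda}_P C_2 m^{-1/2}$. The key point is that \emph{no independence} of the noise terms across iterations is needed here — Minkowski's inequality in $L_p$ suffices — and that the $\sup_\theta$ must be kept inside the expectation throughout so that Assumption \ref{ass:A2} i applies verbatim.

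Unrolling the recursion then yields
\[ a_{b+1}\;\le\;(1-\overline{\gamma})^{b+1}a_0 \;+\; \gamma\overline{\lambda}_P C_2 m^{-1/2}\sum_{j=0}^{b}(1-\overline{\gamma})^{j} \;\le\;(1-\overline{\gamma})^{b+1}a_0 \;+\; \frac{\gamma\overline{\lambda}_P C_2}{\overline{\gamma}}\,m^{-1/2}, \]
where the geometric sum is bounded by $1/(1-(1-\overline{\gamma}))=1/\overline{\gamma}$ since $1-\overline{\gamma}\in[0,1)$. Taking $C_4:=\gamma\overline{\lambda}_P C_2$, which depends only on $\gamma$, $\overline{\lambda}_P$, and the moment constant of Assumption \ref{ass:A2}, delivers the claimed inequality.

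I do not expect a genuine obstacle: the substantive work has been front-loaded into Lemma \ref{lem:cv_non_stochastic} (the deterministic contraction) and Assumption \ref{ass:A2} (the $m^{-1/2}$ moment control of the resampling noise and the eigenvalue bounds on $P_b$). The only points requiring care are (i) checking that the one-step inequality genuinely holds pathwise so that taking $L_p$-norms is legitimate, (ii) invoking Minkowski rather than any orthogonality of the innovations, and (iii) absorbing the geometric series into the $1/\overline{\gamma}$ factor — all routine. If one instead tracked the Hessian fluctuation explicitly (e.g.\ via Assumption \ref{ass:A2} ii, constant $C_3$) in the noise term, the same computation goes through with $C_3$ in place of $C_2$, which matches the constant named in the statement.
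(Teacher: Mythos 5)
Your proposal is correct and follows essentially the same route as the paper's proof: the same decomposition into a deterministic update plus resampling noise, the pathwise contraction from Lemma \ref{lem:cv_non_stochastic}, Minkowski's inequality to pass to $L_p(\mathbb{E}^\star)$-norms, and the geometric-sum bound $1/\overline{\gamma}$. Your closing remark about $C_2$ versus $C_3$ is also on point — the paper's own proof writes the gradient-noise bound with $C_3$ (and sets $C_4=\gamma\overline{\lambda}_P C_3$) even though Assumption \ref{ass:A2} i states that bound with $C_2$, a notational slip that your version handles correctly.
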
 
Lemma \ref{lem:cv_stochastic} shows that $\theta_b$ converges in $L_p$-norm to a $1/\sqrt{m}$-neighborhood of $\hat\theta_n$. Unlike classical optimizers or \sgd\, with $\gamma_k \to 0$ where $\theta_k \to \hat\theta_n$ either deterministically or in probability, here $\thetab \not\to \hat\theta_n$ as $b$ increases -- this is similar to MCMC which produces draws around $\hat\theta_n$. 
Except for quadratic objectives, the Markov-process followed by $\theta_b$ around $\hat\theta_n$ is intractable. Nevertheless, the following will show that
\begin{equation}
\thetabf^\star - \hat \theta_n = \Psi_n(\thetab^\star - \hat\theta_n) - \gamma \overline{P}_m G_m^{(b+1)}(\hat\theta_n), \quad \theta^\star_0 = \theta_0,
\label{eq:lin}
\end{equation}
gives a good approximation of $\theta_b$ and its distribution. The autoregressive coefficient $\Psi_n$ and $\overline{P}_m$ depend on the choice of update $P_b$. Table \ref{tab:thetastar} makes them explicit for \rgd, \rnr, and \rqn. 
\begin{table}[H] \caption{Coefficients in the linear coupling process $\theta_b^\star$ } \label{tab:thetastar} \centering
      \begin{tabular}{l|ccc} \hline \hline
            Algorithm & $P_b$ & $\overline{P}_m$ & $\Psi_n$ \\ \hline
            Resampled Gradient Descent (\rgd) & $I_d$ & $I_d$ & $I_d - \gamma H_n(\hat\theta_n)$ \\
            Resampled Newton-Raphson (\rnr) & $[H_m^{(b+1)}(\theta_b)]^{-1}$ & $[H_n(\hat\theta_n)]^{-1}$ & $(1-\gamma)I_d$ \\
            Resampled quasi-Newton (\rqn) & Algorithm (\ref{algo:rqn}) & $[H_n(\hat\theta_n)]^{-1}$ & $(1-\gamma)I_d$ \\
            \hline \hline
      \end{tabular}
\end{table}





 \begin{assumption} \label{ass:A3}
      Let $d_{0,n} = [ \mathbb{E}^\star(\|\theta_0-\hat\theta_n\|^p_2) ]^{1/p}$, suppose that there exists a matrix $\overline{P}_m$ symmetric positive definite and constants $C_5$, $\rho \in [0,1)$ such that for $\Psi_n = I -\gamma \overline{P}_m H_n(\hat\theta_n)$:\vspace{-\topsep}
      \begin{itemize} \setlength\itemsep{0em}
            \item[i.] $0 \leq \lambda_{\max}(\Psi_n) < 1$,
            \item[ii.] $\left[ \mathbb{E}^\star(\|I-P_b[\overline{P}_m]^{-1}\|^p_2) \right]^{1/p} \leq C_5 \left( \rho^b d_{0,n} + 1/\sqrt{m} \right)$.
      \end{itemize}
\end{assumption}

In the literature, $\thetab^\star$ is known as a coupling of $\thetab$. The two have different marginal distributions because one follows a linear and the other a  non-linear process. Nonetheless, they live on the same probability space because they are generated from the same resampled objective $Q_m^{(b)}$. 
Hence if we can show that $\|\overline{\theta}_{\rnr}-\overline{\theta}_{\rnr}^\star\|$ is negligible, we can work with the distribution of $\thetab^\star$ which is more tractable.  

\begin{proposition} \label{prop:coupling}
      Suppose that Assumptions \ref{ass:A1}, \ref{ass:A2}, and \ref{ass:A3} hold for $p\geq 2$ and $\gamma \in (0,1]$ is such that Lemma \ref{lem:cv_stochastic} is satisfied. Let $\overline{\rho} = \max[1-\overline{\gamma},\rho,\lambda_{\max}(\Psi_n)]$, and $d_{0,n}=[\mathbb{E}^\star(\|\theta_0-\hat\theta_n\|_2^p)]^{1/p}$. There exists a $C_6$, which depends on $C_1,\dots,C_5,\overline{\rho},\gamma,\overline{\gamma},\overline{\lambda}_H,\overline{\lambda}_P$ such that:
      \begin{align} \left[\mathbb{E}^\star \bigg(\| \theta_b-\theta^\star_{b}\|_2^{p/2}\bigg)\right]^{2/p}\le C_6 \bigg(\frac{1}{m}+\overline{\rho}^b[d_{0,n}+d_{0,n}^2] \bigg). \label{eq:avg_coupling}
      \end{align}
\end{proposition}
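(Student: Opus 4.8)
The plan is to set up a linear recursion for the coupling error $\delta_b := \theta_b - \theta_b^\star$ whose driving term is a sum of products of ``small'' quantities, and then to unroll it. Write $e_b := \theta_b - \hat\theta_n$ and $e_b^\star := \theta_b^\star - \hat\theta_n$. Since $q(\cdot;\theta)$ is twice differentiable, a first-order Taylor expansion of the resampled gradient around $\hat\theta_n$ gives $G_m^{(b+1)}(\theta_b) = G_m^{(b+1)}(\hat\theta_n) + \overline{H}_b\, e_b$ with $\overline{H}_b := \int_0^1 H_m^{(b+1)}\big(\hat\theta_n + t e_b\big)\,dt$. Substituting into the resampled update yields $e_{b+1} = (I - \gamma P_b \overline{H}_b) e_b - \gamma P_b G_m^{(b+1)}(\hat\theta_n)$, while \eqref{eq:lin} reads $e_{b+1}^\star = \Psi_n e_b^\star - \gamma \overline{P}_m G_m^{(b+1)}(\hat\theta_n)$ with $\Psi_n = I - \gamma\overline{P}_m H_n(\hat\theta_n)$. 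Subtracting, adding and subtracting $\Psi_n e_b$, and using $G_n(\hat\theta_n)=0$, one obtains $\delta_{b+1} = \Psi_n \delta_b + R_b$ with $R_b = \gamma\big[\overline{P}_m H_n(\hat\theta_n) - P_b \overline{H}_b\big] e_b - \gamma\big[P_b - \overline{P}_m\big] G_m^{(b+1)}(\hat\theta_n)$; since $\delta_0 = 0$, this unrolls to $\delta_b = \sum_{j=0}^{b-1}\Psi_n^{\,b-1-j} R_j$.

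The next step is to bound $\|R_j\|$ in $L_{p/2}(\mathbb{E}^\star)$, using that every summand of $R_j$ is a product of two ``small'' factors. Writing $\overline{P}_m H_n(\hat\theta_n) - P_b\overline{H}_b = (\overline{P}_m - P_b)\overline{H}_b + P_b\big(H_n(\hat\theta_n) - \overline{H}_b\big)$ and $H_n(\hat\theta_n) - \overline{H}_b = \int_0^1[H_n(\hat\theta_n + te_b) - H_m^{(b+1)}(\hat\theta_n + te_b)]dt + \int_0^1[H_n(\hat\theta_n) - H_n(\hat\theta_n + te_b)]dt$, I would control (i) $\|P_b - \overline{P}_m\| \le \overline{\lambda}_P\|I - P_b[\overline{P}_m]^{-1}\|$, whose $L_p$-norm is $\le C_5(\rho^b d_{0,n} + m^{-1/2})$ by Assumption \ref{ass:A3}(ii); (ii) the first Hessian integral by $\sup_\theta\|H_m^{(b+1)}(\theta) - H_n(\theta)\|$, with $L_p$-norm $O(m^{-1/2})$ by Assumption \ref{ass:A2}(ii); (iii) the second Hessian integral by $\tfrac{C_1}{2}\|e_b\|$ via the Lipschitz bound in Assumption \ref{ass:A1}(ii); (iv) $\|G_m^{(b+1)}(\hat\theta_n)\| = \|G_m^{(b+1)}(\hat\theta_n) - G_n(\hat\theta_n)\|$, with $L_p$-norm $\le C_2 m^{-1/2}$ by Assumption \ref{ass:A2}(i); and (v) $\|e_b\|$ by Lemma \ref{lem:cv_stochastic}, whose $L_p$-norm is $\le (1-\overline{\gamma})^b d_{0,n} + C_4/(\overline{\gamma}\sqrt m)$, together with the uniform bounds $\|\overline{H}_b\|\le\overline{\lambda}_H$ and $\|\overline{P}_m\|\le\overline{\lambda}_P$. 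Cauchy--Schwarz ($\|XY\|_{L_{p/2}}\le\|X\|_{L_p}\|Y\|_{L_p}$) applied termwise, together with $\rho,\,1-\overline{\gamma}\le\overline{\rho}$ and the splitting $xy\le\tfrac12(x^2+y^2)$ of the transient$\times$noise-floor cross products, then gives $\|R_j\|_{L_{p/2}}\le C_7\big(\overline{\rho}^{\,2j}d_{0,n}^2 + m^{-1}\big)$ for a constant $C_7$ depending only on $\gamma,\overline{\gamma},\overline{\lambda}_H,\overline{\lambda}_P$ and $C_1,\dots,C_5$.

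It remains to solve the recursion. By Assumption \ref{ass:A3}(i), $\Psi_n$ has spectral radius $\le\overline{\rho}<1$, so $\|\Psi_n^{\,k}\|_2\le C\,\overline{\rho}^{\,k}$ (with $C$ reflecting the conditioning of $\overline{P}_m$, and $C=1$ when $\Psi_n$ is symmetric, as for \rgd, \rnr, \rqn). Applying Minkowski's inequality in $L_{p/2}$ (valid since $p/2\ge1$) to $\delta_b=\sum_{j=0}^{b-1}\Psi_n^{\,b-1-j}R_j$ gives $\|\delta_b\|_{L_{p/2}}\le C\sum_{j=0}^{b-1}\overline{\rho}^{\,b-1-j}\|R_j\|_{L_{p/2}}\le C C_7\sum_{j=0}^{b-1}\overline{\rho}^{\,b-1-j}\big(\overline{\rho}^{\,2j}d_{0,n}^2 + m^{-1}\big)$. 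The $m^{-1}$ part sums to a geometric series $\le CC_7 m^{-1}/(1-\overline{\rho})$, and the transient part satisfies $d_{0,n}^2\sum_{j=0}^{b-1}\overline{\rho}^{\,b-1-j}\overline{\rho}^{\,2j}\le C_{\overline{\rho}}\,\overline{\rho}^{\,b}d_{0,n}^2$ with $C_{\overline{\rho}}$ depending only on $\overline{\rho}$. Collecting the two contributions, and weakening $d_{0,n}^2\le d_{0,n}+d_{0,n}^2$, yields $[\mathbb{E}^\star(\|\theta_b-\theta_b^\star\|_2^{p/2})]^{2/p}\le C_6\big(m^{-1}+\overline{\rho}^{\,b}[d_{0,n}+d_{0,n}^2]\big)$, which is \eqref{eq:avg_coupling}.

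\textbf{Main obstacle.} The delicate part is getting $R_j$ to the right order while keeping $C_6$ free of $b$ and $m$. Two error sources must be tracked simultaneously and combined correctly: the linearization error, whose curvature contribution $\tfrac{C_1}{2}\|e_b\|^2$ is precisely what produces the $d_{0,n}^2$ term and which only survives in $L_{p/2}$ once multiplied by another small factor; and the conditioning-matrix error $\|P_b-\overline{P}_m\|$, which carries both a geometrically decaying transient $\rho^b d_{0,n}$ and an $m^{-1/2}$ noise floor. The bookkeeping must ensure that (a) products of two $m^{-1/2}$ factors collapse to the $m^{-1}$ rate; (b) transient$\times$noise-floor cross terms are split so that their transient part reappears as a pure $\overline{\rho}^{\,2j}d_{0,n}^2$, which then convolves geometrically against $\overline{\rho}^{\,b-1-j}$ to leave a single $\overline{\rho}^{\,b}$ factor rather than a polynomial-in-$b$ times geometric factor; and (c) the Lemma \ref{lem:cv_stochastic} bound on $\|e_b\|$, itself a transient plus an $m^{-1/2}$ floor, is fed in consistently at every occurrence. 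The remaining steps -- the Taylor expansion, the matrix splittings, and summing the geometric series -- are routine once this accounting is set up.
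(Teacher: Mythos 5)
Your proposal is correct and follows essentially the same route as the paper: derive the linear recursion $\delta_{b+1}=\Psi_n\delta_b+R_b$ for the coupling error, bound $R_b$ in $L_{p/2}$ by Cauchy--Schwarz as products of two small factors (the conditioning-matrix error from Assumption \ref{ass:A3}(ii), the resampling errors in $G_m^{(b)},H_m^{(b)}$ from Assumption \ref{ass:A2}, the Lipschitz curvature term from Assumption \ref{ass:A1}(ii), and the $\|\theta_b-\hat\theta_n\|$ bound from Lemma \ref{lem:cv_stochastic}), and then unroll the geometric contraction. The only deviations are bookkeeping: you group the remainder as $(\overline{P}_m-P_b)\overline{H}_b e_b+P_b\big(H_n(\hat\theta_n)-\overline{H}_b\big)e_b-\gamma(P_b-\overline{P}_m)G_m^{(b+1)}(\hat\theta_n)$ rather than the paper's Taylor-remainder/gradient-difference split (thereby correctly retaining the $(P_b-\overline{P}_m)G_m^{(b+1)}(\hat\theta_n)$ cross term, which is of order $1/m+\rho^b d_{0,n}m^{-1/2}$ and harmless), and you absorb the transient-times-noise cross products by AM--GM into $\overline{\rho}^{2j}d_{0,n}^2+m^{-1}$ instead of carrying a first-power $\overline{\rho}^{b}d_{0,n}$ term, which gives a marginally tighter bound that still implies \eqref{eq:avg_coupling}.
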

The constant $\overline{\gamma} \in (0,1]$ is  the global rate of convergence in Lemma \ref{lem:cv_stochastic}. Putting $p=2$, the lemma states that that the expected deviation $\mathbb{E}^\star(\|\thetab-\thetab^\star\|)$ is of order $1/m$ plus a term that vanishes exponentially fast in $b$.  This in turn implies that for $p=2$, upon  averaging,  $\mathbb{E}^\star(\|\overline{\theta}_{\textsc{re}}-\overline{\theta}_{\textsc{re}}^\star\|_2)$  is of order $1/m$ plus a $1/B$ term due to $d_{0,n}$. Using these non-asymptotic bounds, the large sample estimation and inference properties of Algorithm \ref{algo:REI} can be derived.

\subsection{Large Sample Estimation and Inference}

Proposition \ref{prop:coupling} is pivotal for what follows. Suppose for simplicity that there is no approximation error, i.e. $\theta_b=\theta_b^\star$, then:
\[ \mathbb{E}^\star(\theta_b) = \hat\theta_n + [\Psi_n]^b (\theta_0-\hat\theta_n), \]
converges to $\hat\theta_n$ exponentially fast under Assumption \ref{ass:A3} i. This implies that \rgd, \rnr, and \rqn\, deliver consistent estimates. This should be of interest for practitionners using \sgd\, with $\gamma_k = \gamma$ fixed and Polyak-Ruppert averaging, which is common in practice, since the theory gives restrictions on $m$ for the bias to be negligible. Furthermore, $\theta_b^\star$ being an ergodic first-order vector autoregressive process:
\[ m \times \text{var}^\star(\theta_b) \to \gamma^2 \Psi_n \text{var}^\star[\sqrt{m}G_m^{b}(\hat\theta_n)] \Psi_n^\prime , \]
converging exponentially fast. It consistently estimates the asymptotic variance, up to scale, when $\text{var}^\star[\sqrt{m}G_m^{b}(\hat\theta_n)]$ consistently estimates the meat and $\Psi_n$ approximates the bread $[H_n(\hat\theta_n)]^{-1}$ (up to scale). This is the case for \rnr\, and \rqn, but not \rgd. Suppose $\sqrt{m}G_m^{b}(\hat\theta_n) \sim \mathcal{N}(0,\Sigma_n)$ then, for large $b$ we approximately have $\theta_b \sim \mathcal{N}(\hat\theta_n,\gamma^2 \Psi_n \Sigma_n \Psi_n^\prime/n)$ which matches the asymptotic distribution, after re-scaling the variance, if $\Psi_n$ approximates the bread (up to scale). The following Theorem gives a non-asympotic bound for the approximation error $\|\overline{\theta}_{\textsc{re}}-\hat\theta_n\|_2$ and the large-sample consistency result.

\begin{theorem}[Large Sample Estimation]
      \label{th:average_estimation} Suppose that $\Sigma_n = \text{var}^\star [\sqrt{m}G_m^{(b)}(\hat\theta_n)]$ is finite and bounded. If  the conditions of Proposition \ref{prop:coupling} hold for $p=2$, then $\overline{\theta}_{\textsc{re}}$ satisfies:
      \begin{itemize} \setlength\itemsep{0em}
\item[i.] $ \mathbb{E}^\star\left( \| \overline{\theta}_{\textsc{re}} - \hat\theta_n \|_2 \right) \leq C_7 \left( \frac{1}{m} + \frac{d_{0,n} + d_{0,n}^2}{B} + \frac{1}{\sqrt{mB}} \right)$ where $C_7$ depends on $C_{6}$ and $\text{trace}(\Sigma_n)$. 
\item[ii.] If $\sqrt{n}/\min(B,m) \to 0$:
      $\sqrt{n} \left( \overline{\theta}_{\textsc{re}} - \theta^\dagger \right) =  \sqrt{n} \left( \hat\theta_n - \theta^\dagger \right) + o_{p^\star}(1).$
 \end{itemize}
\end{theorem}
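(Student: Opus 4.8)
The plan is to decompose $\overline{\theta}_{\textsc{re}}-\hat\theta_n$ into a \emph{coupling error}, already controlled by Proposition \ref{prop:coupling}, plus the average of the tractable linear process $\theta_b^\star$ of \eqref{eq:lin}; I would bound each piece in the conditional $L_2$-norm $\mathbb{E}^\star(\|\cdot\|_2)$ and then upgrade the resulting moment bound (part i) to the $o_{p^\star}(1)$ statement (part ii) by a conditional Markov inequality.

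For part (i) I would start from
\[\overline{\theta}_{\textsc{re}}-\hat\theta_n=\frac{1}{B}\sum_{b=1}^B(\theta_b-\theta_b^\star)+\frac{1}{B}\sum_{b=1}^B(\theta_b^\star-\hat\theta_n).\]
The triangle inequality and Proposition \ref{prop:coupling} with $p=2$ give $\mathbb{E}^\star(\|\theta_b-\theta_b^\star\|_2)\le C_6\big(1/m+\overline{\rho}^b[d_{0,n}+d_{0,n}^2]\big)$, and summing the geometric series $\sum_b\overline{\rho}^b\le(1-\overline{\rho})^{-1}$ bounds the first average by a multiple of $1/m+(d_{0,n}+d_{0,n}^2)/B$. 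For the second average I would unroll \eqref{eq:lin} into $\theta_b^\star-\hat\theta_n=\Psi_n^b(\theta_0-\hat\theta_n)-\gamma\sum_{i=1}^b\Psi_n^{b-i}\overline{P}_m G_m^{(i)}(\hat\theta_n)$, average over $b$, and interchange the two sums:
\[\frac{1}{B}\sum_{b=1}^B(\theta_b^\star-\hat\theta_n)=\Big(\frac{1}{B}\sum_{b=1}^B\Psi_n^b\Big)(\theta_0-\hat\theta_n)-\frac{\gamma}{B}\sum_{i=1}^B\Big(\sum_{\ell=0}^{B-i}\Psi_n^\ell\Big)\overline{P}_m\,G_m^{(i)}(\hat\theta_n).\]
By Assumption \ref{ass:A3}i together with the two-sided eigenvalue bounds on $\overline{P}_m$ and $H_n$ in Assumptions \ref{ass:A1}--\ref{ass:A2}, $\Psi_n=I-\gamma\overline{P}_m H_n(\hat\theta_n)$ is a contraction with $\|\Psi_n^\ell\|_2\le C\overline{\rho}^\ell$, so the averaged matrix $\frac{1}{B}\sum_b\Psi_n^b$ has operator norm $O(1/B)$ and each weight $\sum_{\ell=0}^{B-i}\Psi_n^\ell$ is bounded uniformly in $i$; hence the deterministic first term contributes $O(d_{0,n}/B)$.

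The core of the argument is the stochastic term. Since $\hat\theta_n$ satisfies $G_n(\hat\theta_n)=0$, Assumption \ref{ass:A2}i makes the innovations $G_m^{(i)}(\hat\theta_n)$ mean zero under $\mathbb{E}^\star$, i.i.d.\ across $i$, with $\mathbb{E}^\star(\|G_m^{(i)}(\hat\theta_n)\|_2^2)=\text{trace}(\Sigma_n)/m$. Expanding the square, the cross terms vanish by independence and mean zero, so
\[\mathbb{E}^\star\!\left(\left\|\frac{\gamma}{B}\sum_{i=1}^B\Big(\sum_{\ell=0}^{B-i}\Psi_n^\ell\Big)\overline{P}_m\,G_m^{(i)}(\hat\theta_n)\right\|_2^2\right)\le\frac{\gamma^2}{B}\Big(\sup_i\big\|\textstyle\sum_\ell\Psi_n^\ell\big\|_2\Big)^{2}\overline{\lambda}_P^2\,\frac{\text{trace}(\Sigma_n)}{m}=O\!\left(\frac{\text{trace}(\Sigma_n)}{mB}\right),\]
and Jensen's inequality turns this into a conditional $L_1$ bound of order $\sqrt{\text{trace}(\Sigma_n)/(mB)}$. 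Collecting the three contributions and absorbing the fixed constants into a single $C_7$ that inherits the dependence on $C_6$ and $\text{trace}(\Sigma_n)$ yields part (i). For part (ii), split $\sqrt{n}(\overline{\theta}_{\textsc{re}}-\theta^\dagger)=\sqrt{n}(\hat\theta_n-\theta^\dagger)+\sqrt{n}(\overline{\theta}_{\textsc{re}}-\hat\theta_n)$; by the conditional Markov inequality and part (i),
\[\mathbb{P}^\star\big(\sqrt{n}\,\|\overline{\theta}_{\textsc{re}}-\hat\theta_n\|_2>\varepsilon\big)\le\frac{C_7}{\varepsilon}\Big(\frac{\sqrt{n}}{m}+\frac{\sqrt{n}\,(d_{0,n}+d_{0,n}^2)}{B}+\frac{\sqrt{n}}{\sqrt{mB}}\Big).\]
Under $\sqrt{n}/\min(B,m)\to0$ the first and third terms are deterministic $o(1)$, and since $\theta_0$ is fixed while $\hat\theta_n\pconv\theta^\dagger$ we have $d_{0,n}=\|\theta_0-\hat\theta_n\|_2=O_p(1)$, so the middle term is $o_p(1)$; hence $\sqrt{n}(\overline{\theta}_{\textsc{re}}-\hat\theta_n)=o_{p^\star}(1)$ and part (ii) follows.

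The step I expect to be the main obstacle is the stochastic term, because it must simultaneously exploit (a) the conditional independence of the resampling innovations $G_m^{(i)}(\hat\theta_n)$ across $i$ -- which is exactly what improves the rate from $1/\sqrt{m}$ to $1/\sqrt{mB}$ -- and (b) the geometric decay of the operator norms $\|\Psi_n^\ell\|_2$, which keeps the up-to-$B$ weights $\sum_\ell\Psi_n^\ell$ summably bounded. Establishing (b) is slightly delicate for \rqn, where $\Psi_n$ need not be symmetric: one notes it is similar, through $\overline{P}_m^{1/2}$, to the symmetric contraction $I-\gamma\,\overline{P}_m^{1/2}H_n(\hat\theta_n)\overline{P}_m^{1/2}$, whose condition number -- hence the multiplicative cost $C$ of the similarity -- is bounded by the eigenvalue bounds on $\overline{P}_m$ and $H_n$ in Assumptions \ref{ass:A1}--\ref{ass:A2}.
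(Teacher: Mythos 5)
Your proposal is correct and follows essentially the same route as the paper: the same split into coupling error (handled by Proposition \ref{prop:coupling}) plus the averaged linear process, the same unrolling of \eqref{eq:lin} with the initialization term of order $d_{0,n}/B$, the same variance/trace computation exploiting the conditional mean-zero i.i.d.\ innovations to get the $1/\sqrt{mB}$ rate, and the same Markov-inequality upgrade for part (ii). Your extra care about non-symmetry of $\Psi_n$ is harmless but unnecessary here, since for \rnr\, and \rqn\, one has $\Psi_n=(1-\gamma)I_d$.
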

First-order asymptotic equivalence of $\overline{\theta}_{\textsc{re}}$ and $\hat\theta_n$ requires both $B$ and $m$ to be large relative to $\sqrt{n}$. Using fixed $m$ and $\gamma$ does not produce consistent estimates. This is relevant for \sgd\, with $m\ll n$ and $\gamma_k=\gamma$. For inference, consider the standard bootstrap as a benchmark. Let $\overset{d^\star}{\to}$ denote convergence in distribution conditional on $(z_1,\dots,z_n)$. Under regularity conditions, an $n$ out of $n$ bootstrap estimate $\hat \theta_n^{(b)}  =\text{argmin}_{\theta \in \Theta} Q_n^{(b)}(\theta)$ is such that
\begin{align*}
\label{eq:stdboot}
\sqrt{n}(\hat \theta_n^{(b)} - \hat\theta_n) =- [H_n(\hat\theta_n)]^{-1} \sqrt{n}G_n^{(b)}(\hat\theta_n)+o^\star_p(1).
\end{align*}
If $\sqrt{n}G_n^{(b)}(\hat\theta_n)$ satisfies a conditional CLT then the bootstrap distribution targets the correct asymptotic distribution. The bootstrap is often used to compute standard errors. Consistency of moments further requires uniform integrability.\footnote{See \citet[Th16.14]{billingsley2013}. Specifically, to consistently estimate the variance $V_n$, the condition requires $\sup_{n} \mathbb{E}^\star(\|\sqrt{n}(\hat \theta_n^{(b)} - \hat\theta_n)\|^{2+\varepsilon})$ finite for some $\varepsilon >0$. See \citet{gonccalveswhite2005}, \citet{Kato:11}, \citet{cheng2015}, for applications to moment convergence of the bootstrap in OLS regressions and M-estimation.} Here, convergence in distribution of $\theta_b$ comes from Proposition \ref{prop:coupling} and a conditional CLT for $\theta_b^\star$, i.e. the growing sum $\sum_{j=0}^b (1-\gamma)^j [H_n(\hat\theta_n)]^{-1}\sqrt{m}G_m^{(b)}(\hat\theta_n)$. Assumption \ref{ass:A4} is used to ensure that distributional convergence for $\sqrt{m}G_m^{(b)}(\hat\theta_n)$ extends to $\theta_b^\star$. Primitive conditions are given in Appendix \ref{apx:primA4} for $m$ out of $n$ resampling, and reweighting with Gaussian weights.

\begin{assumption} \label{ass:A4} Let $V_n = [H_n(\hat\theta_n)]^{-1} \Sigma_n [H_n(\hat\theta_n)]^{-1}$, where $\Sigma_n \ m \times \text{var}^\star [G_m^{(b)}(\hat\theta_n)]$. There exists constants $\beta \in (0,1/2]$, $\kappa > 0$, and a remainder $\|r_m(\tau)\| \leq C_\psi \|\tau\|^\kappa$  such that for $\mathbf{i}^2=-1$,
      $\mathbb{E}^\star \left(  \exp \left[\sqrt{m} \mathbf{i}\tau^\prime (V_n^{-1/2} [H_n(\hat\theta_n)]^{-1}G_m^{(b)}(\hat\theta_n) \right] \right) = \exp \left( - \frac{\|\tau\|_2^2}{2}\right) \times \left( 1+ \frac{r_m(\tau)}{m^{\beta}} \right).$
\end{assumption}

\begin{assumption} \label{ass:A5} Suppose the following holds:
      \vspace{-\topsep}
\begin{itemize} \setlength\itemsep{0em}
      \item[i.] $V_n \overset{p}{\to} V = [H(\theta^\dagger)]^{-1} \Sigma [H(\theta^\dagger)]^{-1}$, $\sqrt{n}(\hat\theta_n-\theta^\dagger) \overset{d}{\to} \mathcal{N}(0,V)$,
      \item[ii.] $h: \Theta \to \mathbb{R}$ is continuously differentiable, $\nabla h$ is Lipschitz-continous, $\nabla h(\theta^\dagger) \neq 0$.
\end{itemize}
\end{assumption}

Assumption \ref{ass:A5} is standard. It requires asymptotic normality of $\hat\theta_n$, consistency of the sandwich estimator, and validity of the delta method for inference on a function $h$ of $\theta$. Theorem \ref{th:asym_norm} presents the large sample inference. First, the asympotic validity quantile-based confidence intervals. Second, consistency of standard errors computed in Algorithm \ref{algo:REI}. Both results require a choice of $P_b$ satisfying $\overline{P}_m = [H_n(\hat\theta_n)]^{-1}$.

\begin{theorem}[Large Sample Inference]
      \label{th:asym_norm} 
Suppose that the conditions of Proposition \ref{prop:coupling} hold with $\overline{P}_m = [H_n(\hat\theta_n)]^{-1}$ and $p=2$, that $\Sigma_n$ is bounded and non-singular, and Assumptions \ref{ass:A4}-\ref{ass:A5} hold, then as $m,n,$ and $b\rightarrow\infty$, with $\log(m)/b \to 0$, for any $\alpha \in (0,1)$:
\begin{itemize} \setlength\itemsep{0em}
      \item[i.] $\lim_{n\to\infty} \mathbb{P}_n[ c_{h,b}(\alpha/2) \leq h(\theta^\dagger) - h(\hat\theta_n) \leq c_{h,b}(1-\alpha/2) ] = 1-\alpha$,
\end{itemize}
where $c_{h,b}(\alpha)$ is the $\alpha$-th quantile of the adjusted draws $h(\tilde \theta_b)$, $\tilde \theta_b = \hat\theta_n + \sqrt{\frac{m}{n\phi(\gamma)}}(\thetab - \hat\theta_n)$. If, in addition, the assumptions hold with $p=4$:
\begin{itemize} \setlength\itemsep{0em}
\item[ii.] $\frac{m}{\phi(\gamma)} \text{var}^\star [h(\theta_b)] = \nabla h(\hat\theta_n) V_n \nabla h(\hat\theta_n)^\prime + o(1).$
\end{itemize}
\end{theorem}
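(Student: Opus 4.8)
The engine is Proposition~\ref{prop:coupling}: under the hypothesis $\overline{P}_m=[H_n(\hat\theta_n)]^{-1}$, Assumption~\ref{ass:A3} forces $\Psi_n=I_d-\gamma[H_n(\hat\theta_n)]^{-1}H_n(\hat\theta_n)=(1-\gamma)I_d$ (the \rnr\ and \rqn\ rows of Table~\ref{tab:thetastar}), so (\ref{eq:lin}) unrolls to the exact expression $\theta_b^\star-\hat\theta_n=(1-\gamma)^b(\theta_0-\hat\theta_n)-\gamma[H_n(\hat\theta_n)]^{-1}\sum_{j=1}^b(1-\gamma)^{b-j}G_m^{(j)}(\hat\theta_n)$. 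First I would discard both the deterministic starting-value term and the coupling error: by Proposition~\ref{prop:coupling} with $p=2$ together with $\overline{\rho}<1$, one gets $\sqrt{m}(1-\gamma)^b\|\theta_0-\hat\theta_n\|_2+\sqrt{m}\,\mathbb{E}^\star\|\theta_b-\theta_b^\star\|_2=O(m^{-1/2}+\sqrt{m}\,\overline{\rho}^{\,b})=o(1)$, where $\sqrt{m}\,\overline{\rho}^{\,b}=\exp(b[\tfrac{1}{2}\log m/b+\log\overline{\rho}])\to 0$ precisely because $\log(m)/b\to 0$. Hence $\sqrt{n}(\tilde\theta_b-\hat\theta_n)=\sqrt{m/\phi(\gamma)}(\theta_b-\hat\theta_n)$ has the same limiting conditional law as $-\phi(\gamma)^{-1/2}S_b$, where $S_b:=\gamma\sum_{j=1}^b(1-\gamma)^{b-j}\sqrt{m}[H_n(\hat\theta_n)]^{-1}G_m^{(j)}(\hat\theta_n)$.

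For part i I would establish a conditional CLT for $S_b$. The resampled gradients $\sqrt{m}G_m^{(j)}(\hat\theta_n)$, $j=1,\dots,b$, are conditionally independent (a fresh resample at each iteration) and mean zero ($\mathbb{E}^\star G_m^{(j)}(\hat\theta_n)=G_n(\hat\theta_n)=0$, for resampling and for mean-one reweighting alike), so the conditional characteristic function of $V_n^{-1/2}S_b/\gamma=\sum_{j=1}^b(1-\gamma)^{b-j}W_j$, with $W_j:=\sqrt{m}V_n^{-1/2}[H_n(\hat\theta_n)]^{-1}G_m^{(j)}(\hat\theta_n)$, factorizes; substituting the expansion of Assumption~\ref{ass:A4} gives $\prod_{j=1}^b\exp\!\big(-\tfrac{1}{2}(1-\gamma)^{2(b-j)}\|\tau\|_2^2\big)\big(1+r_m((1-\gamma)^{b-j}\tau)/m^{\beta}\big)$. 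The exponential part equals $\exp\!\big(-\tfrac{1}{2}\|\tau\|_2^2\,\tfrac{1-(1-\gamma)^{2b}}{1-(1-\gamma)^2}\big)\to\exp\!\big(-\tfrac{1}{2}\|\tau\|_2^2/(1-(1-\gamma)^2)\big)$ as $b\to\infty$, while the product of remainder factors is $1+O(m^{-\beta})$ because $\|r_m((1-\gamma)^{b-j}\tau)\|\le C_\psi(1-\gamma)^{\kappa(b-j)}\|\tau\|_2^\kappa$ and $\sum_{l\ge 0}(1-\gamma)^{\kappa l}<\infty$. Hence $V_n^{-1/2}S_b\overset{d^\star}{\to}\mathcal{N}(0,\phi(\gamma)I_d)$, so $\sqrt{n}(\tilde\theta_b-\hat\theta_n)$ is conditionally asymptotically $\mathcal{N}(0,V_n)$; since $V_n\overset{p}{\to}V$, a first-order Taylor expansion of $h$ (quadratic remainder $O_{p^\star}(1/n)=o_{p^\star}(n^{-1/2})$ by Lipschitz continuity of $\nabla h$ and Lemma~\ref{lem:cv_stochastic} with $p=2$) yields $\sqrt{n}(h(\tilde\theta_b)-h(\hat\theta_n))\overset{d^\star}{\to}\mathcal{N}(0,\sigma_h^2)$, $\sigma_h^2:=\nabla h(\theta^\dagger)V\nabla h(\theta^\dagger)^\prime>0$ under Assumption~\ref{ass:A5}. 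This is exactly the delta-method limit of $\sqrt{n}(h(\hat\theta_n)-h(\theta^\dagger))$ guaranteed by Assumption~\ref{ass:A5}~i, so, taking $c_{h,b}$ to be the empirical quantile over the (post burn-in) draws of the centered adjusted values $h(\tilde\theta_b)-\overline{h}_B$ from Algorithm~\ref{algo:REI} — which, $\theta_b^\star$ being a stable autoregression and $\theta_b$ uniformly close to it, converges as the number of draws grows to the corresponding quantile of the common limiting conditional law — the standard quantile-convergence argument gives $\sqrt{n}\,c_{h,b}(\alpha)\overset{p}{\to}\sigma_h\Phi^{-1}(\alpha)$, and Slutsky's theorem together with symmetry of the normal yields $\mathbb{P}_n[c_{h,b}(\alpha/2)\le h(\theta^\dagger)-h(\hat\theta_n)\le c_{h,b}(1-\alpha/2)]\to(1-\alpha/2)-\alpha/2=1-\alpha$.

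For part ii, write $h(\theta_b)=h(\hat\theta_n)+\nabla h(\hat\theta_n)(\theta_b-\hat\theta_n)+R_b$ with $|R_b|\le\tfrac{L}{2}\|\theta_b-\hat\theta_n\|_2^2$ ($L$ the Lipschitz constant of $\nabla h$). Since the $G_m^{(j)}(\hat\theta_n)$ are conditionally independent, mean zero, with $\text{var}^\star[G_m^{(j)}(\hat\theta_n)]=\Sigma_n/m$, the unrolled expression for $\theta_b^\star$ gives immediately $\text{var}^\star(\theta_b^\star)=\gamma^2[H_n(\hat\theta_n)]^{-1}\big(\sum_{j=1}^b(1-\gamma)^{2(b-j)}\big)\tfrac{\Sigma_n}{m}[H_n(\hat\theta_n)]^{-1}=\tfrac{\phi(\gamma)}{m}\big(1-(1-\gamma)^{2b}\big)V_n$, hence $\tfrac{m}{\phi(\gamma)}\text{var}^\star(\theta_b^\star)=V_n+o(1)$ as $b\to\infty$. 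Proposition~\ref{prop:coupling} with $p=4$ gives $\mathbb{E}^\star\|\theta_b-\theta_b^\star\|_2^2=O(m^{-2}+\overline{\rho}^{\,2b})$ (up to the bounded $d_{0,n}$ factors), so Cauchy--Schwarz upgrades the above to $\tfrac{m}{\phi(\gamma)}\text{var}^\star(\theta_b)=V_n+o(1)$; and Lemma~\ref{lem:cv_stochastic} with $p=4$ gives $m\,\mathbb{E}^\star\|\theta_b-\hat\theta_n\|_2^4=O(m^{-1}+m(1-\overline{\gamma})^{4b})=o(1)$, whence $\tfrac{m}{\phi(\gamma)}\text{var}^\star(R_b)\to 0$ and, again by Cauchy--Schwarz, $\tfrac{m}{\phi(\gamma)}|\text{cov}^\star(\nabla h(\hat\theta_n)(\theta_b-\hat\theta_n),R_b)|\to 0$. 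Collecting terms, $\tfrac{m}{\phi(\gamma)}\text{var}^\star[h(\theta_b)]=\nabla h(\hat\theta_n)\big[\tfrac{m}{\phi(\gamma)}\text{var}^\star(\theta_b)\big]\nabla h(\hat\theta_n)^\prime+o(1)=\nabla h(\hat\theta_n)V_n\nabla h(\hat\theta_n)^\prime+o(1)$, using boundedness of $V_n$ (from $\Sigma_n$ bounded and Assumption~\ref{ass:A1}~i) and $\nabla h(\hat\theta_n)=O_p(1)$.

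I expect the main obstacle to be the joint limit in part i: controlling the product of the $b$ Edgeworth factors from Assumption~\ref{ass:A4} so that the accumulated remainder stays $O(m^{-\beta})$ even as $b\to\infty$ simultaneously with $m$ — this is where the geometric down-weighting $(1-\gamma)^{b-j}$ must be made to interact with the polynomial bound $\|r_m(\tau)\|\le C_\psi\|\tau\|_2^\kappa$ through summability of $(1-\gamma)^{\kappa l}$, and it is also where $\log(m)/b\to 0$ is indispensable (it is exactly what kills $\sqrt{m}(1-\gamma)^b$ in the starting-value and coupling terms, both in part i and in part ii). A secondary, more bookkeeping, difficulty is reconciling the conditional-on-the-sample statements (valid in $P^\star$-probability along sample sequences) with the unconditional coverage probability, since $V_n$, $\Sigma_n$, and $\nabla h(\hat\theta_n)$ are all sample-dependent and converge only in probability; this is handled by combining conditional weak convergence with $V_n\overset{p}{\to}V$, $\nabla h(\hat\theta_n)\overset{p}{\to}\nabla h(\theta^\dagger)$, and Slutsky's theorem.
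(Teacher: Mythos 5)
Your proposal is correct and follows essentially the same route as the paper: part i is exactly the paper's Lemma \ref{lem:asym_normal} (coupling via Proposition \ref{prop:coupling}, killing $\sqrt{m}\,\overline{\rho}^{\,b}$ with $\log(m)/b\to 0$, the convolution of the Assumption \ref{ass:A4} characteristic-function factors with the geometric down-weighting controlling the accumulated remainder) followed by the delta method, quantile convergence, and Slutsky; part ii uses the same $p=4$ coupling bound, Cauchy--Schwarz variance-difference argument, and direct computation of the AR(1) variance as the paper, differing only in the order in which the linearization of $h$ and the $\theta_b\mapsto\theta_b^\star$ substitution are performed. No gaps.
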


\subsection{Implications for \rnr\, and \rqn}  \label{sec:corro}
Having the general estimation and inference results for Algorithm \ref{algo:REI}, the following verifies the implementation-specific Assumptions \ref{ass:A2} iv, \ref{ass:A3} to show that Theorems \ref{th:average_estimation}-\ref{th:asym_norm} apply to \rnr\, and \rqn. For \rqn, it is assumed, without loss of generality, that $\underline{\lambda} = \underline{\lambda}_H/2$.\footnote{The proof only relies on $\underline{\lambda} < \lambda{\min}(H_n(\theta))$ around $\hat\theta_n$.}
\begin{corollary}[Estimation and Inference with \rnr\, and \rqn] \label{coro:rnrqn} Consider an implementation of Algorithm \ref{algo:REI} using either \rnr\, or \rqn, i.e. the choice of $P_b$ described in Algorithms \ref{algo:rnr} and \ref{algo:rqn}. 
      Suppose Assumptions \ref{ass:A1} i-ii and \ref{ass:A2} i-iii hold with $p \geq 2$, then Assumption \ref{ass:A2} iv holds for the same $p$. If the assumptions for Lemma \ref{lem:cv_stochastic} also hold, then Assumption \ref{ass:A3} holds with the same $p$ and $\overline{P}_m = [H_n(\hat\theta_n)]^{-1}$. Then, under the additional assumptions required for each set of results, Theorems \ref{th:average_estimation} and \ref{th:asym_norm} apply to \rnr\, and \rqn.
\end{corollary}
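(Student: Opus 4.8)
The plan is to verify the implementation-specific assumptions for the two choices of $P_b$ in turn, and then invoke the general theorems. Consider first Assumption \ref{ass:A2} iv. For \rnr\, we have $P_b = [H_m^{(b+1)}(\theta_b)]^{-1}$, so the eigenvalue bounds on $P_b$ follow directly from inverting the bounds in Assumption \ref{ass:A2} iii: $\underline{\lambda}_P = 1/\overline{\lambda}_H$ and $\overline{\lambda}_P = 1/\underline{\lambda}_H$, with probability $1$, for any $p$. For \rqn\, the conditioning matrix is $P_b = (\hat H_b^\prime \hat H_b + \tau_b I_d)^{-1/2}$, where $\tau_b \in \{0, \underline{\lambda}^2\}$ is chosen precisely so that $\lambda_{\min}(\hat H_b^\prime \hat H_b + \tau_b I_d) \geq \underline{\lambda}^2$; this gives the upper bound $\lambda_{\max}(P_b) \leq 1/\underline{\lambda}$ automatically, no matter how ill-conditioned the least-squares fit $\hat H_b = Y_b^\prime S_b (S_b^\prime S_b)^{-1}$ is. The lower bound requires an upper bound on $\lambda_{\max}(\hat H_b^\prime \hat H_b)$, which I would get by noting that each row $y_j = H_m^{(\cdot)}(\cdot) s_j$ has norm at most $\overline{\lambda}_H$ (since $\|s_j\|_2 = 1$ and the resampled Hessians are uniformly bounded by Assumption \ref{ass:A2} iii) while the while-loop guarantees $\lambda_{\min}(S_b^\prime S_b) \geq \underline{\lambda}_S$, so $\|\hat H_b\|_2 \leq \|Y_b\|_2 \|S_b\|_2 \|(S_b^\prime S_b)^{-1}\|_2$ is bounded by a constant depending only on $L, \overline{\lambda}_H, \underline{\lambda}_S$; combined with $\tau_b \leq \underline{\lambda}^2$ this yields a finite $\overline{\lambda}_P$-equivalent upper bound on $\hat H_b^\prime\hat H_b + \tau_b I_d$, hence a positive lower bound on $P_b$. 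All of this holds with probability $1$ and uniformly in $b$, so Assumption \ref{ass:A2} iv holds for the same $p$.

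Next, Assumption \ref{ass:A3}. Both \rnr\, and \rqn\, have $\overline{P}_m = [H_n(\hat\theta_n)]^{-1}$ and hence $\Psi_n = I_d - \gamma [H_n(\hat\theta_n)]^{-1} H_n(\hat\theta_n) = (1-\gamma) I_d$, so part i is immediate: $\lambda_{\max}(\Psi_n) = 1-\gamma \in [0,1)$ since $\gamma \in (0,1]$. The substance is part ii, the bound $[\mathbb{E}^\star\|I - P_b [\overline{P}_m]^{-1}\|_2^p]^{1/p} \leq C_5(\rho^b d_{0,n} + 1/\sqrt{m})$. For \rnr, $I - P_b[\overline{P}_m]^{-1} = I - [H_m^{(b+1)}(\theta_b)]^{-1} H_n(\hat\theta_n) = [H_m^{(b+1)}(\theta_b)]^{-1}\big(H_m^{(b+1)}(\theta_b) - H_n(\hat\theta_n)\big)$, which I would split as $H_m^{(b+1)}(\theta_b) - H_n(\theta_b)$ (controlled by Assumption \ref{ass:A2} ii, giving an $O(m^{-1/2})$ term in $L_p$) plus $H_n(\theta_b) - H_n(\hat\theta_n)$ (controlled by the Lipschitz condition Assumption \ref{ass:A1} ii, giving $C_1\|\theta_b - \hat\theta_n\|_2$); then apply Lemma \ref{lem:cv_stochastic} to bound $[\mathbb{E}^\star\|\theta_b - \hat\theta_n\|_2^p]^{1/p}$ by $(1-\overline{\gamma})^b d_{0,n} + C_4/(\overline{\gamma}\sqrt{m})$, after pulling out the uniformly bounded factor $\|[H_m^{(b+1)}(\theta_b)]^{-1}\|_2 \leq 1/\underline{\lambda}_H$. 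This delivers part ii with $\rho = 1-\overline{\gamma}$.

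For \rqn, the same decomposition applies but the extra work is controlling the least-squares approximation error $\hat H_b - H_n(\hat\theta_n)$ and the subsequent symmetrization/regularization step $\hat H_b \mapsto (\hat H_b^\prime \hat H_b + \tau_b I_d)^{-1/2}$. I expect this to be the main obstacle. The idea is: each regression row satisfies $y_j = H_m^{(b_j+1)}(\theta_{b_j}) s_j$ for some past iterate, so $\hat H_b = Y_b^\prime S_b(S_b^\prime S_b)^{-1}$ is the least-squares fit of iterate-and-resample-dependent Hessians; writing $y_j = H_n(\hat\theta_n) s_j + \big(H_m^{(b_j+1)}(\theta_{b_j}) - H_n(\hat\theta_n)\big) s_j$, the first piece reproduces $H_n(\hat\theta_n)$ exactly under least squares (since $H_n(\hat\theta_n) S_b^\prime$ regressed on $S_b$ returns $H_n(\hat\theta_n)$), and the residual piece has $L_p$-norm bounded — using $\lambda_{\min}(S_b^\prime S_b) \geq \underline{\lambda}_S$ for conditioning — by a constant times $\max_{j} [\mathbb{E}^\star\|H_m^{(\cdot)}(\theta_{b_j}) - H_n(\hat\theta_n)\|_2^p]^{1/p}$, and each such term is again $O(m^{-1/2}) + C_1 (1-\overline{\gamma})^{b-L} d_{0,n} + O(m^{-1/2})$ by the split above (note the $L$-step lag is absorbed into the constant since $L$ is fixed). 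Finally, since $H_n(\hat\theta_n)$ is symmetric positive definite with eigenvalues in $[\underline{\lambda}_H, \overline{\lambda}_H]$, for $m$ large the event $\lambda_{\min}(\hat H_b^\prime \hat H_b) > \underline{\lambda}^2 = (\underline{\lambda}_H/2)^2$ holds with high $\mathbb{E}^\star$-probability, so $\tau_b = 0$ typically and $(\hat H_b^\prime \hat H_b)^{-1/2}$ is a smooth (Lipschitz, on the relevant spectral range) function of $\hat H_b$ near $H_n(\hat\theta_n)$; a first-order perturbation bound then transfers the $\hat H_b - H_n(\hat\theta_n)$ estimate to $\|I - P_b [H_n(\hat\theta_n)]\|_2 = \|I - (\hat H_b^\prime \hat H_b)^{-1/2} H_n(\hat\theta_n)\|_2$, handling the rare $\tau_b \neq 0$ event by the crude uniform bounds from the previous paragraph (whose contribution to the $L_p$-norm is $O(1/\sqrt{m})$ by Markov's inequality applied to $\|H_m^{(\cdot)} - H_n\|$). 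With Assumptions \ref{ass:A2} iv and \ref{ass:A3} established and $\overline{P}_m = [H_n(\hat\theta_n)]^{-1}$, Proposition \ref{prop:coupling} applies, and then Theorems \ref{th:average_estimation} and \ref{th:asym_norm} follow verbatim once their remaining hypotheses (boundedness of $\Sigma_n$, Assumptions \ref{ass:A4}--\ref{ass:A5}) are assumed, which completes the corollary.
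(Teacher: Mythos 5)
Your proposal is correct and follows essentially the same route as the paper's proof: for \rnr\, you invert the eigenvalue bounds of Assumption \ref{ass:A2} iii to get Assumption \ref{ass:A2} iv, write $I_d - P_b[\overline{P}_m]^{-1} = [H_m^{(b+1)}(\theta_b)]^{-1}(H_m^{(b+1)}(\theta_b)-H_n(\hat\theta_n))$, split into resampling noise (Assumption \ref{ass:A2} ii) plus the Lipschitz drift $C_1\|\theta_b-\hat\theta_n\|$ (Assumption \ref{ass:A1} ii), and close with Lemma \ref{lem:cv_stochastic}; for \rqn\, your key observation that the least-squares fit reproduces $H_n(\hat\theta_n)$ exactly, so that $\hat H_b - H_n(\hat\theta_n)$ is the regression of the residuals $r_{b-j}=[H_m^{(b+1-j)}(\theta_{b-j})-H_n(\hat\theta_n)]s_{b-j}$ on $S_b$, is precisely the paper's argument, including the absorption of the $L$-step lag into the constant.

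The one place where you genuinely diverge is the transfer from $\hat H_b$ to $P_b=(\hat H_b'\hat H_b+\tau_b I_d)^{-1/2}$. The paper does this deterministically: it applies the Ando--Hemmen inequality for matrix square roots to bound $\|[\hat H_b'\hat H_b+\tau_b^2 I_d]^{1/2}-[H_n(\hat\theta_n)'H_n(\hat\theta_n)]^{1/2}\|_F$ by a multiple of $\|\hat H_b-H_n(\hat\theta_n)\|_F$ plus a $\tau_b$ term, and then controls $\tau_b$ itself via Lipschitz continuity of $x\mapsto\max(x,0)$ and a Hoffman--Wielandt-type eigenvalue perturbation bound, again in terms of $\|\hat H_b-H_n(\hat\theta_n)\|_F$. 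Your alternative --- local Lipschitzness of the inverse square root on the event $\tau_b=0$, plus a Markov bound times a crude uniform bound on the exceptional event --- also works, but is slightly less tight as written: on the event $\{\tau_b\neq 0\}$ the Markov bound on $\mathbb{P}^\star(\|\hat H_b-H_n(\hat\theta_n)\|\geq \underline{\lambda}_H/2)$ contributes a term of order $m^{-1/2}+\overline{\rho}^{\,b}d_{0,n}$ (not just $O(1/\sqrt{m})$ as you state), since for small $b$ the initial-value drift dominates; this is harmless because Assumption \ref{ass:A3} ii allows exactly such a $\rho^b d_{0,n}$ term, but it should be stated. The paper's deterministic inequality avoids the event split entirely and handles $\tau_b\neq 0$ uniformly, which is why it is the cleaner choice here.
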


Consistency of standard errors implies that they can be used to compute t-statistics, confidence intervals of the form $h(\overline{\theta}_{\textsc{re}}) \pm 1.96 \sqrt{m/[n\phi(\gamma)] \widehat{\text{var}}[h(\theta_b)]}$, where $\widehat{\text{var}}[h(\theta_b)]$ is the sample variance of $h(\theta_b)$, after discarging the first $\textsc{burn} \gg \log(m)$ iterations. A simple rule-of-thumb to gauge the order of magnitude for $\textsc{burn}$ is that $(1-\overline{\gamma}) \geq (1-\gamma)$ and the initial value bias is of order $(1-\overline{\gamma})^b d_{0,n}$. Suppose, we target a bias of order $d_{0,n}/100$, then we need $b > \log(1/100)/\log(1-\overline{\gamma}) \geq \log(1/100)/\log(1-\gamma) \simeq 43$ for $\gamma=0.1$. This implies that the burn-in should include at least $40-50$ draws. A burn-in of $225$ draws corresponds to setting $\overline{\gamma}=0.02$ in the same calculations.

The quasi-Newton update proposed in Algorithm \ref{algo:rqn} differs from \textsc{bfgs} and other methods in the literature to ensure the following three properties are satisfied: i) $P_b$ is symmetric, and ii) positive definite with strictly positive eigenvalues, iii) Assumption \ref{ass:A3} holds when $\|\theta_b - \hat\theta_n\|$ is small. Conditions i-ii) are required to ensure optimization is stable and apply Lemma \ref{lem:cv_stochastic}. Condition iii) is needed for valid inference. Other methods do not satisfy i), ii), and iii) simultaneously. Amongst the most widely used methods, \textsc{bfgs} enforces i) and ii) plus the so-called secant equation: $P_b y_{k} = s_k$ so that $P_b H_n(\theta_k)s_{k} = s_k$ always holds but is only guaranteed in this direction so that iii) need not hold for non-quadratic $Q_n$.\footnote{See \citet{ren1983} for counter-examples. The BFGS matrix $P_k$ can have negative eigenvalues when $Q_n$ is non-convex. This is problematic for convergence as illustrated in Section \ref{sec:saddle}.} The \textsc{sr1} update, used with trust-region algorithms, satisfies i) and iii) but not ii) which makes it unstable in estimation using (\ref{eq:gradient-rule}).\footnote{See \citet{fiacco1990}, \citet{Conn1991} for derivations of iii) with SR1.} Both methods update $P_k$ from $P_{k-1}$ using a rank-one update based on the secant equation above. Here $P_b$ is fully updated using OLS, this is related to the multi-secant update in \citet{schnabel:83} which can satisfy iii) but typically not i-ii), and requires $L \leq d_{\theta}$.

In the proof of the Corollary, the OLS estimate $\hat H_b$ is shown to satisfy iii) but not necessarily i)-ii). The transformation  $\hat H_b^\prime \hat H_b$ enforces i), and the regularization $\tau_b$ enforces ii).\footnote{These two transformations can also be useful for \nr\, and \rnr\, as illustrated in Section \ref{sec:discussion}.} Algorithm \ref{algo:rqn} requires three inputs. At least $L \geq d_\theta$ secant updates are needed to compute $\hat H_b$ by OLS; $L \geq \max(25,1.5 \times d_\theta)$ works well in the examples. The cutoff $\underline{\lambda}_S$ ensures the least-squares problem is well-conditionned. With $\underline{\lambda}_S=10^{-6}$, the while loop never runs in the applications, but it should be included as a safeguard. The requirement for $\underline{\lambda}$ is to be small enough to satisfy $0<\underline{\lambda} < \lambda_{\min}(H_n)$ around $\hat\theta_n$.

\section{Three Examples} \label{sec:examples}
The following provides one Monte-Carlo and two empirical examples which illustrate the properties of \rnr\, and \rqn\, for estimation and inference.

\subsection{Example 1: Dynamic Discrete Choice with Unobserved Heterogeneity} \label{sec:DDCh}

The first example considers likelihood estimation and inference in a single-agent dynamic discrete choice model with unobserved heterogeneity.\footnote{See e.g. \citet{aguirregabiria2010} and \citet{arcidiacono-ellickson:11} for reviews on the estimation of dynamic discrete choice models.} It is common in this class of models to use a parametric mixture distribution to model heterogeneity since it is parsimonious relative to a more computationally demanding non-linear fixed effect estimation. There are two challenges in this example. First, the likelihood is non-convex because of the unobserved mixture component. Second, the non-linear transformation used to integrate out the unobserved heterogeneity results in finite sample bias and sizable distortion for inference. Analytical bias correction can be challenging to implement, because the terms are not easily tractable. Standard bootstrap bias correction would be computationally intensive. Using a large-T linearization of the bias, it is shown that the split panel jackknife of \citet{dhaene2015} can reduce this bias, and straightforward to use with \rqn.\footnote{See Appendix \ref{apx:add_ex} for the derivations under simplifying assumptions.} A comparison of \rnr\, and \rqn\, with MLE and the standard bootstrap for a simpler homogeneous agents model are given in Appendix \ref{apx:add_ex1}, Table \ref{tab:homDDC}.

The setting is similar to the seminar model of \citet{rust:87}, each agent $i$ solves a dynamic programming problem at each period $t$ with a flow utility for the set of actions $a_{it}=0$ or $1$:
\[ U_{i0}(x_{it},a_{it-1}) = \varepsilon_{it}(0), \quad U_{i1}(x_{it},a_{it-1}) = \beta_{0i} + \beta_{1i}x_{it} -(1-a_{it-1})\delta_1 + \varepsilon_{it}(1), \]
where $\delta_1$ is the entry cost, $\varepsilon_{it}(0)$ and $\varepsilon_{it}(1)$ follow an extreme-value type I distribution, $x_{it}$ is an exogenous regressor which has finite support and follows a first-order Markov process with transition probability $\Pi$. Both $a_{it}$ and $x_{it}$ are observed for $i=1,\dots,n$ and $t=1,\dots,T$. Here the heterogeneity is modeled on the intercept and slope parameters of the utility function. The intercept $\beta_{0i}$ equals $\mu_{0}^1$ with probability $\omega \in [0,1]$, and $\mu_{0}^2$ with probability $1-\omega$. The slope is continuously distributed $\beta_{1i} \sim \omega \mathcal{N}(\mu_1,\sigma_1^2) + (1-\omega)\mathcal{N}(\mu_2,\sigma_2^2)$. The parameters of interest are $\theta=(\mu_0^1,\mu_1^1,\mu_0^2,\mu_1^2,\sigma_1,\sigma_2,\omega,\delta_1)$, the discount factor $\rho=0.98$ is fixed. The matrix $\Pi$ is estimated by sample average and taken as input in the log-likelihood below:
\[ \ell_{nT}(\theta,\Pi) = \frac{1}{nT}\sum_{i=1}^n \log \left( \int \exp[\sum_{t=1}^T \ell_{it}(\theta,\Pi,\beta_i)]f(\beta_i|\theta)d\beta_i \right), \]
where $\ell_{it}(\theta,\Pi,\beta_i) = \log[\mathbb{P}(a_{it}|x_{it},a_{it-1},\theta,\Pi,\beta_i)]$ is the conditional probability of choice $a_{it} \in \{0,1\}$ computed by solving the dynamic programming problem by fixed point iterations, and $f(\cdot|\theta)$ is the mixture distribution for $\beta_i = (\beta_{0i},\beta_{1i})$.

The split panel jacknife is implemented using a full sample implementation of \rqn\, $\theta_{b,nT}$ and two half panel implementations $\theta_{b,nT/2}^1,\theta_{b,nT/2}^2$ based on the first and last $T/2$ time-observations each. All three use the same $m=n/2$ out of $n$ resampled observations at each $b$. Bias-corrected draws are given by $\tilde{\theta}_{b,nT}=2\theta_{b,nT}-[\theta_{b,nT/2}^1+\theta_{b,nT/2}^2]/2$. Table \ref{tab:DDC_het} summarizes the results from $400$ Monte Carlo replications. Full panel \rqn\, estimates (denoted as \rqn) are close to $\theta^\dagger$. Some coefficients have a bias comparable in magnitude to sampling uncertainty. Inference for these coefficient is far from the nominal $5\%$ size using either quantiles or standard errors. The split panel correction (\rqn-bc) reduces bias by an order of magnitude and bias-corrected inference is much closer to nominal size. Table \ref{tab:DDC_het_50} provides additional results with $T=50$ and a comparison with R's \textsc{bfgs} optimizer. In that configuration, \textsc{bfgs} estimates are very inaccurate, and far less reliable than \rqn.  
\begin{table}[ht] \caption{Dynamic Discrete Choice Model with Heterogeneity} \label{tab:DDC_het}
      \centering
      \begin{tabular}{l|cccccccc}
        \hline \hline
       & $\mu_0^1$ & $\mu_1^1$ & $\mu_0^2$ & $\mu_1^2$ & $100 \sigma_1$ & $100 \sigma_2$ & $\omega$ & $\delta_1$ \\ 
        \hline
        $\theta^\dagger$ & -2.000 & 0.300 & -1.000 & 0.900 & 1.000 & 1.000 & 0.300 & 1.000 \\ \hline
        & \multicolumn{8}{c}{Average Estimates}
        \\\hline
        \rqn & -2.033 & 0.307 & -0.982 & 0.903 & 1.015 & 1.154 & 0.302 & 0.976 \\ 
        \rqn-bc & -2.019 & 0.307 & -0.991 & 0.896 & 1.006 & 1.046 & 0.302 & 1.001 \\ 
        \hline
        & \multicolumn{8}{c}{Standard Deviation}
        \\\hline 
        \rqn & 0.044 & 0.013 & 0.020 & 0.009 & 0.127 & 0.181 & 0.012 & 0.018 \\ 
        \rqn-bc & 0.043 & 0.013 & 0.021 & 0.011 & 0.146 & 0.155 & 0.013 & 0.019 \\ \hline
        & \multicolumn{8}{c}{Rejection Rate}
        \\ \hline
        \rqn & 0.315 & 0.025 & 0.030 & 0.041 & 0.036 & 0.330 & 0.028 & 0.122 \\ 
        \rqn-bc & 0.025 & 0.030 & 0.028 & 0.030 & 0.030 & 0.033 & 0.028 & 0.046 \\ 
        \rqn-bc$_{se}$ & 0.023 & 0.033 & 0.030 & 0.033 & 0.028 & 0.033 & 0.030 & 0.041 \\ 
         \hline \hline
      \end{tabular}\\
      {\footnotesize Legend: \rqn-bc = split panel bias-corrected \rqn; \rqn, \rqn-bc = quantile-based CIs, \rqn-bc$_{se}$  = std-error based CIs. $\gamma=0.1,n=1000,T=100,m=n/2, B=2000, \textsc{burn}=250,$ nominal level $=5\%.$}
\end{table}

\subsection{Example 2: A Probit with Many Regressors}
It is common to use MCMC methods for large-scale estimation problems. The second example compares the properties of \rnr\, and \rqn\, with a Metropolis-Hastings algorithm called MALA (Metropolis-adjusted Langevin Algorithm). The idea is to compare the convergence and mixing properties of the draws when there are hundreds of parameters to estimate.\footnote{A description of and implementation details for MALA are given in the Appendix.}

The model is a simple Probit using the data from \citet{helpman2008}. There are 10 regressors plus 324 exporter and importer fixed effects for a total of 334 parameters to be estimated using 248,060 country/year observations. \rnr\, and \rqn\, are implemented with $\gamma = 0.1$ by re-weighting with Gaussian multiplier weights. Implementation details for the different algorithms are given in the Appendix. While \rqn\, accepts all draws by design, MALA only accepts 57\% of them which makes the Markov-chain more persistent as illustrated in Figure \ref{fig:BigProbit} below. The top panel shows the faster convergence \rnr\, and \rqn. After discarding the initial burn-in, the autocorrelation coefficients are $0.889$ for \rnr\, and \rqn; $0.995$ for MALA, which indicates better mixing.\footnote{Note that $0.995^{25} \simeq 0.889$ this is why $25$ times more draws are generated for MCMC than \rqn.} The bottom panel shows that \rnr\, and \rqn\, draws are more localized around $\hat\theta_{n}$ than MCMC. Indeed, $\gamma=0.1$ implies $\phi(\gamma)=19$: the variance of $\theta^{\rqn}_b$ is 19 times smaller than $\theta^{\text{\textsc{mcmc}}}_b$.
 
Table \ref{tab:BigProbit} in the Appendix further reports estimates and standard errors for the 10 regressors and the intercept. MLE and its standard errors are computed using R's \textit{glm} function and \textit{vcovHC} for robust standard errors. Results for \rqn\, are close to MLE for all 11 coefficients. Bayesian posterior standard deviations tend to underestimate. Clustered re-weighting with exponential weights are also reported for comparison, it only requires modifying the line of code generating the weights. In terms of computation, the likelihood, gradient, and hessian are computed in C++ using Rcpp for performance. R's built-in \textit{glm} takes about 2min for estimation using iteratively reweighted least-squares, \rqn\, 4min, \rnr\, 10min, and MCMC 1h30min. As for most applications of gradient-based optimization on large datasets with many parameters, a good implementation of the code computing the gradient and Hessian is crutial for fast computation: unoptimized code runs is about 2 hours for \rqn, and 1.5 days for MCMC. Results for \sgd\, and an infeasible implementation of \snr, which relies on the full sample $H_n(\hat\theta_n)$, are reported in the Appendix. Even after 5 million iterations ($>2$ hours), \sgd\, is still far from the solution $\hat\theta_n$. 

\begin{figure}[h] \caption{Probit - Comparison of \rnr, \rqn\, and \textsc{mcmc}} \label{fig:BigProbit}
\includegraphics[scale=0.55]{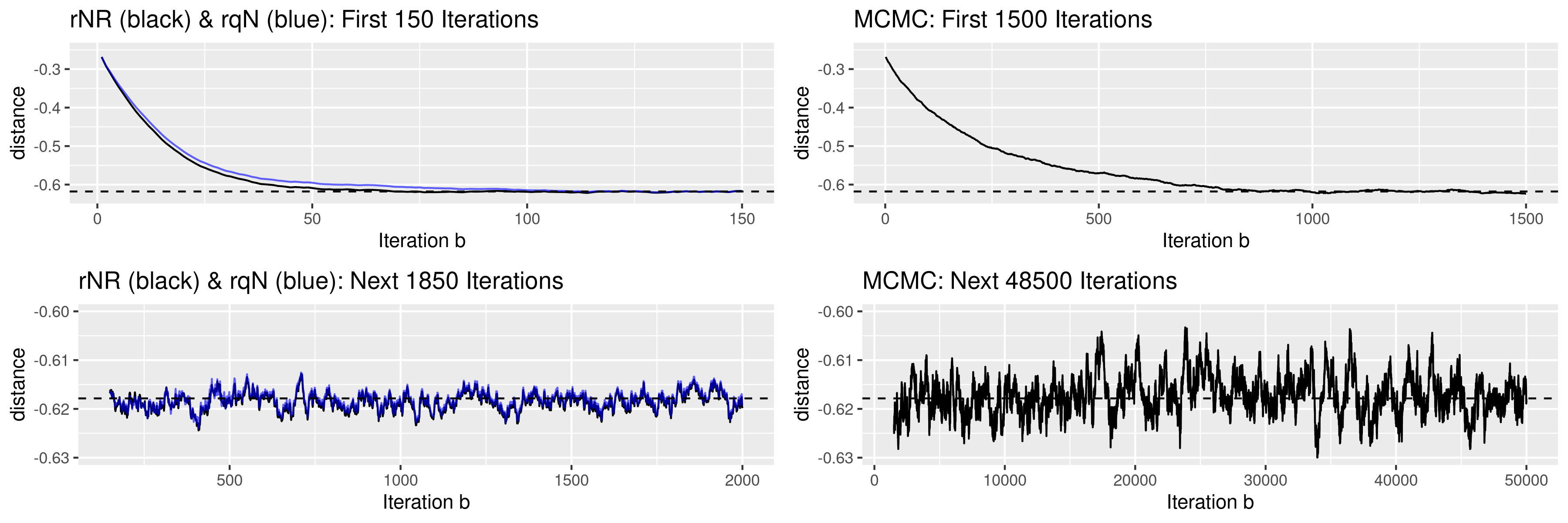}\\
{\footnotesize Legend: Solid line: \rnr (black), \rqn (blue), and \textsc{mcmc} (black) draws. Dashed line: full sample MLE estimate.}
\end{figure}

\subsection{Example 3: NLS Estimation of Transportation Costs}

The third and final example replicates an estimation in \citet{donaldson2018} of the transportation costs of salt in India between 1861-1930. The model is a simple NLS regression using the log-price of salt of type $o$ at destination $d$ in year $t$:
\[ \log(\text{price}_{dt}) = \beta_0 + \delta \log( \text{LCRED}(R_t,\alpha)_{odt} ) + \text{controls} + \varepsilon_{odt}. \]
The main regressor $\text{LCRED}(R_t,\alpha)_{odt}$, measures the lowest-cost route effective distance between origin $o$ and destination $d$ in year $t$ given the transportation network $R_t$ and the transportation costs $\alpha = (\alpha_{\text{rail}},\alpha_{\text{road}},\alpha_{\text{river}},\alpha_{\text{coast}})$ for each mode of transportation for the salt products.\footnote{The normalization $\alpha_{\text{rail}}=1$ is used so that all costs are relative to rail.} The regressor $\text{LCRED}(R_t,\alpha)_{odt}$ is not observed directly. For a given value $\alpha$, it is constructed using Dijkstra’s shortest-path algorithm applied to each triplet $o,d,t$ to find the most cost-effective route along the transportation network $R_t$. The parameters of interest are transportation costs $\alpha$, inferred from the regression and the price elasticity $\delta$.

This particular example is interesting for two reasons. First, the model is fairly difficult and time-consuming to estimate. \citet{donaldson2018} uses a grid search for both estimation and bootstrap inference in a 100 core cluster environment.\footnote{The estimation relied on a coarse grid $(\alpha_{\text{road}},\alpha_{\text{river}},\alpha_{\text{coast}}) \in [1,10]^3$ with step size $0.125$ for a total of 373248 grid points, and bootstrap inference was conducted using 200 replications on a coarser grid with step size 0.5 totalling in 5832 grid points.} \rnr\, runs in less than 6 hours on a desktop computer.  Second, the model is potentially unidentified for several values. This is the case if $\delta=0$, or for $\delta \neq 0$ when a transportation mode systematically dominates the others, e.g. if $\alpha_{\text{road}}$ is small and $\alpha_{\text{river}},\alpha_{\text{coast}}$ are large. The specific values for which this occurs depend on the structure of the transportation network $R_t$. A quadratic penalty is added to the objective: $\text{pen}(\theta) = \lambda ( \|\alpha-\bar{\alpha}\|_2^2 + \delta^2)$, with $\lambda = 0.1/n$ and $\bar{\alpha} = (\bar{\alpha}_{\text{road}},\bar{\alpha}_{\text{river}},\bar{\alpha}_{\text{coast}}) = (4.5,3.0,2.25)$ are observed historical relative
freight rate estimates reported in \citet[p916]{donaldson2018}. Notice that if e.g. $\alpha_{\text{river}}$ is unidentified, then $\partial_{\alpha_{\text{river}}}Q_m^{(b)}(\theta)=0$ around $\hat\theta_n$ so that only the penalty determines the path of $\alpha_{b,\text{river}}$, and, after convergence, the draws are degenerate $\alpha_{b,\text{river}}=\bar{\alpha}_{\text{river}}$. With a penalty, degeneracy is indicative that the coefficient is unindentified. This is not the case for a Bayesian posterior.

\begin{figure}[ht] \caption{Transportations Costs: Estimation and Inference using \rnr} \label{fig:Rails}
      \includegraphics[scale=0.55]{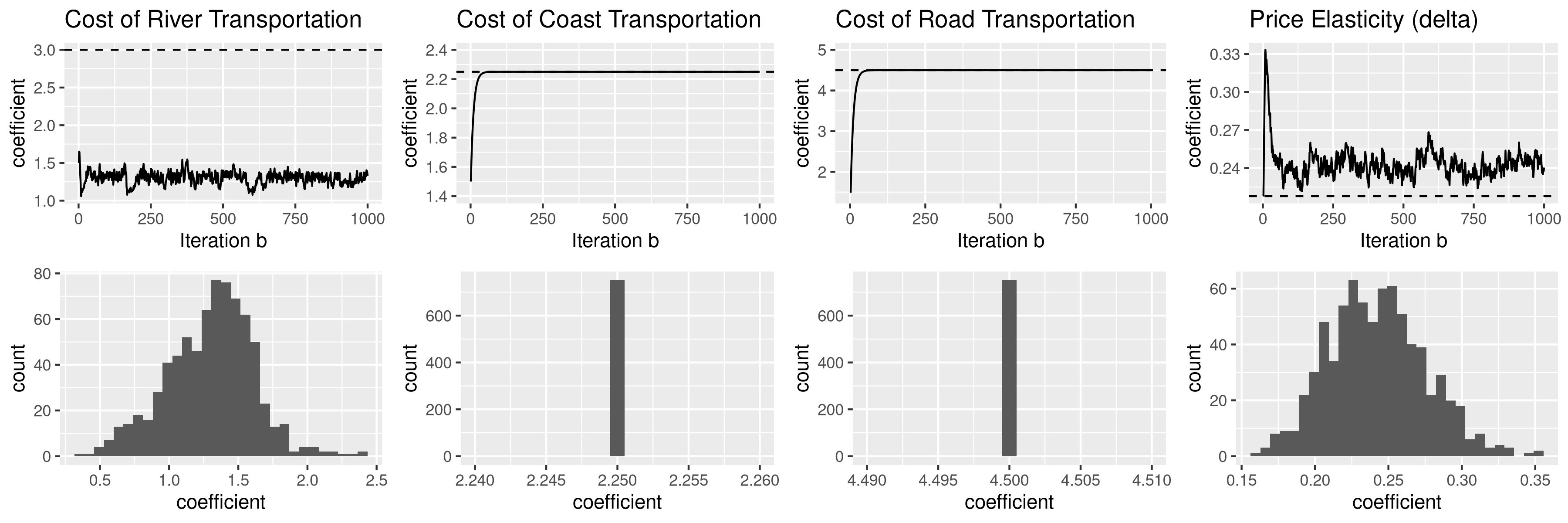}\\
      {\footnotesize Legend: Top panel: solid line \rnr\, draws for transportation costs $\alpha$ and price elasticity $\delta$. Dashed line: average historical costs $\bar{\alpha}_j$ and \textsc{ols} estimate $\hat\delta_{\textsc{ols}}(\bar{\alpha})$. Bottom panel: distribution after discarding the first $b=75$ iterations, and adjusting for $\phi(\gamma)$. }
\end{figure}
Figure \ref{fig:Rails} shows the draws and the adjusted \rnr\, distribution for $(\alpha,\delta)$. Baseline OLS results evaluated for $\alpha= \bar{\alpha}$ are reported with bootstrap standard errors clustered at the district level -- indexed by $d$. There are also $142$ origin/destination fixed effects, a linear time trend and the intercept, totalling in $147$ parameters to be estimated using 7345 observations.\footnote{\citet{donaldson2018} includes origin/time fixed effects instead of a linear time trend but the design matrix is singular with these controls in the baseline OLS regression. Hence, they are replaced with the linear trend.}  \rnr\, converges quickly in about 75 iterations. 1000 draws are produced in 5h40m on a desktop computer with 7000 calls to Dijkstra’s algorithm, the main computational bottleneck.\footnote{For reference, a grid search using a small set of 19 equally spaced points for each $\alpha_j$ requires $19^3=6859$ calls to the algorithm.} The bottom panel shows the distribution of the draws after discarding the first 75 iterations, and adjusting for $\phi(\gamma)$. The distributions for coast and road are degenerate, suggesting they are not identified. The results are not sensitive to the choice of $\theta_0$ and $\lambda$. Table \ref{tab:Rails} reports estimates and standard errors with and without fixed effects. The price elasticity is very similar to the one estimated by OLS when fixing $\alpha=\bar{\alpha}$. The estimated coefficient $\alpha_{\text{river}}$ is less than half the historical average $\bar{\alpha}_{\text{river}}$. The null that river transportation is more cost effective than rail, $\alpha_{\text{river}}<1$, cannot be rejected at the 5\% significance level.

\begin{table}[H]  \caption{Transportations Costs: Estimates and Standard Errors} \label{tab:Rails}
      \begin{center}
      \begin{tabular}{l|c|cccc||c|cccc}
      \hline \hline
      & \multicolumn{1}{c|}{\textsc{ols}} & \multicolumn{4}{c||}{\rnr}& \multicolumn{1}{c|}{\textsc{ols}} & \multicolumn{4}{c}{\rnr}\\
      \hline
      & $\delta$ & $\alpha_{\text{river}}$ & $\alpha_{\text{coast}}$ & $\alpha_{\text{road}}$ & $\delta$ & $\delta$ & $\alpha_{\text{river}}$ & $\alpha_{\text{coast}}$ & $\alpha_{\text{road}}$ & $\delta$ \\ 
      \hline
      Estimates &  0.22 & 1.29 & 2.25 & 4.50 & 0.24 & 0.22 & 1.86 & 2.25 & 4.50 & 0.22 \\ 
      Standard Errors &  0.02 & 0.34 & 0.00 & 0.00 & 0.03 & 0.04 & 0.61 & 0.00 & 0.00 & 0.04\\
      \hline
      Controls? & \multicolumn{5}{c||}{Yes} & \multicolumn{5}{c}{No}\\
      \hline \hline
      \end{tabular}\end{center}
      {\footnotesize Legend: \textsc{ols}: estimated using $\alpha = \bar{\alpha}$, the observed historical relative freight rate estimates. \rnr: $\gamma=0.1$. Both: standard errors computed with Gaussian multiplier reweighting, clustered at the district level. Controls: origin/destination fixed effects.}
\end{table}

\section{Conclusion}
This paper proposes Algorithms that produce estimates, bootstrap confidence intervals and standard errors in a single run. While the theory relies on convexity of the sample and resample objectives, Appendix \ref{sec:discussion} discusses implementation and convergence in some non-convex settings. In Examples 1 and 3, \rnr\, and \rqn\, perform well without convexity. Extending the Algorithms to handle general non-convex problems would be a useful avenue for future research. \citet{Forneron2022} shows how to modify a local gradient-based algorithm to design a fast globally convergent one without using multiple starts. This could also be very useful for sampling, as considered here. 

\newpage
\baselineskip=12.0pt
\bibliographystyle{ecta}
\bibliography{refs}

\begin{appendices}
      \renewcommand\thetable{\thesection\arabic{table}}
      \renewcommand\thefigure{\thesection\arabic{figure}}
      \renewcommand{\theequation}{\thesection.\arabic{equation}}
      \renewcommand\thelemma{\thesection\arabic{lemma}}
      \renewcommand\thetheorem{\thesection\arabic{theorem}}
      \renewcommand\thedefinition{\thesection\arabic{definition}}
        \renewcommand\theassumption{\thesection\arabic{assumption}}
      \renewcommand\theproposition{\thesection\arabic{proposition}}
        \renewcommand\theremark{\thesection\arabic{remark}}
        \renewcommand\thecorollary{\thesection\arabic{corollary}}
\setcounter{equation}{0}
\setcounter{lemma}{0}
\clearpage \baselineskip=18.0pt
\appendix
\section{Preliminary Results} \label{apx:inter}

\begin{lemma}[Taylor Approximation in $L_p$-norm] \label{lem:lips_res} Suppose Assumptions \ref{ass:A1}-\ref{ass:A2} hold for $p \geq 2$ then for any $\theta \in \Theta$:
\[ \left[ \mathbb{E}^\star \left( \| G_m^{(b)}(\theta) - G_m^{(b)}(\hat\theta_n) - H_n(\hat\theta_n)(\theta-\hat\theta_n) \|_2^p \right) \right]^{1/p} \leq \frac{C_3}{\sqrt{m}}\|\theta-\hat\theta_n\|_2 + C_1 \|\theta-\hat\theta_n\|_2^2. \]
\end{lemma}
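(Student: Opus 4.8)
The plan is to express the left-hand side through the integral form of the mean-value theorem and bound the resulting integrand using Assumptions \ref{ass:A1} ii and \ref{ass:A2} ii. Since $Q_m^{(b)}$ is twice differentiable (being a finite average of the twice-differentiable $q(z_i^{(b)};\cdot)$, with $H_m^{(b)}$ the average of the $\nabla^2 q$), the fundamental theorem of calculus gives, writing $\theta_t = \hat\theta_n + t(\theta - \hat\theta_n)$,
\[ G_m^{(b)}(\theta) - G_m^{(b)}(\hat\theta_n) = \left( \int_0^1 H_m^{(b)}(\theta_t)\, dt \right)(\theta - \hat\theta_n). \]
Subtracting $H_n(\hat\theta_n)(\theta - \hat\theta_n)$ and inserting $\pm H_n(\theta_t)$ inside the integral yields
\[ G_m^{(b)}(\theta) - G_m^{(b)}(\hat\theta_n) - H_n(\hat\theta_n)(\theta - \hat\theta_n) = \left( \int_0^1 \big( [H_m^{(b)}(\theta_t) - H_n(\theta_t)] + [H_n(\theta_t) - H_n(\hat\theta_n)] \big)\, dt \right)(\theta - \hat\theta_n). \]

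Next I would take operator norms inside the integral. The first bracket is bounded, pointwise in $t \in [0,1]$, by $\sup_{\theta\in\Theta}\|H_m^{(b)}(\theta) - H_n(\theta)\|_2$; the second by $C_1\|\theta_t - \hat\theta_n\|_2 = C_1 t\|\theta - \hat\theta_n\|_2 \le C_1\|\theta - \hat\theta_n\|_2$, using $t\le 1$ and Assumption \ref{ass:A1} ii. Pulling out the factor $\|\theta - \hat\theta_n\|_2$, which is a deterministic constant under $\mathbb{E}^\star$ (both $\theta$ and $\hat\theta_n$ are fixed conditional on $(z_1,\dots,z_n)$), we obtain
\[ \big\| G_m^{(b)}(\theta) - G_m^{(b)}(\hat\theta_n) - H_n(\hat\theta_n)(\theta - \hat\theta_n) \big\|_2 \le \Big( \sup_{\theta\in\Theta}\|H_m^{(b)}(\theta) - H_n(\theta)\|_2 \Big)\|\theta - \hat\theta_n\|_2 + C_1 \|\theta - \hat\theta_n\|_2^2. \]

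Finally I would apply the $L_p(\mathbb{E}^\star)$ norm for $p \ge 2$ together with Minkowski's inequality to split the right-hand side: the second term is deterministic and passes through unchanged, while the first is controlled by Assumption \ref{ass:A2} ii, $[\mathbb{E}^\star(\sup_{\theta\in\Theta}\|H_m^{(b)}(\theta)-H_n(\theta)\|^p)]^{1/p}\le C_3 m^{-1/2}$, giving exactly $\tfrac{C_3}{\sqrt m}\|\theta-\hat\theta_n\|_2 + C_1\|\theta-\hat\theta_n\|_2^2$. There is no substantive obstacle here; the only points needing a little care are (i) justifying the integral remainder form under the stated smoothness, handled by the finite-sum structure of $Q_m^{(b)}$; (ii) noting that $\|\theta - \hat\theta_n\|_2$ factors out of $\mathbb{E}^\star$; and (iii) that the supremum over $\Theta$ dominates $H_m^{(b)}(\theta_t) - H_n(\theta_t)$ uniformly in $t$ and is measurable — which is precisely why Assumption \ref{ass:A2} ii is phrased with a sup over $\Theta$.
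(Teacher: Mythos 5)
Your proof is correct and follows essentially the same route as the paper's: both decompose the remainder into a term controlled by $\sup_{\theta\in\Theta}\|H_m^{(b)}(\theta)-H_n(\theta)\|_2$ (bounded via Assumption \ref{ass:A2} ii) and a term controlled by the Lipschitz continuity of $H_n$ (Assumption \ref{ass:A1} ii). The only difference is that you use the integral form of the Taylor remainder where the paper invokes pointwise mean-value expansions with intermediate values $\tilde\theta_n$ — your version is, if anything, slightly more careful for the vector-valued gradient, but it is the same argument.
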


\begin{lemma}[Asymptotic Normality] \label{lem:asym_normal} Suppose the Assumptions for Theorem \ref{th:asym_norm} i. hold then  as $m,n \to \infty$ with $\log(m)/b \to 0$: $\sqrt{\frac{m}{\phi(\gamma)}} V_n^{-1/2} (\theta_b - \hat\theta_n) \overset{d^\star}{\to} \mathcal{N}(0,I_d).$
\end{lemma}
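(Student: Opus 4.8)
The plan is to reduce, via the coupling of Proposition \ref{prop:coupling}, to a central limit theorem for the linear process $\theta_b^\star$ of (\ref{eq:lin}), and then to establish the latter by a direct characteristic-function computation built on Assumption \ref{ass:A4}. Since the hypotheses of Theorem \ref{th:asym_norm}(i) include $\overline{P}_m = [H_n(\hat\theta_n)]^{-1}$, the autoregressive coefficient reduces to $\Psi_n = (1-\gamma)I_d$, so unrolling (\ref{eq:lin}) gives
\[ \theta_b^\star - \hat\theta_n = (1-\gamma)^b(\theta_0 - \hat\theta_n) - \gamma \sum_{j=0}^{b-1}(1-\gamma)^j [H_n(\hat\theta_n)]^{-1} G_m^{(b-j)}(\hat\theta_n), \]
a geometrically weighted sum of the conditionally i.i.d.\ innovations $G_m^{(k)}(\hat\theta_n)$ plus a deterministic starting-value term.

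First I would dispose of the two non-Gaussian pieces. Proposition \ref{prop:coupling} with $p=2$ gives $\mathbb{E}^\star(\|\theta_b - \theta_b^\star\|_2) \le C_6\big(1/m + \overline{\rho}^{\,b}[d_{0,n}+d_{0,n}^2]\big)$; multiplying by $\sqrt{m/\phi(\gamma)}$ produces a bound of order $1/\sqrt m + \sqrt m\,\overline{\rho}^{\,b}[d_{0,n}+d_{0,n}^2]$, and since $\log(m)/b\to 0$ forces $b/\log m\to\infty$ while $\overline{\rho}<1$, one has $\sqrt m\,\overline{\rho}^{\,b}\to 0$; with $d_{0,n} = \|\theta_0-\hat\theta_n\|_2 = O_p(1)$ this is $o_{p^\star}(1)$ by the conditional Markov inequality. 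The same reasoning, using $(1-\gamma)\le 1-\overline{\gamma}\le\overline{\rho}$, kills the starting-value term $\sqrt{m/\phi(\gamma)}\,(1-\gamma)^b(\theta_0-\hat\theta_n)$. By Slutsky it then suffices to show $W_b \overset{d^\star}{\to}\mathcal{N}(0,I_d)$, where $W_b := -\sum_{j=0}^{b-1}c_j\,\eta_{b-j}$, $c_j := (\gamma/\sqrt{\phi(\gamma)})(1-\gamma)^j$, and $\eta_k := \sqrt m\,V_n^{-1/2}[H_n(\hat\theta_n)]^{-1}G_m^{(k)}(\hat\theta_n)$; indeed $\sqrt{m/\phi(\gamma)}\,V_n^{-1/2}(\theta_b^\star-\hat\theta_n)$ equals $W_b$ plus the just-discarded term.

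Then I would compute the conditional characteristic function of $W_b$. By conditional independence of the $\eta_k$ and the expansion in Assumption \ref{ass:A4} (which reads $\varphi_m(\tau):=\mathbb{E}^\star\exp(\mathbf{i}\tau^\prime\eta_k)=\exp(-\|\tau\|_2^2/2)(1+r_m(\tau)/m^\beta)$), one gets $\mathbb{E}^\star\exp(\mathbf{i}\tau^\prime W_b) = \exp\!\big(-\tfrac12\|\tau\|_2^2\sum_{j=0}^{b-1}c_j^2\big)\prod_{j=0}^{b-1}\big(1+r_m(c_j\tau)/m^{\beta}\big)$. The decisive algebraic identity is $\sum_{j=0}^{b-1}c_j^2 = \frac{\gamma^2}{\phi(\gamma)}\,\frac{1-(1-\gamma)^{2b}}{1-(1-\gamma)^2} = 1-(1-\gamma)^{2b}\to 1$, which is precisely where the $\phi(\gamma)$ normalization cancels, so the Gaussian factor tends to $\exp(-\|\tau\|_2^2/2)$. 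For the remainder product, the bound $|r_m(c_j\tau)|\le C_\psi c_j^\kappa\|\tau\|_2^\kappa$ together with the geometric decay of the $c_j$ gives $\sum_{j=0}^{b-1}|r_m(c_j\tau)| \le C_\psi\|\tau\|_2^\kappa (\gamma/\sqrt{\phi(\gamma)})^\kappa / (1-(1-\gamma)^\kappa)$, uniformly in $b$; hence $\sum_{j=0}^{b-1}|r_m(c_j\tau)|/m^\beta\to 0$, and taking logarithms (using $|\log(1+x)-x|\le x^2$ for small $x$) the product tends to $1$. Thus $\mathbb{E}^\star\exp(\mathbf{i}\tau^\prime W_b)\to\exp(-\|\tau\|_2^2/2)$, and the conditional L\'evy continuity theorem gives $W_b\overset{d^\star}{\to}\mathcal{N}(0,I_d)$; combining with the two $o_{p^\star}(1)$ terms yields the claim. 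Throughout, $V_n$, $H_n(\hat\theta_n)$, $\Sigma_n$ are data-dependent but, under Assumption \ref{ass:A1} and the hypotheses of Theorem \ref{th:asym_norm}, bounded and bounded away from singularity, so the conditioning causes no trouble.

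I expect the main obstacle to be the control of the product $\prod_{j=0}^{b-1}(1+r_m(c_j\tau)/m^\beta)$ of Edgeworth-type correction factors as the number of factors $b$ grows without bound: a naive bound is $O(b/m^\beta)$, which need not vanish, so the argument must exploit the geometric decay of the AR(1) weights $c_j$ to keep $\sum_j|r_m(c_j\tau)|$ bounded uniformly in $b$. A secondary delicate point is that only $\log(m)/b\to 0$ is assumed — just strong enough to make $\sqrt m\,\overline{\rho}^{\,b}$ and $\sqrt m\,(1-\gamma)^b$ vanish — so the two negligibility steps use this rate exactly.
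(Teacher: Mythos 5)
Your proposal is correct and follows essentially the same route as the paper's proof: coupling to $\theta_b^\star$ via Proposition \ref{prop:coupling} with the $\sqrt{m}\,\overline{\rho}^{\,b}\to 0$ argument from $\log(m)/b\to 0$, unrolling the AR(1) recursion with $\Psi_n=(1-\gamma)I_d$, and a conditional characteristic-function computation in which the geometric sum $\gamma^2[1-(1-\gamma)^{2b}]/\{[1-(1-\gamma)^2]\phi(\gamma)\}\to 1$ cancels the $\phi(\gamma)$ normalization while the Edgeworth remainder product is controlled by the geometric decay of the weights, exactly as in the paper. The obstacle you flag (the product of $b$ correction factors) is the same one the paper handles, and by the same summability argument.
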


\section{Proofs for the Main Results} \label{apx:proofs}
\paragraph{Proof of Lemma \ref{lem:cv_stochastic}:} For any $\theta_b$, we have $\thetabf-\hat\theta_n = \thetab-\hat\theta_n-\gamma \pb G_n(\thetab) + \gamma \pb[G_n(\thetab)-G_m^{(b)}(\thetab)]$. The first term is the classical update covered in Lemma \ref{lem:cv_non_stochastic}, the last term is uniformly bounded in $L_p$-norm using Assumption \ref{ass:A2} i. and \ref{ass:A2}' iv.
Taking the $\|\cdot\|_2$ norm on both sides and applying the triangular inequality that:
\begin{align*}
      \|\thetabf-\hat\theta_n\|_2 &\leq \|\thetab-\hat\theta_n-\gamma \pb G_n(\thetab)\|_2 +\gamma\overline{\lambda}_P \left( \sup_{\theta\in\Theta} \|G_n(\thetab)-G_m^{(b)}(\thetab)\|_2 \right)\\
      &\leq (1-\overline{\gamma})\| \thetab - \hat\theta_n \|_2 +\gamma\overline{\lambda}_P \left( \sup_{\theta\in\Theta} \|G_n(\thetab)-G_m^{(b)}(\thetab)\|_2 \right),
\end{align*}
where the last inequality comes from Lemma \ref{lem:cv_non_stochastic}. Taking expectations on both sides:
\begin{align*}
      \left[ \mathbb{E}^\star \left( \|\thetabf-\hat\theta_n\|_2^p\right)\right]^{1/p}
      &\leq (1-\overline{\gamma}) \left[ \mathbb{E}^\star \left( \| \thetab - \hat\theta_n \|_2^p \right) \right]^{1/p} +\frac{\gamma\overline{\lambda}_P C_3}{\sqrt{m}}\\
      &\leq (1-\overline{\gamma})^b \left[ \mathbb{E}^\star \left( \| \theta_0 - \hat\theta_n \|_2^p \right) \right]^{1/p} +\frac{\gamma\overline{\lambda}_P C_3}{\overline{\gamma}\sqrt{m}},
\end{align*}
where the last inequality follows from iterating on the contraction and the identity $\sum_{j=0}^{+\infty} (1-\overline{\gamma})^j = 1/\overline{\gamma}$. The desired result is then obtained using $C_4 = \gamma\overline{\lambda}_P C_3$.\qed

\paragraph{Proof of Proposition \ref{prop:coupling}:}
From the definition of $\thetab$ and $\thetab^\star$, their difference is:
\begin{align*}
      \thetabf-\thetabf^\star &= \left(\thetab - \gamma \pb G_m^{(b+1)}(\thetab) \right) - \left( \hat\theta_n + \Psi_n(\thetab^\star - \hat\theta_n) - \gamma \overline{P}_m G_m^{(b+1)}(\hat\theta_n) \right) \nonumber\\
      &= \Psi_n(\thetab-\thetab^\star)  + (I_d-\Psi_n)(\thetab-\hat\theta_n)  -\gamma \left( \pb G_m^{(b+1)}(\thetab) - \overline{P}_m G_m^{(b+1)}(\hat\theta_n) \right)\\
&= \Psi_n(\thetab-\thetab^\star)+\gamma\overline{P}_m H_n(\hat\theta_n)[\thetab-\hat\theta_n]-
\gamma \left( \pb G_m^{(b+1)}(\thetab) - \overline{P}_m G_m^{(b+1)}(\hat\theta_n) \right)
\end{align*}
where the third equality follows from the fact that $I_d-\Psi_n = \gamma\overline{P}_m H_n(\hat\theta_n)$.  
Now note that by construction,
\begin{align}
      \gamma &\left( \pb G_m^{(b+1)}(\thetab) - \overline{P}_m G_m^{(b+1)}(\hat\theta_n) \right)  -\gamma \overline{P}_m H_n(\hat\theta_n)[\thetab - \hat\theta_n] \nonumber \\ &\quad = \gamma\overline{P}_m \left( G_m^{(b+1)}(\thetab)-G_m^{(b+1)}(\hat\theta_n)- H_n(\hat\theta_n)[\thetab - \hat\theta_n]\right) \label{eq:d3}
      \\ &\quad +\gamma \left( \pb - \overline{P}_m\right) \left( G_m^{(b+1)}(\thetab)-G_m^{(b+1)}(\hat\theta_n) \right). \label{eq:d4}
\end{align}
Assumption \ref{ass:A3} i. implies that $\overline{P}_m = \gamma^{-1}(I_d-\Psi_n)[H_n(\hat\theta_n)]^{-1}$ has bounded eigenvalues, without loss of generality assume $\underline{\lambda}_P \leq \lambda_{\min}(\overline{P}_m) \leq \lambda_{\max}(\overline{P}_m) \leq \overline{\lambda}_P$\footnote{This is satisfied for \rnr\, and \rqn\, with $(\underline{\lambda}_P,\overline{\lambda}_P)=(\underline{\lambda}_H,\overline{\lambda}_H)$.} Together with Lemma \ref{lem:lips_res} this implies that conditional on $P_b,\theta_b$:
\begin{align*}
\left[ \mathbb{E}^\star(\| (\ref{eq:d3}) \|^{p/2}_2 | \theta_b) \right]^{2/p} &\leq \gamma \overline{\lambda}_P\left[\frac{C_3}{\sqrt{m}}\|\theta_b-\hat\theta_n\|_2 + C_1\|\theta_b-\hat\theta_n\|_2\right],
\end{align*}
the conditioning is possible because resampling and reweighting are done independently between iterations $b$ and $b+1$. Now using the law of iterated expecations and Lemma \ref{lem:cv_stochastic}:
\begin{align*}
      \left[ \mathbb{E}^\star(\| (\ref{eq:d3}) \|^{p/2}_2 ) \right]^{2/p} &\leq \gamma \overline{\lambda}_P \left[\frac{C_3}{\sqrt{m}}\left[ \mathbb{E}^\star(\|\theta_b-\hat\theta_n\|_2^{p/2}\right]^{2/p} + C_1\left[ \mathbb{E}^\star(\|\theta_b-\hat\theta_n\|^p_2)\right]^{2/p}\right]\\
      &\leq \gamma \overline{\lambda}_P\left[\frac{C_3}{\sqrt{m}}\left( (1-\overline{\gamma})^b d_{0,n} + \frac{C_4}{\overline{\gamma}\sqrt{m}} \right) + C_1\left( (1-\overline{\gamma})^b d_{0,n} + \frac{C_4}{\overline{\gamma}\sqrt{m}} \right)^2\right].
      \end{align*}
By Assumptions \ref{ass:A2} iii., \ref{ass:A2}', \ref{ass:A3} ii., Lemma \ref{lem:cv_stochastic},  mean-value theorem, and  Cauchy-Schwarz inequality, 
\begin{align*}
      \left[ \mathbb{E}^\star \left( \| (\ref{eq:d4}) \|^{p/2}_2\right) \right]^{2/p}& \leq \gamma\left[ \mathbb{E}^\star \left( \|\pb-\overline{P_m}\|_2^p \right) \right]^{1/p} \left[ \mathbb{E}^\star \left( \| H_m^{(b+1)}(\tilde{\theta}_b)(\thetab - \hat\theta_n)\|_2^p \right) \right]^{1/p}\\
      &\leq \gamma\overline{\lambda}_HC_5 \left( \overline{\rho}^b d_{0,n} + \frac{1}{\sqrt{m}} \right)  \left( (1-\overline{\gamma})^{b+1} d_{0,n} + \frac{C_5}{\overline{\gamma}\sqrt{m}} \right),
\end{align*}
where $\tilde \theta_b$ is some intermediate value between $\thetab$ and $\hat \theta_n$, and an upper bound defined in terms of $\bar\rho = \max[1-\overline{\gamma},\rho,\lambda_{\max}(\Psi_n)]<1$ to simplify notation.

The two bounds lead to the following recursion on the coupling distance:
\begin{align*}
      \left[ (\mathbb{E}^\star \left( \|\thetabf - \thetabf^\star \|_2^{p/2}\right)  \right]^{2/p} &\leq \overline{\rho} \left[\mathbb{E}^\star ( \|\thetab - \thetab^\star \|_2^{p/2} ) \right]^{2/p} + \left[ (\mathbb{E}^\star (\| (\ref{eq:d3}) \|_2^{p/2}) \right]^{2/p} + \left[\mathbb{E}^\star (\| (\ref{eq:d4}) \|_2^{p/2} ) \right]^{2/p}\\
      &\leq \overline{\rho} \left[\mathbb{E}^\star ( \|\thetab - \thetab^\star \|_2^{p/2} ) \right]^{2/p} + C_{B_{1,2}}\left(\overline{\rho}^b [d_{0,n}+d_{0,n}^2] + \frac{1}{m}\right)\\
      &\leq \frac{C_{B_{1,2}}}{1-\overline{\rho}} \left( \overline{\rho}^b [d_{0,n}+d_{0,n}^2] + \frac{1}{m}\right),
\end{align*}
where $C_{B_{1,2}}$ is a constant which depends on the terms used to bound (\ref{eq:d3}) and (\ref{eq:d4}).   Recall that $\theta_0 = \theta_0^\star$ so that the coupling distance is zero for $b=0$. Using  $C_6 = C_{B_{1,2}}/(1-\overline{\rho})$ yields the desired result.\qed

\paragraph{Proof of Theorem \ref{th:average_estimation}:}
To bound the distance $\mathbb{E}^\star\left( \| \overline{\theta}_{\FL}^\star - \hat\theta_n \|_2 \right)$, we use the recursive representation (\ref{eq:lin}) given below:
\begin{align}
      \thetabf^\star - \hat\theta_n = \Psi_n (\thetab^\star - \hat\theta_n) - \gamma \overline{P}_m G_m^{(b+1)}(\hat\theta_n), \quad \theta_0^\star = \theta_0.\tag{\ref{eq:lin}}
\end{align}
It can be re-written as:
\begin{align}
      \thetabf^\star - \hat\theta_n = \Psi_n^{b} (\theta_0 - \hat\theta_n) - \gamma \sum_{j=0}^b \Psi_n^{j} \overline{P}_m G_m^{(b+1-j)}(\hat\theta_n),\tag{\ref{eq:lin}'}
\end{align}
and take the average $\overline{\theta}_{\FL}^\star = 1/B \sum_{b=1^B}\theta^\star_b$:
\begin{align*}
      \overline{\theta}_{\FL}^\star- \hat\theta_n &=  \frac{1}{B} \sum_{b=1}^B\Psi_n^{b-1}(\theta_0-\hat\theta_n) -\gamma \frac{1}{B} \sum_{b=1}^B \sum_{j=0}^{b-1} \Psi(\hat\theta_n)^j\overline{P}_m G_m^{(b-j)}(\hat\theta_n).
\end{align*}
Assumption \ref{ass:A3} i. implies that $\|\Psi(\hat\theta_n)^{b}(\theta_0-\hat\theta_n)\|_2 \leq \overline{\rho}^{b}\|\theta_0-\hat\theta_n\|_2$, so the first term is less than $\frac{d_{0,n}}{(1-\overline{\rho})B}$ in expectation. Consider now the second term. Using the relationship between $\|\cdot\|_2$ and the Frobenius $\|\cdot\|_F$ norms, \citep[][p7]{bhatia2013}, the zero-mean condition (Assumption \ref{ass:A2} i.), and the conditional iid assumption, we have:
\begin{align*}
      \mathbb{E}^\star \left( \|\frac{1}{B} \sum_{b=1}^B \sum_{j=0}^{b-1} \Psi_n^j\overline{P}_m G_m^{(b-j)}(\hat\theta_n)\|_2^2 \right) &\leq \mathbb{E}^\star \left( \|\frac{1}{B} \sum_{b=1}^B \sum_{j=0}^{b-1} \Psi_n^j\overline{P}_m G_m^{(b-j)}(\hat\theta_n)\|_F^2 \right)\\
      & = \text{trace}\left( \text{var}^\star \left[ \frac{1}{B} \sum_{b=1}^B \sum_{j=0}^{b-1} \Psi_n^j\overline{P}_m G_m^{(b-j)}(\hat\theta_n) \right] \right)\\
      &= \frac{1}{B^2} \sum_{b=1}^B  \text{trace}\left( \text{var}^\star \left[ \sum_{j=0}^{B-b} \Psi_n^{j} \overline{P}_m G_m^{(b)}  \right] \right)\\
      &= \frac{1}{B^2 m} \sum_{b=1}^B  \text{trace}\left( \left[\sum_{j=0}^{B-b}\Psi_n^{j} \right] \overline{P}_m \Sigma_n  \overline{P}_m^\prime \left[ \sum_{j=0}^{B-b}\Psi_n^{j} \right]^\prime \right)\\
      &\leq \frac{\overline{\lambda}_P^2 \text{trace}(\Sigma_n)}{(1-\overline{\rho})^2} \frac{1}{mB}, \end{align*}
where $\Sigma_n = m \times \text{var}^\star[G_m^{(b)}(\hat\theta_n)]$ is bounded by assumption.
Putting the two inequalities together, we have the first result:
\[ \mathbb{E}^\star \left( \| \overline{\theta}_{\FL} - \hat\theta_n \|_2 \right) \leq \frac{d_{0,n}}{(1-\overline{\rho})B} + \frac{\overline{\lambda}_P}{1-\overline{\rho}} \frac{\sqrt{\text{trace}(\Sigma_n)}}{\sqrt{mB}}. \]
The second result is then derived using this together with Proposition \ref{prop:coupling}:
\begin{align*}
      \mathbb{E}^\star \left( \| \overline{\theta}_{\FL} - \hat\theta_n \|_2 \right) 
      &\leq \mathbb{E}^\star \left( \| \overline{\theta}^\star_{\FL} - \hat\theta_n \|_2 \right) + \mathbb{E}^\star \left( \| \overline{\theta}_{\FL} - \overline{\theta}_{\FL}^\star \|_2 \right)\\
      &\leq \mathbb{E}^\star \left( \| \overline{\theta}^\star_{\FL} - \hat\theta_n \|_2 \right) + \frac{1}{B} \sum_{b=1}^B  \mathbb{E}^\star \left( \| \thetab - \thetab^\star \|_2 \right)\\
      &\leq \frac{\overline{\lambda}_P \sqrt{\text{trace}(\Sigma_n)}}{1-\overline{\rho}} \frac{1}{\sqrt{mB}} + C_6 \left( \frac{1}{m} + \frac{d_{0,n} + d_{0,n}^2}{(1-\overline{\rho})B} \right),
\end{align*}
which is a $o(\frac{1}{\sqrt{n}})$ when $\frac{\sqrt{n}}{\min(m,B)} \to 0$ and $d_{0,n} = O(1)$. This implies the second result:
\[ \sqrt{n}( \overline{\theta}_{\FL} - \theta^\dagger ) = \sqrt{n}( \hat\theta_n - \theta^\dagger ) + o_p^\star(1). \]
\qed
\paragraph{Proof of Theorem \ref{th:asym_norm}:} First, note that Lemma \ref{lem:asym_normal} implies that $\sqrt{m} V_n^{-1/2} (\tilde{\theta}_b - \hat\theta_n) \overset{d^\star}{\to} \mathcal{N}(0,I_d)$. Under Assumption \ref{ass:A5} and Lemma \ref{lem:cv_stochastic} we have:
\begin{align*}
      \frac{h(\tilde{\theta}_b) - h(\hat\theta_n)}{\sqrt{\nabla h(\hat\theta_n) V_n \nabla h(\hat\theta_n)^\prime/n}} =\frac{\nabla h(\hat\theta_n)(\tilde{\theta}_b - \hat\theta_n)}{\sqrt{\nabla h(\hat\theta_n) V_n \nabla h(\hat\theta_n)^\prime/n}} + o_p^\star(1) \overset{d^\star}{\to} \mathcal{N}\left(0,1\right).
\end{align*}
Convergence in distribution implies convergence of the CDF \citep[][p323]{billingsley2013}, which in turn implies convergence of quantiles at every point of continuity of the limiting CDF \citep[][Lem21.2]{vaart_1998}. By continuity everywhere of the normal CDF this implies that for any $\alpha \in (0,1)$:
\[ \frac{c_{h,b}(\alpha)}{\sqrt{ \nabla h(\hat\theta_n) V_n \nabla h(\hat\theta_n)^\prime/n }} -  q(\alpha) \overset{p}{\to} 0,\]
where $q(\alpha)$, $c_{h,b}(\alpha)$ are the $\alpha$-th quantile of a standard Gaussian and $\tilde\theta_b - \hat\theta_n$, respectively. Apply this to $\alpha/2$ and $1-\alpha/2$ to get the first result:
\begin{align*}
      &\mathbb{P}_n \bigg(  c_{h,b}(\alpha/2) \leq h(\theta^\dagger) - h(\hat\theta_n) \leq c_{h,b}(1-\alpha/2) \bigg)\\ &= \mathbb{P}_n \bigg(  q(\alpha/2) + o_p(1) \leq \frac{\sqrt{n}[h(\theta^\dagger)-h(\hat\theta_n)]}{\sqrt{ \nabla h(\hat\theta_n) V_n \nabla h(\hat\theta_n)^\prime}} \leq q(1-\alpha/2) + o_p(1) \bigg) \to 1-\alpha,
\end{align*}
using Assumption \ref{ass:A5} i. to get the desired limit on the right-hand-side.\\

\noindent For the second result, note that we have:\footnote{The equality comes from $\text{var}(X) = \mathbb{E}(X^2)-\mathbb{E}(X)^2$ for a random variable $X$, and the identity $a^2-b^2 = (a-b)(a+b)$ for any numbers $a,b$. The inequality follows from Cauchy-Schwarz and the triangular inequality.}
\begin{align}
      &m \times \|\text{var}^\star[h( \theta_b)] -  \text{var}^\star[h( \theta_b^\star)]\| \label{eq:subst_vars}\\ &= m \times \| \mathbb{E}^\star \left( [h(\theta_b)-h(\theta_b^\star)][h(\theta_b)-h(\hat\theta_n)+h(\theta_b^\star)-h(\hat\theta_n)] \right)  \notag\\&\quad- \mathbb{E}^\star \left( h(\theta_b)-h(\theta_b^\star)\right) \mathbb{E}^\star \left( h(\theta_b)-h(\hat\theta_n)+h(\theta_b^\star)-h(\hat\theta_n) \right) \| \notag\\
      &\leq 2m \times \left[ \mathbb{E}^\star \left( \|h(\theta_b)-h(\theta_b^\star)\|_2^2 \right) \right]^{1/2} \left( \left[ \mathbb{E}^\star \left( \|h(\theta_b)-h(\hat\theta_n)\|_2^2 \right) \right]^{1/2} + \left[ \mathbb{E}^\star \left( \|h(\theta_b^\star) - \hat\theta_n\|_2^2 \right) \right]^{1/2} \right). \notag
\end{align}    
Then using Assumption \ref{ass:A5} ii. with Proposition \ref{prop:coupling} for $p=4$:
\[  \left[ \mathbb{E}^\star \left( \|h({\theta}_b) - h({{\theta}_b}^\star)\|_2^{2} \right) \right]^{1/2} \leq \|\nabla h\|_\infty C_6 \left( \frac{1}{m} + \overline{\rho}^b [d_{0,n} + d_{0,n}^2] \right), \]
where $\|\nabla h\|_\infty = \sup_{\theta \in \Theta} \max_{j=1,\dots,d_\theta} |\partial_{\theta_j} h(\theta)|$ is the largest entry-wise element, which is finite by continuity and compactness. Also, using Assumption \ref{ass:A5} ii., Lemma \ref{lem:cv_stochastic}, we have:
\[ \left[ \mathbb{E}^\star \left( \|h(\theta_b)-h(\hat\theta_n)\|_2^2 \right) \right]^{1/2} \leq  \|\nabla h\|_\infty \left( (1-\overline{\gamma})^b d_{0,n} + \frac{C_4}{\overline{\gamma}\sqrt{m}}\right).\]
Using using Assumption \ref{ass:A5} ii., equation \ref{eq:lin} with $\Psi_n = (1-\gamma)I_d$, and Assumption \ref{ass:A2} i., we have:
\[ \left[ \mathbb{E}^\star \left( \|h(\theta_b^\star)-h(\hat\theta_n)\|_2^2 \right) \right]^{1/2} \leq  \|\nabla h\|_\infty \left( (1-\gamma)^b d_{0,n} + \frac{C_2}{\sqrt{m}}\right).\]
Putting the three inequalities back into (\ref{eq:subst_vars}), we get:
\[ (\ref{eq:subst_vars}) \leq C_{(\ref{eq:subst_vars})} \left( m^{-1/2} + m \overline{\rho}^b [d_{0,n} + d_{0,n}^2] \right), \]
where $C_{(\ref{eq:subst_vars})}$ involves $C_2,C_4,C_6,\overline{\gamma}$, and $\|\nabla h\|_\infty$. Since $m^{-1/2}$ and $m \overline{\rho}^b \to 0$, the right-hand-side is negligible. Hence, we can focus on the quantity $\text{var}^\star[h(\theta_b^\star)]$. \\

\noindent Now repeat the steps above with:
\begin{align}
      &m \times \|\text{var}^\star[h( \theta_b^\star)] -  \text{var}^\star[\nabla h( \hat\theta_n)(\theta_b^\star - \hat\theta_n)]\|. \label{eq:subst_vars2}
\end{align}  
Using the mean-value theorem, the Cauchy-Schwarz inequality, and Assumption \ref{ass:A5} ii.:
\begin{align*}
      \left[ \mathbb{E}^\star \left( \|h({\theta}_b^\star) - h(\hat\theta_n) - \nabla h(\hat\theta_n)(\theta_b^\star - \hat\theta_n)\|_2^{2} \right) \right]^{1/2} &\leq C_h \left[ \mathbb{E}^\star \left( \|\theta_b^\star - \hat\theta_n)\|_2^{4} \right) \right]^{1/2}\\
      &\leq C_h \left( (1-\gamma)^b d_{0,n} + \frac{C_2}{\sqrt{m}}\right)^2,
\end{align*}
where $C_h$ is the Lipschitz constant of the gradient $\nabla h$. The bound for $[ \mathbb{E}^\star ( \|h({\theta}_b^\star) - h(\hat\theta_n) \|_2^{2} ) ]^{1/2}$ is given above and the same upper-bound applies to $[ \mathbb{E}^\star ( \|\nabla h(\hat\theta_n)[{\theta}_b^\star-\hat\theta_n] \|_2^{2} ) ]^{1/2}$.  Putting everything together:
\begin{align*}
      (\ref{eq:subst_vars2}) \leq C_{(\ref{eq:subst_vars2})} \left( m^{-1/2} + m (1-\gamma)^b d_{0,n} \right),
\end{align*}
which is also negligible; $C_{(\ref{eq:subst_vars2})}$ depends on $C_h,\|\nabla h\|_\infty,$ and $C_2$.\\

\noindent From equation (\ref{eq:lin}) and $\text{var}^\star( \sqrt{m}[H_n(\hat\theta_n)]^{-1} G_m^{(b)}(\hat\theta_n)  ) = V_n$, is it immediate that:
\begin{align*}
      m \times \text{var}^\star \left( \nabla h(\hat\theta_n)( \theta_b^\star - \hat\theta_n - (1-\gamma)^b(\theta_0-\hat\theta_n) ) \right) &= \gamma^2 \frac{1-(1-\gamma)^{b+1}}{1-(1-\gamma)^2} \nabla h(\hat\theta_n)V_n \nabla h(\hat\theta_n)^\prime\\ &= [\phi(\gamma)+ o(1)] \nabla h(\hat\theta_n)V_n \nabla h(\hat\theta_n)^\prime .
\end{align*}
Now note that: 
\begin{align*} &\Big|m \times \text{var}^\star \left( \nabla h(\hat\theta_n)( \theta_b^\star - \hat\theta_n - (1-\gamma)^b(\theta_0-\hat\theta_n) ) \right)- m \times \text{var}^\star \left( \nabla h(\hat\theta_n)( \theta_b^\star - \hat\theta_n ) \right)\Big|  \\ &\leq 2m (1-\gamma)^{2b} d_{0,n}^2 \times   \text{var}^\star \left( \nabla h(\hat\theta_n)( \theta_b^\star - \hat\theta_n ) \right) + (1-\gamma)^{2b} d_{0,n}^2 \to 0,
\end{align*}
since $\text{var}^\star ( \nabla h(\hat\theta_n)( \theta_b^\star - \hat\theta_n ) )=O(1).$ Putting everything together, we get the desired result:
\[ \frac{m}{\phi(\gamma)} \text{var}^\star \left( h(\theta_b) \right) = \nabla h(\hat\theta_n)V_n \nabla h(\hat\theta_n)^\prime + o(1). \]
\qed

\paragraph{Proof of Corollary \ref{coro:rnrqn} for \rnr:}
Assumption \ref{ass:A2} ii. implies Assumption \ref{ass:A2}' holds with $\underline{\lambda}_P = \overline{\lambda}_H^{-1}$ and $\overline{\lambda}_P = \underline{\lambda}_H^{-1}$.\\

\noindent Let $\overline{P}_m = [H_n(\hat\theta_n)]^{-1}$, $\Psi_n = I_d - \gamma \overline{P}_m H_n(\hat\theta_n) = (1-\gamma)I_d$ so Assumption \ref{ass:A3} i. holds. Since for \rnr\, we use $P_b = [H_m^{(b+1)}(\theta_b)]^{-1}$, Assumption \ref{ass:A2}' implies:
\begin{align*}
\left[ \mathbb{E}^\star \left( \|I_d - P_b [\overline{P}_m]^{-1}\|^{p/2} \right) \right]^{2/p} &\leq \overline{\lambda}_P \left[ \mathbb{E}^\star \left( \|H_m^{(b+1)}(\theta_b) - H_n(\hat\theta_n)\|^{p/2} \right) \right]^{2/p}\\
&\leq \frac{\overline{\lambda}_PC_3}{\sqrt{m}} + \overline{\lambda}_PC_1 \left[ \mathbb{E}^\star \left( \|\theta_b - \hat\theta_n\|^{p/2} \right) \right]^{2/p},
\end{align*}
using the triangular inequality, Assumptions \ref{ass:A2} ii., and \ref{ass:A1} ii. Applying Lemma \ref{lem:cv_stochastic} into the last inequality to find:
\begin{align*}
      \left[ \mathbb{E}^\star \left( \|I_d - P_b [\overline{P}_m]^{-1}\|^{p/2} \right) \right]^{2/p} &\leq \frac{\overline{\lambda}_PC_3}{\sqrt{m}} + \overline{\lambda}_PC_1 \left( (1-\overline{\gamma})^b d_{0,n} + \frac{C_4}{\overline{\gamma}\sqrt{m}} \right).
\end{align*}
Set $\rho = (1-\overline{\gamma})$ and group terms to verify Assumption \ref{ass:A3} ii. Now all the assumptions required for Theorems \ref{th:average_estimation} and \ref{th:asym_norm} are satisfied which proves Corollary \ref{coro:rnrqn} i-ii. 
\qed

\paragraph{Proof of Corollary \ref{coro:rnrqn} for \rqn:} The proof follows the same outline as the proof of Corollary \ref{coro:rnrqn} for \rnr\, above. The main challenge is to show that Assumption \ref{ass:A2}' holds under Assumptions \ref{ass:A1}, \ref{ass:A2}. Then under the conditions of Lemma \ref{lem:cv_stochastic} establish that Assumption \ref{ass:A3} holds with $\overline{P}_m = [H_n(\hat\theta_n)]^{-1}$. Then, all the desired results will follow.\\

\noindent \textbf{1) Verifying Assumption \ref{ass:A2}'.} First, recall the quantities involved in the quasi-Newton least-squares update:
\[ P_b = (\hat H_b^\prime \hat H_b + \tau_b^2 I_d)^{-1/2}, \quad \hat H_b = Y_b^\prime S_b (S_b^\prime S_b)^{-1}, \]
with $S_b = (s_b,\dots,s_{b-L+1})^\prime$, $s_{b-j} = (\theta_{b-j}-\theta_{b-j-1})/\|\theta_{b-j}-\theta_{b-j-1}\|_2$, and $Y_b = (y_b,\dots,y_{b-L+1})^\prime$, $y_{b-j} = H_m^{(b+1-j)}(\theta_{b-j})s_{b-j}$. The modification term is $\tau_b = \max[\underline{\lambda}_H/2 - \sigma_{\min}(\hat H_b),0]$, where $\sigma_{\min}(\hat H_b)$ is the smallest singular value of $\hat H_b$.

By positive semi-definite transform, $\lambda_{\min}(\hat H_b^\prime \hat H_b) \geq 0$ and $\lambda_{\min}([\hat H_b^\prime \hat H_b + \tau_b^2 I_d]^{1/2}) \geq \underline{\lambda}_H/2$. This implies the first inequality: $\lambda_{\max}(P_b) \leq 2 \underline{\lambda}_H^{-1} < \infty$. For the lower bound, note that $\hat H_b^\prime\hat H_b = (S_b^\prime S_b)^{-1} S_b^\prime  Y_b Y_b^\prime S_b (S_b^\prime S_b)^{-1}$ from which we can derive:
\[ \lambda_{\max}(\hat H_b^\prime\hat H_b) \leq \frac{\lambda_{\max}(S_b^\prime S_b/L)}{[\lambda_{\min}(S_b^\prime S_b/L)]^2} \lambda_{\max}(Y_bY_b^\prime/L), \]
where $\lambda_{\min}(S_b^\prime S_b/L) \geq \underline{\lambda}_S >0$ is enforced in the Algorithm. By normalization of the $s_{b-j}$, $\lambda_{\max}(S_b^\prime S_b/L) \leq \text{trace}(S_b^\prime S_b/L)=1$. Similarly:
\[  \lambda_{\max}(Y_bY_b^\prime/L) \leq \text{trace}(Y_bY_b^\prime/L) = \frac{1}{L}\sum_{j=0}^{L-1} s^\prime_{b-j} H_m^{(b-j)}(\theta_{b-j}) H_m^{(b-j) \prime}(\theta_{b-j}) s_{b-j} \leq \overline{\lambda}_H^2, \]
using the upper bound in Assumption \ref{ass:A2} ii. Hence $\lambda_{\max}(\hat H_b^\prime\hat H_b) \leq [\overline{\lambda}_H/\underline{\lambda}_S]^2$ and $\lambda_{\min}(P_b) \geq ([\overline{\lambda}_H/\underline{\lambda}_S]^2 + \underline{\lambda}_H^2/4)^{-1/2}$. Altogether, Assumption \ref{ass:A2}' holds with:
\[ 0 < \underline{\lambda}_P = ([\overline{\lambda}_H/\underline{\lambda}_S]^2 + \underline{\lambda}_H^2/4)^{-1/2} \leq \lambda_{\min}(P_b) \leq \lambda_{\max}(P_b) \leq 2 \underline{\lambda}_H^{-1} = \overline{\lambda}_P < \infty. \]

\noindent \textbf{2) Verifying Assumption \ref{ass:A3}.} Given that Assumptions \ref{ass:A1}, \ref{ass:A2}, and \ref{ass:A2}' hold, Lemma \ref{lem:cv_stochastic} is satisfied for an appropriate choice of $\gamma \in (0,1]$. Let $\overline{P}_m = [H_n(\hat\theta_n)]^{-1}$, the goal is to prove that:
\[ \left[ \mathbb{E}^\star \left( \| I_d - P_bH_n(\hat\theta_n) \|_2^p \right) \right]^{1/p} \leq C_5 \left(  (1-\overline{\gamma})^b d_{0,n} + m^{-1/2} \right). \]
Using Assumption \ref{ass:A2}', the identity $H_n(\hat\theta_n) = [H_n(\hat\theta_n)^\prime H_n(\hat\theta_n)]^{1/2}$, and the Ando-Hemmen inequality,\footnote{See \citet{AndoHemmen1980} and \citet[Th6.2, p135]{higham2008}, for any two positive definite matrices $A,B$, $\|A^{1/2} - B^{1/2}\| \leq [\sqrt{\lambda_{\min}(A)}+\sqrt{\lambda_{\min}(B)}]^{-1}\|A-B\|$ where $\|\cdot\|$ is a unitarily invariant norm. Recall that $\|\cdot\|_F$ is unitarily invariant.} we have under the Frobenius norm $\|\cdot\|_F$:
\begin{align*}
      \| I_d - P_bH_n(\hat\theta_n) \|_F &\leq \overline{\lambda}_P \| [\hat H_b^\prime \hat H_b + \tau_b^2 I_d]^{1/2} - [H_n(\hat\theta_n)^\prime H_n(\hat\theta_n)]^{1/2} \|_F\\
      &\leq \overline{\lambda}_P \frac{\|\hat H_b^\prime \hat H_b - H_n(\hat\theta_n)^\prime H_n(\hat\theta_n) + \tau_b^2 I_d  \|_F}{\overline{\lambda}^{-1}_P + \underline{\lambda}_H}\\
      &\leq \overline{\lambda}_P \frac{ \sigma_{\max}(\hat H_b) + \overline{\lambda}_H }{ \overline{\lambda}^{-1}_P + \underline{\lambda}_H }\| \hat H_b - H_n(\hat\theta_n) \|_F + \overline{\lambda}_P \frac{ \sqrt{d_\theta} }{\overline{\lambda}^{-1}_P + \underline{\lambda}_H}\tau_b^2.
\end{align*}
Using $\| I_d - P_bH_n(\hat\theta_n) \|_2 \leq \| I_d - P_bH_n(\hat\theta_n) \|_F$, the desired result follows from bounding $\| \hat H_b - H_n(\hat\theta_n) \|_F$ and $\tau_b$ in $L_p$-norm.\\

\noindent First, consider $\tau_b  = \max[ \underline{\lambda}_H/2 - \sigma_{\min}(\hat H_b),0 ]$. By Lipschitz-continuity of the soft-max operator $(x)^+ = \max(x,0)$, and the identity $[\underline{\lambda}_H/2 - \lambda_{\min}(H_n(\hat\theta_n))]^+=0$, we have:
\begin{align*}
      [\underline{\lambda}_H/2 - \sigma_{\min}(\hat H_b)]^+ &= | [\underline{\lambda}_H/2 - \sigma_{\min}(\hat H_b)]^+ - [\underline{\lambda}_H/2 - \lambda_{\min}(H_n(\hat\theta_n))]^+ |\\
      &\leq |\sigma_{\min}(\hat H_b) - \lambda_{\min}(H_n(\hat\theta_n))|\\
      &\leq \frac{|\lambda_{\min}(\hat H_b^\prime \hat H_b) -\lambda_{\min}(H_n(\hat\theta_n)^\prime H_n(\hat\theta_n))|}{\sigma_{\min}(\hat H_b) + \underline{\lambda}_H}\\
      &\leq \frac{ \|\hat H_b^\prime \hat H_b - H_n(\hat\theta_n)^\prime H_n(\hat\theta_n)\|_F }{\sigma_{\min}(\hat H_b) + \underline{\lambda}_H}\\
      &\leq \frac{ \sigma_{\max}(\hat H_b) + \overline{\lambda}_H }{\sigma_{\min}(\hat H_b) + \underline{\lambda}_H}\|\hat H_b - H_n(\hat\theta_n)\|_F.
\end{align*}
The second equality follows from \citet{hoffman1953}, see also \citet[p153]{bhatia2013}. Using $\tau_b^2 \leq \underline{\lambda}_H/2 $, the bound will follow from the derivations for $\|\hat H_b - H_n(\hat\theta_n)\|_F$ given below.\\

\noindent Next, recall that $\hat H_b = Y_b^\prime S_b (S_b^\prime S_b)^{-1}$, where $y_{b-j} = H_m^{(b+1-j)}(\theta_{b-j})s_{b-j} = H_n(\hat\theta_n)s_{b-j} + r_{b-j}$, where $r_{b-j} = [H_m^{(b+1-j)}(\theta_{b-j})-H_n(\hat\theta_n)]s_{b-j}$. Let $R_b = (r_b,\dots,r_{b-L+1})^\prime$, using standard OLS calculations we have:
\[ \hat H_b - H_n(\hat\theta_n) = (R_b^\prime S_b/L)(S_b^\prime S_b/L)^{-1}.  \]
Now recall that $\lambda_{\min}(S_b^\prime S_b/L) \geq \underline{\lambda}_S >0$ and $\|s_{b-j}\|_2=1$ are enforced algorithmically, also $\|\cdot\|_F \leq \sqrt{d_\theta} \|\cdot\|_2$ for matrices. Thus, we have:
\begin{align*}
      \|\hat H_b - H_n(\hat\theta_n)\|_F &\leq \underline{\lambda}_S^{-1} \frac{1}{L} \sum_{j=0}^{L-1} |r_{b-j}^\prime s_{b-j}|\\
      &\leq  \underline{\lambda}_S^{-2} \frac{1}{L} \sum_{j=0}^{L-1} \| H_m^{b-j}(\theta_{b-j}) - H_n(\hat\theta_n) \|_F\\
      &\leq \underline{\lambda}_S^{-2} \frac{\sqrt{d_\theta}}{L} \left( \sum_{j=0}^{L-1} \sup_{\theta \in \Theta} \|H_m^{(b+1-j)}(\theta)-H_n(\theta)\|_2 + C_1 \|\theta_{b-j}-\hat\theta_n\|_2 \right).
\end{align*}
Taking the $L_p$-norm on both sides, we have:
\[ \left[ \mathbb{E}^\star \left( \|\hat H_b - H_n(\hat\theta_n)\|_F^p \right) \right]^{1/p} \leq \underline{\lambda}_S^{-2}C_2\sqrt{ \frac{d_\theta}{m} } + \underline{\lambda}_S^{-2} \sqrt{d_\theta}C_1 \left( \frac{d_{0,n}}{L} \frac{(1-\overline{\gamma})^{b-L+1}}{\overline{\gamma}} + \frac{C_4}{\sqrt{m}} \right).  \]
Finally, putting everything together we get:
\begin{align*}
      &\left[ \mathbb{E}^\star \left( \|I_d - P_b H_n(\hat\theta_n)\|_2^p \right) \right]^{1/p}\\ &\leq \frac{\overline{\lambda}_P}{\overline{\lambda}_P^{-1} + \underline{\lambda}_H}(\sigma_{\max}(\hat H_b) + \overline{\lambda}_H) \left( 1 + \frac{\sqrt{d_\theta}\underline{\lambda}_H/2}{\sigma_{\min}(\hat H_b) + \underline{\lambda}_H } \right) \left[ \mathbb{E}^\star \left( \|\hat H_b - H_n(\hat\theta_n)\|_F^p \right) \right]^{1/p},
\end{align*}
which yields the desired result and concludes the proof.
\qed
\newpage

\section{Derivations for the Least-Squares Example} \label{apx:OLS}
In this example, $y_n = X_n \hat \theta_n + \hat e_n$, $Q_n(\theta) = \frac{1}{2n} (y_n  -X_n\theta)^\prime(y_n  -X_n\theta)$,  $H_n=X_n'X_n/n, G_n=-X_n^\prime \hat e_n/n$, and $Q_m^{(b)}(\theta) = \frac{1}{2m} (y_m^{(b)}  -X_m^{(b)}\theta)^\prime(y_m^{(b)}  -X_m^{(b)}\theta)$, $H_b=H^{(b+1)}_m(\theta_b)=X^{(b+1)^\prime}_mX^{(b+1)}_m/m$. $G_b(\theta) = -X_m^{(b+1)^\prime}[y_m^{(b+1)} - X_m^{(b+1)}\theta]/m$. 
Let $\hat \theta_m^{(b+1)} = (X_m^{(b+1) \prime} X_m^{(b+1)})^{-1} X_m^{(b+1) \prime} y_{m}^{(b+1)}$ be the $m$ out of $n$ bootstrap estimate. Orthogonality of least squares  residuals 
will be used repeatedly.
\paragraph{Gradient Descent}
$ \theta_{k+1} = \theta_k - \gamma \left[ -X_n^\prime (y_n-X_n\theta_k)/n \right]$.
Subtract $\hat \theta_n$ on both sides and note that  $y_n = X_n \hat \theta_n + \hat e_n$ (full sample estimates), then:
\begin{align*}
 \theta_{k+1} - \hat \theta_n &= \theta_k - \hat \theta_n - \gamma \left[ -X_n^\prime (X_n\hat\theta_n + \hat e_n -X_n\theta_k)/n \right]\\
 &= \theta_k - \hat \theta_n - ( \gamma H_n) (\theta_k - \hat \theta_n)  + \gamma X_n^\prime \hat e_n/n= (I- \gamma H_n) (\thetab - \hat \theta_n) \quad \text{since } X_n^\prime\hat e_n=0.
\end{align*}

\paragraph{Newton-Raphson} 
$ \theta_{k+1} = \theta_k - \gamma \left[ H_n \right]^{-1} \left[ -X_n^\prime (y_n-X_n\theta_k)/n \right].$
Subtract $\hat \theta_n$ on both sides:
\begin{align*}
 \theta_{k+1} - \hat \theta_n &= \theta_k - \hat \theta_n - \gamma H_n^{-1}\left[ -[X_n^\prime X_n/n][\hat \theta_n-\theta_k] +  X_n^\prime\hat e_n/n  \right]
 = (1-\gamma)(\theta_k - \hat \theta_n) \quad \text{since } X_n^\prime \hat e_n=0.
\end{align*}

\paragraph{Stochastic Gradient Descent} 
$ \thetabf = \thetab - \gamma_b \left[ - X_m^{(b) \prime} (y_m^{(b)}-X_m^{(b)}\thetab )/m \right]$. Thus
\begin{align*}
      \thetabf - \hat \theta_n &= \thetab - \hat \theta_n - \gamma_b \left[ -X_m^{ (b+1) \prime} (y_m^{(b+1)}- X_m^{(b+1)}\hat\theta_n  - X_m^{(b+1)}[\thetab-\hat\theta_n])/m \right]\\
      &= (I- \gamma_b H_b) (\thetab - \hat \theta_n)  + \gamma_b X_m^{(b+1) \prime }(y_m^{(b+1)}- X_m^{(b+1)}\hat\theta_n )/m\\
      &= (I- \gamma_b H_b) (\thetab - \hat \theta_n)  - \gamma_b G_b(\hat\theta_n)\quad \text{since } X_m^{(b+1)^\prime} \hat e_m^{(b+1)}=0.
     \end{align*}

\paragraph{Resampled Gradient Descent} 
$ \thetabf = \thetab - \gamma \left[ - X_m^{(b+1) \prime} (y_m^{(b+1)}-X_m^{(b+1)}\thetab )/m \right]$.
Subtract $\hat \theta_n$ on both sides and note that  $y_m^{(b+1)} = X_m^{(b+1)} \hat \theta_m^{(b+1)} + \hat e_m^{(b+1)}$ (bootstrap estimates). Then
\begin{align*}
      \thetabf - \hat \theta_n &= \thetab - \hat \theta_n - \gamma \left[ -X_m^{ (b+1) \prime} (X_m^{(b+1) }[\hat\theta_m^{(b+1)}-\hat\theta_n]  + \hat e_m^{(b+1)} -X_m^{(b+1) }[\thetab-\hat\theta_n])/m \right]\\
      &= \thetab - \hat \theta_n - ( \gamma H_b) (\thetab - \hat \theta_n)  + \gamma H_b (\hat\theta_m^{(b)} -\hat\theta_n)\\ &= (I- \gamma H_b) (\thetab - \hat \theta_n) + \gamma H_b (\hat\theta_m^{(b+1)} -\hat\theta_n) \quad \text{since } X_m^{(b+1)^\prime} \hat e_m^{(b+1)}=0.
     \end{align*}
\paragraph{Resampled Newton-Raphson} 
$ \thetabf = \thetab - \gamma [H_b]^{-1} \left[ - X_m^{(b+1) \prime} (y_m^{(b+1)}-X_m^{(b+1)}\thetab )/m \right]$. Then
     \begin{align*}
           \thetabf - \hat \theta_n &= \thetab - \hat \theta_n - \gamma [H_b]^{-1} \left[ -X_m^{ (b+1) \prime} (X_m^{(b+1) }[\hat\theta_m^{(b+1)}-\hat\theta_n]  + \hat e_m^{(b+1)} -X_m^{(b+1) }[\thetab-\hat\theta_n])/m \right]\\
           &= (1- \gamma) (\thetab - \hat \theta_n) + \gamma (\hat\theta_m^{(b+1)} -\hat\theta_n) \quad \text{since } X_m^{(b+1) \prime} \hat e_m^{(b+1)}=0.
          \end{align*}

\begin{titlingpage} 
      \emptythanks
      \title{ {Supplement to\\ \lQ {\bf Estimation and Inference by Stochastic Optimization}''}}
      \author{Jean-Jacques Forneron\thanks{Department of Economics, Boston University, 270 Bay State Rd, MA 02215 Email: jjmf@bu.edu}}
      \setcounter{footnote}{0}
      \setcounter{page}{0}

      \clearpage 
      \maketitle 
      \thispagestyle{empty} 
      \begin{center}
      This Supplemental Material consists of Appendices \ref{apx:prf_prelim}, \ref{apx:Rcode}, \ref{apx:primA2}, \ref{apx:primA4}, \ref{apx:add_ex}, and \ref{sec:discussion} to the main text.
      \end{center}
\end{titlingpage}

\setcounter{page}{1}

\section{Proofs for the Preliminary Results} \label{apx:prf_prelim}

\paragraph{Proof of Lemma \ref{lem:lips_res}:}
Add and substract $G_n(\theta) - G_n(\hat\theta_n)$ to the desired quantity, apply a mean-value expansion to get $G_n(\theta) - G_n(\hat\theta_n) = H_n(\tilde \theta_n)(\theta-\hat\theta_n)$ and note that $\|H_n(\tilde \theta_n)-H_n(\hat \theta_n)\| \leq C_1\|\theta-\hat\theta_n\|$ by Lipchitz continuity of the Hessian and intermediate value. Now using the triangular inequality:
\begin{align*}
      &\left[ \mathbb{E}^\star \left( \| G_m^{(b)}(\theta) - G_m^{(b)}(\hat\theta_n) - H_n(\hat\theta_n)(\theta-\hat\theta_n) \|_2^p \right) \right]^{1/p}\\ 
      &\leq \left[ \mathbb{E}^\star \left( \| [G_m^{(b)}(\theta) - G_m^{(b)}(\hat\theta_n)] - [G_n(\theta) - G_n(\hat\theta_n)] \|_2^p \right) \right]^{1/p} + C_1\|\theta-\hat\theta_n\|^2,
\end{align*}
using the intermediate value Theorem again but to the difference $G_m^{(b)}-G_n$ this time, we have: $[G_m^{(b)}(\theta) - G_m^{(b)}(\hat\theta_n)] - [G_n(\theta) - G_n(\hat\theta_n)] = [H_m^{(b)}(\tilde \theta_n)-H_n(\tilde\theta_n)](\theta-\hat\theta_n)$. Plugging this back into the inequality above, we have:
\begin{align*}
      &\left[ \mathbb{E}^\star \left( \| G_m^{(b)}(\theta) - G_m^{(b)}(\hat\theta_n) - H_n(\hat\theta_n)(\theta-\hat\theta_n) \|_2^p \right) \right]^{1/p}\\
      &\leq \left[ \mathbb{E}^\star \left( \sup_{\theta \in \Theta}\| H_m^{(b)}(\theta)-H_n(\theta) \|_2^p \right) \right]^{1/p}\|\theta-\hat\theta_n\|_2 + C_1\|\theta-\hat\theta_n\|^2
      \leq \frac{C_3}{\sqrt{m}}\|\theta-\hat\theta_n\|_2 + C_1 \|\theta-\hat\theta_n\|_2^2.
\end{align*}
\qed
\paragraph{Proof of Lemma \ref{lem:asym_normal}:}
The property that $\overline{P}_m = [H_n(\hat\theta_n)]^{-1}$ is crucial for what follows. To prove the Lemma,  first substitute $\thetab$ for the linear process $\thetab^\star$ using Proposition \ref{prop:coupling}:
\begin{align*}
      \frac{\sqrt{m}}{\sqrt{\phi(\gamma)}}V_n^{-1/2}(\thetab-\hat\theta_n) &= \frac{\sqrt{m}}{\sqrt{\phi(\gamma)}}V_n^{-1/2}(\thetab^\star-\hat\theta_n) + \frac{\sqrt{m}}{\sqrt{\phi(\gamma)}}V_n^{-1/2}(\thetab-\thetab^\star)\\
      &= \frac{\sqrt{m}}{\sqrt{\phi(\gamma)}}V_n^{-1/2}(\thetab^\star-\hat\theta_n) +o_p^\star(1),
\end{align*}
when $\log(m)/b \to 0$ since it implies $\sqrt{m}\overline{\rho}^b = \exp( b[ \log(m)/(2b)+\log(\overline{\rho})]) \to 0$, using $\log(m)/(2b)+\log(\overline{\rho}) \to \log(\overline{\rho}) < 0$. For $\overline{P}_m = [H_n(\hat\theta_n)]^{-1}$, $\Psi_n = (1-\gamma)I_d$. Using the recursion (\ref{eq:lin}), we have:
\begin{align*}
      \frac{\sqrt{m}}{\sqrt{\phi(\gamma)}}V_n^{-1/2}(\thetab^\star-\hat\theta_n) &= \frac{\sqrt{m}}{\sqrt{\phi(\gamma)}}V_n^{-1/2}(1-\gamma)^b(\theta_0-\hat\theta_n)\\ &\quad - \gamma \sum_{j=0}^{b-1} (1-\gamma)^j \frac{\sqrt{m}}{\sqrt{\phi(\gamma)}}V_n^{-1/2}[H_n(\hat\theta_n)]^{-1} G_m^{(b-j)}(\hat\theta_n).
\end{align*}
The first term is a $o_p^\star(1)$ as long as $\log(m)/b \to 0$. For the second term, since the $[H_n(\hat\theta_n)]^{-1} G_m^{(b-j)}(\hat\theta_n)$ are conditionally independent and identically distributed, we have by a convolution argument:
\begin{align*}
      &\mathbb{E}^\star \left( \exp( \mathbf{i}\tau^\prime \sqrt{m}\gamma \sum_{j=0}^{b-1} (1-\gamma)^j \frac{\sqrt{m}}{\sqrt{\phi(\gamma)}}V_n^{-1/2}[H_n(\hat\theta_n)]^{-1} G_m^{(b-j)}(\hat\theta_n) ) \right)\\ &= \prod_{j=0}^{b-1} \mathbb{E}^\star \left( \exp( \mathbf{i}\tau^\prime \sqrt{m}\gamma (1-\gamma)^j \frac{\sqrt{m}}{\sqrt{\phi(\gamma)}}V_n^{-1/2}[H_n(\hat\theta_n)]^{-1} G_m^{(b-j)}(\hat\theta_n) ) \right)\\
      &= \prod_{j=0}^{b-1} \left[ \exp \left( -\frac{\|\tau\|_2^2}{2} \frac{\gamma^2 (1-\gamma)^{2j}}{\phi(\gamma)} \right) \left( 1+ \frac{r_m( \gamma(1-\gamma)^j \tau/\phi(\gamma) )}{m^\beta} \right) \right]\\
      &= \underbrace{ \vphantom{ \prod_{j=0}^{b-1} \left[  \left( 1+ \frac{r_m( \gamma(1-\gamma)^j \tau/\phi(\gamma) )}{m^\beta} \right) \right] } \exp \left( -\frac{\|\tau\|_2^2}{2} \frac{\gamma^2 [1-(1-\gamma)^{2b}]}{[1-(1-\gamma)^{2}]\phi(\gamma)} \right)}_{=\exp(-\|\tau\|_2^2/2 \times  [1+o(1)])} \underbrace{\prod_{j=0}^{b-1} \left[  \left( 1+ \frac{r_m( \gamma(1-\gamma)^j \tau/\phi(\gamma) )}{m^\beta} \right) \right]}_{(\text{error})}.
\end{align*}
 To show that the last product is convergent under the stated assumptions, take logs and use the inequality $\frac{x}{1+x} \leq \log(1+x) \leq x$ for $x > -1$. Then
\begin{align*}
      \log \left(\|(\text{error})\| \right) &= \sum_{j=0}^{b-1} \log \left( 1+ \frac{|r_m( \gamma(1-\gamma)^j \tau/\phi(\gamma) )|}{m^\beta} \right)
      \leq \sum_{j=0}^{b-1} \frac{|r_m( \gamma(1-\gamma)^j \tau/\phi(\gamma) )|}{m^\beta}\\
      &\leq \sum_{j=0}^{b-1} \frac{ \|\gamma \tau/\phi(\gamma)\|^\kappa (1-\gamma)^{\kappa j}}{m^\beta}
      \leq \frac{ \|\gamma \tau/\phi(\gamma)\|^\kappa }{[1-(1-\gamma)^{\kappa}] m^\beta} \to 0.
\end{align*}
Note that $\frac{\gamma}{\phi(\gamma)} = 2-\gamma \geq 1$ for $\gamma \in (0,1]$.
Putting everything together we have:
\begin{align*}
      &\mathbb{E}^\star \left( \exp( \mathbf{i}\tau^\prime \frac{\sqrt{m}}{\sqrt{\phi(\gamma)}}V_n^{-1/2}(\thetab-\hat\theta_n) \right) = \exp\left(-\frac{\|\tau\|_2^2}{2} \right)\left( 1+O\left(\frac{ \|\tau\|^\kappa }{m^\beta} \frac{(2-\gamma)^{\kappa}}{[1-(1-\gamma)^{\kappa}]} \right)\right).
\end{align*}
Now since convergence of the characteristic function implies weak convergence \citep[][Th26.3, p349]{billingsley2013}, which implies the desired convergence in distribution.\qed
\section{Implementing \rnr\, in R} \label{apx:Rcode}

To illustrate how the \rnr\; is implemented in a real data setting, the following provides detailed commented R code for a probit model using the \citet{mroz:87} data.

\paragraph{Illustration using Labor Force Participation}
The table below presents the estimates and standard errors for all methods and coefficients in the \citet{mroz:87} application.

\begin{table}[H]
      \centering  \caption{Labor Force Participation: Estimates and Standard Errors} \label{tbl:table-mroz_apx} \setlength\tabcolsep{4.5pt}
      { \renewcommand{\arraystretch}{0.935} 
      \begin{tabular}{l|bbbbaaaaaa}
            \hline \hline 
            & \multicolumn{10}{c}{Estimates}\\
       & \textsc{mle}  &  &  &  &  \mc{1}{\rnr$_n$} & \mc{1}{\rnr$_{200}$} & \mc{1}{\rnr$_{100}$} & \mc{1}{r\textsc{qn}$_n$} & \mc{1}{r\textsc{qn}$_{200}$} & \multicolumn{1}{c}{r\textsc{qn}$_{100}$}  \\ 
        \hline
        nwifeinc & -0.012 & - & - & - & -0.012 & -0.013 & -0.014 & -0.012 & -0.011 & -0.012 \\ 
  educ & 0.131 & - & - & - & 0.132 & 0.138 & 0.143 & 0.131 & 0.129 & 0.129 \\ 
  exper & 0.123 & - & - & - & 0.123 & 0.124 & 0.123 & 0.123 & 0.124 & 0.125 \\ 
  exper2 & -0.002 & - & - & - & -0.002 & -0.002 & -0.002 & -0.002 & -0.002 & -0.002 \\ 
  age & -0.053 & - & - & - & -0.053 & -0.053 & -0.055 & -0.052 & -0.052 & -0.052 \\ 
  kidslt6 & -0.868 & - & - & - & -0.874 & -0.892 & -0.902 & -0.864 & -0.855 & -0.844 \\ 
  kidsge6 & 0.036 & - & - & - & 0.037 & 0.038 & 0.041 & 0.036 & 0.035 & 0.032 \\ 
  const. & 0.270 & - & - & - & 0.271 & 0.216 & 0.234 & 0.248 & 0.256 & 0.249 \\
         \hline
         & \multicolumn{10}{c}{Standard Errors} \\
         & \textsc{ase} & \textsc{boot} & \dmk & \ks & \mc{1}{\rnr$_n$} & \mc{1}{\rnr$_{200}$} & \mc{1}{\rnr$_{100}$} & \mc{1}{r\textsc{qn}$_n$} & \mc{1}{r\textsc{qn}$_{200}$} & \multicolumn{1}{c}{r\textsc{qn}$_{100}$}\\  \hline
         nwifeinc & 0.005 & 0.005 & 0.005 & 0.005 & 0.005 & 0.006 & 0.005 & 0.005 & 0.005 & 0.005 \\ 
  educ & 0.025 & 0.026 & 0.026 & 0.025 & 0.025 & 0.027 & 0.028 & 0.027 & 0.025 & 0.025 \\ 
  exper & 0.019 & 0.020 & 0.019 & 0.019 & 0.019 & 0.020 & 0.021 & 0.019 & 0.018 & 0.017 \\ 
  exper2 & 0.001 & 0.001 & 0.001 & 0.001 & 0.001 & 0.001 & 0.001 & 0.001 & 0.001 & 0.001 \\ 
  age & 0.008 & 0.009 & 0.008 & 0.008 & 0.009 & 0.008 & 0.009 & 0.009 & 0.008 & 0.008 \\ 
  kidslt6 & 0.119 & 0.120 & 0.118 & 0.118 & 0.120 & 0.119 & 0.129 & 0.117 & 0.113 & 0.117 \\ 
  kidsge6 & 0.043 & 0.046 & 0.045 & 0.045 & 0.045 & 0.048 & 0.047 & 0.044 & 0.042 & 0.045 \\ 
  const. & 0.509 & 0.512 & 0.507 & 0.505 & 0.494 & 0.535 & 0.544 & 0.544 & 0.494 & 0.506 \\ \hline  \hline
      \end{tabular} }
\end{table}


\paragraph{Sample R code to implement the Mroz Example.}
\begin{lstlisting}[language=R]
      
set.seed(123)     # set the seed
library(numDeriv) # compute numerical derivaties using finite differences, alternative: library(pracma) is usually faster
library(foreign)  # to load the data set in Stata dta format


data = read.dta('mroz.dta') # read the mroz data

y = data$inlf # outcome variable
X = cbind(data$nwifeinc,data$educ,data$exper, # regressors
            data$exper^2,data$age,data$kidslt6,data$kidsge6,1) 

colnames(X) = c('nwifeinc','educ','exper','exper2', # labels
                  'age','kidslt6','kidsge6','constant')

n = 753      # sample size
index0 = 1:n # indices for the sample data

loglik <- function(coef,index=index0) {
      # compute the log-likelihood for the Probit model on the observations indexed by index (default 1:n, the original sample) at theta = coef
      
      score = X[index,]%*%coef # compute the z-scores
      ll    = y[index]*log( pnorm(score) ) +
              (1-y[index])*log( 1-pnorm(score) ) 
      return( sum( ll ) )
}

d_loglik <- function(coef,index=index0) {
      # compute the gradient of the log-likelihood for the Probit model on the observations indexed by index (default 1:n, the original sample) at theta = coef
      # In this example, the gradient is analytically tractable, it could be evaluated by finite differences by using the following:
      # d_loglik <- function(coef,index=index0) { return(jacobian(loglik,coef,index=index)) }

      yy = y[index] # keep observations indexed by index
      XX = X[index,] # keep observations indexed by index
      score = XX%*%coef # compute the z-score
      dll   = 0 # initialize the gradient
    
      for (i in 1:length(index)) {
            dll = dll +
         (yy[i]*XX[i,]*dnorm(score[i])/pnorm(score[i]) - 
         (1-yy[i])*XX[i,]*dnorm(score[i])/(1-pnorm(score[i])))
      }
      return(dll)
}


rNR <- function(coef0, learn = 0.1, iter = 500, m = n) {
      # generate 'B = iter' rNR draws with learning rate 'gamma = learn' with m out of n resampling
      
      coefs     = matrix(NA,iter,length(coef0)) # matrix where draws will be stored
      coefs[1,] = coef0 # initialize the first-draw
    
      for (i in 2:iter) {
            index = sample(1:n,m,replace=TRUE) # sample m out of n observations with replacement
            
            G = d_loglik(coefs[i-1,],index=index) # compute the resampled gradient G using analytical derivatives. Alternative using finite differences: 
            # G = jacobian(loglik,coefs[i-1,],index=index)
            H = hessian(loglik,coefs[i-1,],index=index) # compute the resampled hessian H using finite differences; we could also compute the jacobian of the gradient d_loglik
        
            coefs[i,] = coefs[i-1,] - learn*solve(H,G) # update
      }
      colnames(coefs) = colnames(X) # label the coefficients
      return( list(coefs = coefs) ) # return draws
}
# estimates and standard errors (source: Introductory Econometrics, A Modern Approach 2nd Edition, Wooldridge)
coef = c(-0.012,0.131,0.123,-0.0019,-0.053,-0.868,0.036,0.270)
ses  = c( 0.005,0.025,0.019, 0.0006, 0.008, 0.119,0.043,0.509)

iter_rNR = 2e3         # number of rNR draws
learn    = 0.3         # learning rate
coef0    = coef*3.25   # starting value

m1 = 753 # m = n
m2 = 200 # m = 200
m3 = 100 # m = 100 

# adjustments to get valid standard errors
adj_rnr1 = sqrt(m1/n)*sqrt( (1-(1-learn)^2)/learn^2 )
adj_rnr2 = sqrt(m2/n)*sqrt( (1-(1-learn)^2)/learn^2 )
adj_rnr3 = sqrt(m3/n)*sqrt( (1-(1-learn)^2)/learn^2 )

b1 = 1 + round(log(0.01)/log(1-learn)) # burn-in sample size

# generate rNR draws
out_rNR1   = rNR(coef0,learn,b1 + iter_rNR, m1)
out_rNR2   = rNR(coef0,learn,b1 + iter_rNR, m2)
out_rNR3   = rNR(coef0,learn,b1 + iter_rNR, m3)

# format output
estimates = 
      rbind( coef,
        apply(out_rNR1$coef[b1:(iter_rNR+b1),],2,mean),
        apply(out_rNR2$coef[b1:(iter_rNR+b1),],2,mean),
        apply(out_rNR3$coef[b1:(iter_rNR+b1),],2,mean))

std_errs  = 
      rbind( ses,
        apply(out_rNR1$coef[b1:(iter_rNR+b1),],2,sd)*adj_rnr1,
        apply(out_rNR2$coef[b1:(iter_rNR+b1),],2,sd)*adj_rnr2,
        apply(out_rNR3$coef[b1:(iter_rNR+b1),],2,sd)*adj_rnr3)


estimates = as.data.frame(estimates)
colnames(estimates) = colnames(X)
rownames(estimates) = c('MLE','rNRn','rNR200','rNR100')

std_errs = as.data.frame(std_errs)
colnames(std_errs) = colnames(X)
rownames(std_errs) = c('ase','rNRn','rNR200','rNR100')

# print results
print(round( cbind( t(estimates), t(std_errs) ), digits = 3 ))

# output printed below:
#            MLE   rNRn rNR200 rNR100   ase  rNRn rNR200 rNR100
#nwifeinc -0.012 -0.012 -0.013 -0.014 0.005 0.005  0.005  0.005
#educ      0.131  0.132  0.136  0.140 0.025 0.026  0.026  0.028
#exper     0.123  0.123  0.123  0.125 0.019 0.019  0.020  0.021
#exper2   -0.002 -0.002 -0.002 -0.002 0.001 0.001  0.001  0.001
#age      -0.053 -0.053 -0.054 -0.055 0.008 0.008  0.009  0.009
#kidslt6  -0.868 -0.872 -0.895 -0.917 0.119 0.121  0.121  0.126
#kidsge6   0.036  0.038  0.040  0.038 0.043 0.045  0.047  0.049
#constant  0.270  0.272  0.282  0.276 0.509 0.506  0.505  0.535
\end{lstlisting}


\newpage
\section{Primitive Conditions for Assumption \ref{ass:A2}} \label{apx:primA2}
The following provides primitive conditions for Assumption \ref{ass:A2} i-ii. in the main text. Although the dimension of $\theta$ is fixed in the main results, the derivation make explicit the dependence of $C_2,C_3$ on the dimension $d_\theta$.
\paragraph{Setup and Notation.} Let $z_i = (y_i,x_i)$ be iid, the parameter space $\Theta$ is a compact, convex subset of $\mathbb{R}^{d_\theta}$, the sample objective, gradient and hessian are given by:
\[ Q_n(\theta) = \frac{1}{n} \sum_{i=1}^n q(z_i,\theta), \quad G_n(\theta) = \frac{1}{n} \sum_{i=1}^n \nabla q(z_i,\theta),  \quad H_n(\theta) = \frac{1}{n} \sum_{i=1}^n \partial^2_{\theta,\theta^\prime} q(z_i,\theta).\]
The re-sampled objective relies on $m$ draws $z_i^{(b)}$ taken with equal probability and with replacement from $(z_1,\dots,z_n)$, the re-sampled objective, gradient and hessian are given by:
\[ Q_m^{(b)}(\theta) = \frac{1}{m} \sum_{i=1}^m q(z_i^{(b)},\theta), \quad G_m^{(b)}(\theta) = \frac{1}{m} \sum_{i=1}^m \nabla q(z_i^{(b)},\theta),  \quad H_m^{(b)}(\theta) = \frac{1}{m} \sum_{i=1}^m \nabla^2 q(z_i^{(b)},\theta).\]
Alternatively, the re-weighted objective relies on $m=n$ iid random weights $w_i^{(b)}$ with mean and variance equal to one, the re-sampled objective, gradient and hessian are given by:
\[ Q_m^{(b)}(\theta) = \frac{1}{m} \sum_{i=1}^m w_i^{(b)} q(z_i,\theta), \quad G_m^{(b)}(\theta) = \frac{1}{m} \sum_{i=1}^m w_i^{(b)} \nabla q(z_i,\theta),  \quad H_m^{(b)}(\theta) = \frac{1}{m} \sum_{i=1}^m w_i^{(b)} \nabla^2 q(z_i,\theta),\]
for Gaussian multiplier weights we have $w_i^{(b)} \sim \mathcal{N}(1,1)$.
\begin{assumption}[Lipschitz Derivatives] \label{as:lip}
There exists a measurable function $C_q$ such that for any two $\theta_1,\theta_2 \in \Theta$ and any $z_i$:
\begin{align}
      \|\nabla q(z_i,\theta_1)-\nabla q(z_i,\theta_2)\| &\leq C_q(z_i)\|\theta_1-\theta_2\|,\\
      \|\nabla^2 q(z_i,\theta_1)-\nabla^2 q(z_i,\theta_2)\| &\leq C_q(z_i)\|\theta_1-\theta_2\|,
\end{align}
Let $p \geq 2$, $C_q$ and the derivatives have finite $p$-th moment:
\begin{align}
      &\mathbb{E}( |C_q(z_i)|^p ) <+\infty, \quad \mathbb{E}(\|\nabla q(z_i,\theta^\dagger)\|^p ) <+\infty, \quad \mathbb{E}(\|\nabla^2 q(z_i,\theta^\dagger)\|^p ) <+\infty \label{eq:momas}
\end{align}
\end{assumption}
The following subsections will prove the Lemma below under re-sampling and then under re-weighting with Gaussian multiplier weights.
\begin{lemma} \label{lem:as2rs}
      Suppose Assumption \ref{as:lip} holds, then we have:
      \vspace{-\topsep}
      \begin{itemize} \setlength\itemsep{0em}
            \item[i.] $\mathbb{E}^\star[G_m^{(b)}(\theta)] = G_n(\theta)$,
            \item[ii.]  $(\mathbb{E}^\star[ \sup_{\theta \in \Theta} \|G_m^{(b)}(\theta) - G_n(\theta)\|_2^p])^{1/p} \leq C_{2n} m^{-1/2}$, where $C_{2n} \overset{p}{\to} C_2$ finite,
            \item[ii.]  $(\mathbb{E}^\star[ \sup_{\theta \in \Theta} \|H_m^{(b)}(\theta) - H_n(\theta)\|_2^p])^{1/p} \leq C_{3n} m^{-1/2}$, where $C_{3n} \overset{p}{\to} C_3$ finite. 
      \end{itemize}
\end{lemma}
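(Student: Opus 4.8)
\textbf{Proof proposal for Lemma \ref{lem:as2rs}.}
The plan is to dispatch part i.\ by a one-line computation and to obtain the uniform $L_p$ bounds in parts ii.\ and iii.\ from a moment version of the maximal inequality for the bootstrap empirical process, using Assumption \ref{as:lip} to supply a $p$-integrable envelope and polynomial covering numbers. For part i., since $z_1^{(b)},\dots,z_m^{(b)}$ are drawn uniformly with replacement from $(z_1,\dots,z_n)$, we have $\mathbb{E}^\star[\nabla q(z_i^{(b)},\theta)] = \frac1n\sum_{j=1}^n \nabla q(z_j,\theta) = G_n(\theta)$ for each fixed $\theta$, and averaging over $i=1,\dots,m$ gives $\mathbb{E}^\star[G_m^{(b)}(\theta)] = G_n(\theta)$; the reweighted case, handled in the next subsection, uses $\mathbb{E}^\star(w_i^{(b)})=1$ in the same way.

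For parts ii.--iii., I would write $\sqrt m\,(G_m^{(b)}(\theta)-G_n(\theta)) = \mathbb{G}_m^\star f_\theta$, where $\mathbb{G}_m^\star f = \frac{1}{\sqrt m}\sum_{i=1}^m\big(f(z_i^{(b)})-\tfrac1n\sum_j f(z_j)\big)$ is the bootstrap empirical process indexed by $f_\theta = \nabla q(\cdot,\theta)$, working one coordinate at a time and then combining the $d_\theta$ of them (this is where the explicit dependence of the constants on $d_\theta$ enters). The class $\mathcal F = \{\nabla q(\cdot,\theta):\theta\in\Theta\}$ is Lipschitz in its index by Assumption \ref{as:lip}, so it has the envelope $F(z) = \|\nabla q(z,\theta^\dagger)\| + \mathrm{diam}(\Theta)\,C_q(z)$, with $\mathbb E[F(z)^p]<\infty$ by (\ref{eq:momas}), and covering numbers that are polynomial in $1/\varepsilon$ (with exponent $d_\theta$), uniformly in the underlying measure; hence its uniform entropy integral is finite. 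A moment form of the bootstrap maximal inequality --- adapting Theorem 3.6.13 of \citet{VanderVaart1996} along the lines of \citet{cheng2015} --- then gives, conditionally on the data, $\big(\mathbb{E}^\star\|\mathbb{G}_m^\star\|_{\mathcal F}^p\big)^{1/p} \le \tilde C\,\big(\tfrac1n\sum_{j=1}^n F(z_j)^p\big)^{1/p}$ with $\tilde C$ depending only on $p$, $d_\theta$ and the entropy constant. Dividing by $\sqrt m$ yields the bound in ii.\ with $C_{2n} = \tilde C\,(\tfrac1n\sum_j F(z_j)^p)^{1/p}$, and $C_{2n}\overset{p}{\to}\tilde C\,(\mathbb E[F(z)^p])^{1/p} =: C_2$ by the law of large numbers. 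Part iii.\ is identical after replacing $\nabla q$ by $\nabla^2 q$, using the second Lipschitz inequality in Assumption \ref{as:lip} and the matching envelope, which produces $C_{3n}$ and $C_3$.

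The delicate point, and the only real obstacle, is preserving the $m^{-1/2}$ rate \emph{uniformly} in $\theta$. A pointwise bound is routine: conditionally on the data the summands are i.i.d.\ mean-zero, so Rosenthal's (or the Marcinkiewicz--Zygmund) inequality gives $\big(\mathbb{E}^\star\|G_m^{(b)}(\theta)-G_n(\theta)\|^p\big)^{1/p} \lesssim m^{-1/2}\big(\tfrac1n\sum_j\|\nabla q(z_j,\theta)-G_n(\theta)\|^p\big)^{1/p}$. But a crude union bound over an $\varepsilon$-net of $\Theta$ costs a factor that is polynomial in $1/\varepsilon$, and since the Lipschitz approximation error forces $\varepsilon \asymp m^{-1/2}$, this degrades the rate to $m^{d_\theta/(2p)-1/2}$; it is precisely the polynomial covering numbers coming from the Lipschitz structure that let a chaining / finite-entropy-integral argument recover the sharp $m^{-1/2}$. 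A secondary subtlety, noted in the footnote to Assumption \ref{ass:A2}, is that the resampling (multinomial-weight) case genuinely requires the $p$-th moment of the envelope, whereas under Gaussian reweighting the conditional process is Gaussian, so its second moment controls all $p$; I would carry out the resampling argument here and defer the lighter Gaussian-weight version to the next subsection.
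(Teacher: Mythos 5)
Your proposal is correct and follows essentially the same route as the paper: part i.\ by the same one-line computation, and parts ii.--iii.\ via the Lipschitz envelope $F_q(z)=C_q(z)\,\mathrm{diam}(\Theta)+\|\nabla q(z,\theta^\dagger)\|$, polynomial covering numbers in $\theta$, a conditional maximal inequality for the bootstrap empirical process, and the law of large numbers for the constants $C_{2n},C_{3n}$. The only difference is bookkeeping: where you assert a one-shot $L_p$ maximal inequality $(\mathbb{E}^\star\|\mathbb{G}_m^\star\|_{\mathcal F}^p)^{1/p}\lesssim(\tfrac1n\sum_j F_q(z_j)^p)^{1/p}$, the paper obtains exactly that bound in two explicit steps --- a Hoffmann--J{\o}rgensen-type reduction from the $p$-th moment to the first moment plus an envelope term of order $m^{1/p-1/2}\le 1$ (Theorem 2.14.5 of \citet{VanderVaart1996}) followed by the $L_1$ bracketing maximal inequality (Theorem 2.14.2, with the bracketing number controlled via the Lipschitz index) --- so the correct references are these rather than Theorem 3.6.13, which concerns the exchangeably weighted bootstrap CLT and is only invoked by the paper for extending to other weight distributions.
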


\subsection{Re-sampled objective}
\begin{proof}[Proof of Lemma \ref{lem:as2rs}] Condition i. is immediate from iid resampling and the additivity over $i$ of the sample and re-sampled gradients. Lemma \ref{lem:as2rs} ii. and iii. are proved the same way so the below will only focus on ii. for brievety.

Let $F_q(z_i,\theta) = C_q(z_i)\|\theta - \theta^\dagger\| + \|\partial_\theta q(z_i,\theta^\dagger)\|$ be the enveloppe function which satisfies: $\|\partial_\theta q(z_i,\theta)\| \leq F_q(z_i,\theta)$ for all $\theta \in \Theta$. We have $F_q(z_i,\theta) \leq F_q(z_i) = C_q(z_i)\text{diam}(\Theta) + \|\partial_\theta q(z_i,\theta^\dagger)\|$, which does not depend on $\theta$. Here $\text{diam}(\Theta) = \sup_{\theta_1,\theta_2 \in \Theta}\|\theta_1-\theta_2\|$, by compactness it is finite. The main idea is to apply Theorem 2.14.5 in \citet{VanderVaart1996}, under the distribution $\mathbb{P}^\star$, for a $p \geq 2$ such that Assumption \ref{as:lip} holds:
\begin{align*}
      &\left[ \mathbb{E}^\star \left( \sup_{\theta \in \Theta} \sqrt{m}\|G_m^{(b)}(\theta) - G_n(\theta)\|^p \right) \right]^{1/p}\\ 
      &\lesssim \mathbb{E}^\star \left( \sup_{\theta \in \Theta} \sqrt{m}\|G_m^{(b)}(\theta) - G_n(\theta)\| \right) + m^{1/p-1/2}\left[ \frac{1}{n} \sum_{i=1}^n |F_q(z_i)|^p \right]^{1/p},
\end{align*}
where $\lesssim$ stands for less or equal than, up to universal constants. By the strong law of large numbers and the finite moment Assumption \ref{as:lip} (\ref{eq:momas}), we have, as $n \to \infty$. $\left[ \frac{1}{n} \sum_{i=1}^n |F_q(z_i)|^p \right]^{1/p} \overset{p}{\to} \left[ \mathbb{E}(|F_q(z_i)|^p) \right]^{1/p}$. Also for $p\geq 2$ and $m \geq 1$, $0 \leq m^{1/p-1/2} \leq 1$. Now using Theorem 2.14.2 in \citet{VanderVaart1996}, again under the distribution $\mathbb{P}^\star$, we have:
\begin{align*}
      &\mathbb{E}^\star \left( \sup_{\theta \in \Theta} \sqrt{m}\|G_m^{(b)}(\theta) - G_n(\theta)\| \right) \lesssim J_{[\,]}(1,\mathcal{F},L_2(\mathbb{P}_n) )\left[ \frac{1}{n} \sum_{i=1}^n |F_q(z_i)|^2 \right]^{1/2},
\end{align*}
where $J_{[\,]}(1,\mathcal{F},L_2(\mathbb{P}_n) ) = \int_{0}^1 \sqrt{ 1+ \log N_{[\, ]}(\varepsilon\|F_q\|,\mathcal{F},L_2(\mathbb{P}_n)) }d\varepsilon$ is the bracketing integral of the functions class $\mathcal{F} = \{ \theta \to \partial_\theta q(z_i,\theta) \}$ and $N_{[\, ]}(\varepsilon\|F_q\|,\mathcal{F},L_2(\mathbb{P}_n))$ its bracketing number. By Theorem 9.23 in \citet{kosorok2007}, we have $N_{[\, ]}(\varepsilon\|F_q\|,\mathcal{F},L_2(\mathbb{P}_n)) \leq N(\varepsilon/2,\Theta,\|\cdot\|) \leq (6/\varepsilon)^{d_\theta} \text{vol}(\Theta)/\text{vol}(B)$ where $\text{vol}(\Theta),\text{vol}(B)$ are the volumes of $\Theta$ and the unit sphere $B$ in $\mathbb{R}^{d_\theta}$, respectively. As a result, we have:
\begin{align*}
      &\mathbb{E}^\star \left( \sup_{\theta \in \Theta} \sqrt{m}\|G_m^{(b)}(\theta) - G_n(\theta)\| \right) \lesssim \sqrt{d_{\theta}} \left[ \frac{1}{n} \sum_{i=1}^n |F_q(z_i)|^2 \right]^{1/2},
\end{align*}
and $\left[ \frac{1}{n} \sum_{i=1}^n |F_q(z_i)|^2 \right]^{1/2} \overset{p}{\to} \left[ \mathbb{E}( |F_q(z_i)|^2) \right]^{1/2}$ finite. Using the inequality: $\left[ \frac{1}{n} \sum_{i=1}^n |F_q(z_i)|^2 \right]^{1/2} \leq \left[ \frac{1}{n} \sum_{i=1}^n |F_q(z_i)|^p \right]^{1/p}$, we can conclude that:
\[ \left[ \mathbb{E}^\star \left( \sup_{\theta \in \Theta} \sqrt{m}\|G_m^{(b)}(\theta) - G_n(\theta)\|^p \right) \right]^{1/p} \lesssim (1+\sqrt{d_\theta})\left[ \frac{1}{n} \sum_{i=1}^n |F_q(z_i)|^p \right]^{1/p}, \]
where the last term converges in probability to $[\mathbb{E}(|F_q(z_i)|^p)]^{1/p}$, finite.
\end{proof} 
\subsection{Re-weighted objective with Gaussian multiplier weights.} The following specializes to the case where $w_i^{(b)} \sim \mathcal{N}(1,1)$, iid. Notice that:
\begin{align*} G_m^{(b)}(\theta) 
      &= G_n(\theta) + \frac{1}{m} \sum_{i=1}^m (w_i^{(b)}-1)\nabla q(z_i,\theta)  \Rightarrow G_m^{(b)}(\theta) \sim \mathcal{N}\left(G_n(\theta), \frac{1}{m} V_n(\theta) \right), \end{align*}
where $V_n(\theta) = 1/n \sum_{i=1}^n \nabla q(z_i,\theta)\nabla q(z_i,\theta)^\prime$ and $w_i^{(b)}-1 \sim \mathcal{N}(0,1)$. Note that the proof below only requires Assumption \ref{as:lip} to hold with $p=2$, even if when the desired result is stated for $p>2$.
\begin{proof}[Proof of Lemma \ref{lem:as2rs}]
Condition i. is immediate from the above. For any $\theta \in \Theta$,
\[ \sqrt{m} \left(G_m^{(b)}(\theta) - G_n(\theta) \right) \sim \mathcal{N}(0,V_n(\theta)),\]
is a vector-valued Gaussian process, conditional on the sample of data $z_1,\dots,z_n$. For each $i=1,\dots,n$ we have:
\[ \|(w_i^{(b)}-1)\nabla q(z_i,\theta)\| \leq |w_i^{(b)}-1|[C_q(z_i)\text{diam}(\Theta) + \|\nabla q(z_i,\theta^\dagger)\|] = |w_i^{(b)}-1| F_q(z_i), \] 
which defines the enveloppe function for the re-weighted objective. For any $a \in S^{d_\theta} = \{ a \in \mathbb{R}^{d_\theta}, \|a\|=1 \}$, the surface of the unit sphere, define $\sqrt{m} a^\prime (G_m^{(b)}(\theta)-G_n(\theta))$, a scalar-value Gaussian process defined on $S^{d_\theta} \times \Theta$ a compact subset of $\mathbb{R}^{d_\theta} \times \mathbb{R}^{d_\theta}$. Let $\sigma_n^2 = \sup_{(a,\theta)\in S^{d_\theta} \times \Theta} \mathbb{E}^\star(m|a^\prime(G_m^{(b)}(\theta)-G_n(\theta)|^2) = \sup_{(a,\theta)\in S^{d_\theta} \times \Theta} a^\prime V_n(\theta) a$ which is finite and converges in probability to $\sup_{\theta \in \Theta} [\lambda_{\max}(\mathbb{E}[\nabla q(z_i,\theta)\nabla q(z_i,\theta)^\prime])]$, also finite. We can now apply results for scalar valued Gaussian processes. In particular, using Theorem 5.8 in \citet{boucheron2013}, see also Proposition 3.19 in \citet{massart2007}, we have for any $u \geq 0$:
\begin{align*}
&\mathbb{P}^\star \left( \sup_{(a,\theta)\in S^{d_\theta} \times \Theta }\sqrt{m}|a^\prime(G_m^{(b)}(\theta)-G_n(\theta)| \geq M_n + u \right) \leq \exp \left( - \frac{u^2}{2\sigma_n^2}\right),
\end{align*}
where $M_n = \mathbb{E}^\star \left[ \sup_{(a,\theta)\in S^{d_\theta} \times \Theta }\sqrt{m}|a^\prime(G_m^{(b)}(\theta)-G_n(\theta)| \right]$.

Let $Z_m^{(b)} = \sup_{(a,\theta)\in S^{d_\theta} \times \Theta }\sqrt{m}|a^\prime(G_m^{(b)}(\theta)-G_n(\theta)|$. The main idea is to notice that $\sup_{(a,\theta)\in S^{d_\theta} }\sqrt{m}|a^\prime(G_m^{(b)}(\theta)-G_n(\theta)|$ is atained at $a = [G_m^{(b)}(\theta)-G_n(\theta)]/\|G_m^{(b)}(\theta)-G_n(\theta)\|$ so that $|Z_m^{(b)}|^p = \sup_{\theta \in \Theta} \|G_m^{(b)}(\theta)-G_n(\theta)\|^p$ which is the quantity we want to bound in expectations. Given that $Z_m^{(b)}\geq 0$ we can write its $p$-th moment as:
\begin{align*}
      \mathbb{E}^\star( |Z_m^{(b)}|^p ) &= p \int_{0}^\infty z^{p-1}\mathbb{P}^\star(Z_m^{(b)} > z)dz\\
      &= p \int_{0}^{M_n} z^{p-1}\mathbb{P}^\star(Z_m^{(b)} > z)dz + p \int_{0}^\infty (M_n+z)^{p-1}\mathbb{P}^\star(Z_m^{(b)} > M_n + z)dz\\
      &\leq M_n^p + p \int_{0}^\infty (M_n+z)^{p-1}\exp\left( -\frac{z^2}{2\sigma_n^2} \right)dz\\
      &= M_n^p + p\sigma_n \int_{0}^\infty (M_n+\sigma_n z)^{p-1}\exp\left( -\frac{z^2}{2} \right)dz.
\end{align*}
We already known that $\sigma_n$ is bounded and converges in probability to a finite limit. It remains to bound the moment $M_n$ before we can conclude.

Let $X_m^{(b)}(\theta) = \sqrt{m}(G_m^{(b)}(\theta)-G_n(\theta)$, by Gaussian re-weighting it is a Gaussian process. We will use an inequality for separable sub-Gaussian processes, specifically Corollary 2.2.8 in \citet{VanderVaart1996}, to bound $M_n$. The following verifies the assumptions required to apply the result.  For any $\theta_1,\theta_2 \in \Theta$, we have:
\begin{align*}
      &X_m^{(b)}(\theta_1)- X_m^{(b)}(\theta_2) = \frac{1}{\sqrt{m}}  \sum_{i=1}^m (w_i^{b}-1) [ \partial_\theta q(z_i,\theta_1)-\partial_\theta q(z_i,\theta_2) ] \sim \mathcal{N}\left(0,V_m(\theta_1,\theta_2)  \right),
\end{align*}
where $V_m(\theta_1,\theta_2) = \frac{1}{m}\sum_{i=1}^m [ \partial_\theta q(z_i,\theta_1)-\partial_\theta q(z_i,\theta_2) ][ \partial_\theta q(z_i,\theta_1)-\partial_\theta q(z_i,\theta_2) ]^\prime$. For any $\theta$, let $X_{m,j}^{(b)}(\theta)$ be the $j$-th row of $X_m^{(b)}(\theta)$. For any $j \in \{ 1,\dots,d_\theta\}$, we have $X_{m,j}^{(b)}(\theta_1)- X_{m,j}^{(b)}(\theta_2) \sim \mathcal{N}(0,V_{m,jj}(\theta_1,\theta_2))$ where $V_{m,jj}(\theta_1,\theta_2)$ is the $j$-th diagonal element of $V_m(\theta_1,\theta_2)$ which satisfies $V_{m,jj}(\theta_1,\theta_2) \leq \text{trace}[V_m(\theta_1,\theta_2)] = \frac{1}{m} \sum_{i=1}^m \|\partial_\theta q(z_i,\theta_1)-\partial_\theta q(z_i,\theta_2)\|^2 \leq \frac{1}{m} \sum_{i=1}^m F_q(z_i)^2\|\theta_1-\theta_2\|^2,$ by Lipschitz continuity. This implies the tail inequality:
\[ \mathbb{P}^\star\left( |X_{m,j}^{(b)}(\theta_1)- X_{m,j}^{(b)}(\theta_2)| > x \right) \leq 2\exp\left( \frac{-x^2}{2 \frac{1}{m} \sum_{i=1}^m F_q(z_i)^2\|\theta_1-\theta_2\|^2} \right), \]
which implies that each $X_{m,j}^{(b)}$ is a separable sub-Gaussian process under the semi-metric $d(\theta_1,\theta_2) = [\frac{1}{m} \sum_{i=1}^m F_q(z_i)^2]^{1/2}\|\theta_1-\theta_2\|$. Using Corollary 2.2.8 in \citet{VanderVaart1996}, we have for some universal constant $K$:
\[ \mathbb{E}^\star\left( \sup_{\theta \in \Theta} |X_{m,j}^{(b)}(\theta)| \right) \leq \mathbb{E}^\star\left( |X_{m,j}^{(b)}(\theta^\dagger)| \right) + K \int_0^\infty \sqrt{\log D(\varepsilon,d)}d\varepsilon, \]
where by Gaussianity of $X_{m,j}^{(b)}$, we have $\mathbb{E}^\star\left( |X_{m,j}^{(b)}(\theta^\dagger)| \right) = 2/\pi V_{n,jj}^{1/2}$, $V_{n,jj}$ is the $j$-th diagonal element of $V_n$ defined earlier and $D(\varepsilon,d) \leq 3[ \frac{1}{n}\sum_{i=1}^n F_q(z_i)^2]^{1/2} \text{diam}(\Theta)/\varepsilon$. Now notice that:
\begin{align*}
      M_n = \mathbb{E}^\star\left( \sup_{\theta \in \Theta} \|X_{m}^{(b)}(\theta)\| \right) &\leq \sum_{j=1}^{d_\theta} \mathbb{E}^\star\left( \sup_{\theta \in \Theta} |X_{m,j}^{(b)}(\theta)| \right)\\ &\leq d_\theta \left[2/\pi (\max_{j} V_{n,jj}^{1/2}) +  K \int_0^\infty \sqrt{\log D(\varepsilon,d)}d\varepsilon \right].
\end{align*}
Without loss of generality, assume that $M_n \geq 1$ then we have:
\begin{align*}
      \mathbb{E}^\star( |Z_m^{(b)}|^p ) &\leq M_n^p \left[ 1 + p\sigma_n \int_{0}^\infty (1+\sigma_n z)^{p-1}\exp\left( -\frac{z^2}{2} \right)dz \right].
\end{align*}
Putting everything together, we have the following inequality for any $p \geq 2$:
\begin{align*}
      &\left[\mathbb{E}^\star \left( \sup_{\theta} \|\sqrt{m}(G_m^{(b)}(\theta)-G_n(\theta))\|^p \right)\right]^{1/p}\\ &\leq d_\theta \left[2/\pi (\max_{j} V_{n,jj}^{1/2}) +  K \int_0^\infty \sqrt{\log D(\varepsilon,d)}d\varepsilon \right] \left[ 1 + p\sigma_n \int_{0}^\infty (1+\sigma_n z)^{p-1}\exp\left( -\frac{z^2}{2} \right)dz \right]^{1/p}.
\end{align*}
where $\int \sqrt{\log D(\varepsilon,d)}d\varepsilon$ and $\sigma_n$ converge in probability to a finite limit if Assumption \ref{as:lip} holds for $p =2$ since the quantities involved only depend on second moments.
\end{proof}

\newpage
\section{Primitive Conditions for Assumption \ref{ass:A4}} \label{apx:primA4}
The derivations below use the same setup and notation as Appendix \ref{apx:primA2} above.
\paragraph{Re-sampled objective.}
Suppose the gradient has finite third moment, i.e. $\mathbb{E}(\|\nabla q(z_i,\theta)\|^3)<\infty$. The calculations below are based on the derivations in \citet[Ch6.2, pp147-148]{lahiri2006}. Note that $V_n^{-1/2}[H_n(\hat\theta_n)]^{-1}G_m^{(b)}(\hat\theta_n) \overset{d^\star}{=} \Sigma_n^{-1/2}G_m^{(b)}(\hat\theta_n)$ so that we can focus on the latter. Also, note that $\mathbb{E}^\star[\Sigma_n^{-1/2}G_m^{(b)}(\hat\theta_n)]=0$, $\text{var}^\star[\Sigma_n^{-1/2}G_m^{(b)}(\hat\theta_n)]=I_d$. Now compute the characteristic function under $\mathbb{E}^\star$:
\[ \varphi^\star_m(\tau) = \mathbb{E}^\star \left( \exp[ \mathbf{i}\tau^\prime V_n^{-1/2} \sqrt{m}G_m^{(b)}(\hat\theta_n)] \right) = \left[ \mathbb{E}^\star \left( \exp\left[ \mathbf{i}\tau^\prime V_n^{-1/2} \frac{\nabla q(z_i^{(b)},\hat\theta_n)}{\sqrt{m}} \right] \right) \right]^m. \]
Using a Taylor expansion with remainder, we have for each $\tau$:
\[ \mathbb{E}^\star \left( \exp\left[ \mathbf{i}\tau^\prime V_n^{-1/2} \frac{\nabla q(z_i^{(b)},\hat\theta_n)}{\sqrt{m}} \right]  \right) = 1 - \frac{\|\tau\|_2^2}{2m} + R_{m}(\tau), \]
where $|R_{m}(\tau)| \leq \mathbb{E}^\star(\|\nabla q(z_i^{(b)},\hat\theta_n)\|_2^3) \|\tau\|_2^3 m^{-3/2}$ because all derivatives of the exponential term have modulus less than one. Apply a Taylor expansion with integral remainder to the logarithm:
\[ \log[\varphi_m^\star(\tau)] =  - \frac{\|\tau\|_2^2}{2} + mR_m(\tau) - m \int_{0}^{- \frac{\|\tau\|_2^2}{2m} + R_m(\tau)} \frac{(-\|\tau\|_2^2/[2m] + R_m(\tau)-t)}{(1+t)^2}dt, \]
where $mR_m(\tau) \leq \mathbb{E}^\star(\|\nabla q(z_i^{(b)},\hat\theta_n)\|_2^3) \|\tau\|_2^3 m^{-1/2}$; $\mathbb{E}^\star(\|\nabla q(z_i^{(b)},\hat\theta_n)\|_2^3) \overset{p}{\to} \mathbb{E}(\|\nabla q(z_i,\theta^\dagger)\|_2^3)<\infty$ using a uniform law of large numbers. For $m$ sufficiently large, $|- \frac{\|\tau\|_2^2}{2m} + R_m(\tau)| < 1/2$ with probability approaching $1$, so we can use the following bound:
\[ m\Big|\int_{0}^{- \frac{\|\tau\|_2^2}{2m} + R_m(\tau)} \frac{(-\|\tau\|_2^2/[2m] + R_m(\tau)-t)}{(1+t)^2}dt\Big| \leq 6m \left[\frac{\|\tau\|_2^4}{4m^2} + |R_m(\tau)|^2 \right],  \]
which satisfies the conditions required for Assumption \ref{ass:A4}.
\paragraph{Re-weighted objective with Gaussian multiplier weights.}
Assumption \ref{ass:A4} automatically holds because as noted in Appendix \ref{apx:primA2}, conditionally on the sample: $G_m^{(b)}(\hat\theta_n) \sim \mathcal{N}(0,\Sigma_n)$ which implies $V_n^{-1/2} [H_n(\hat\theta_n)]^{-1} G_m^{(b)}(\hat\theta_n) \sim \mathcal{N}(0,I_d)$ for which the characteristic function is the same as in Assumption \ref{ass:A4} with $r_m(\tau)=0$. 
\newpage
\section{Additional Results for Section \ref{sec:examples}} \label{apx:add_ex}
\subsection{Example 1: Dynamic Discrete Choice with Unobserved Heterogeneity} \label{apx:add_ex1}
\paragraph{Baseline results with homogeneity.}
Table \ref{tab:homDDC} provides baseline results for homogeneous dynamic discrete choice model where estimation is much faster so that a comparion with \rnr, and the standard bootstrap is feasible. The specification is taken from \citet{abbring-klein}. The first three columns correspond to MLE estimates and rejection rates using the standard $m$ out of $n$ bootstrap. The other 6 columns correspond to \rnr\, and \rqn.
\begin{table}[H] \caption{Baseline estimation and inference results with homogeneous agents} \label{tab:homDDC}
      \centering
      \begin{tabular}{l|ccc|ccc|ccc}
        \hline \hline
        & \multicolumn{3}{c|}{\textsc{mle}/bootstrap} & \multicolumn{3}{c|}{\rnr} & \multicolumn{3}{c}{\rqn}\\ \hline
        m & $\beta_0$ & $\beta_1$ & $\delta_1$ & $\beta_0$ & $\beta_1$ & $\delta_1$ & $\beta_0$ & $\beta_1$ & $\delta_1$ \\ \hline
        & \multicolumn{9}{c}{Average Estimate}\\ \hline
        1000 & -0.500 & 0.200 & 1.000 & -0.500 & 0.200 & 1.000 & -0.500 & 0.200 & 1.000 \\ 
        500 & - & - & - & -0.500 & 0.200 & 1.000 & -0.500 & 0.200 & 1.000 \\ 
        100 & - & - & - & -0.500 & 0.200 & 1.000 & -0.500 & 0.200 & 1.000 \\  \hline
        & \multicolumn{9}{c}{Standard Deviation}\\ \hline
        1000 & 0.014 & 0.004 & 0.013 & 0.014 & 0.004 & 0.013 & 0.014 & 0.004 & 0.013 \\ 
        500 & - & - & - & 0.014 & 0.004 & 0.014 & 0.014 & 0.004 & 0.014 \\ 
        100 & - & - & - & 0.014 & 0.004 & 0.013 & 0.014 & 0.004 & 0.013 \\ \hline
        & \multicolumn{9}{c}{Rejection Rates}\\ \hline
        1000 & 0.053 & 0.051 & 0.055 & 0.054 & 0.050 & 0.051 & 0.051 & 0.049 & 0.052 \\ 
        500 & 0.052 & 0.053 & 0.062 & 0.053 & 0.052 & 0.066 & 0.051 & 0.051 & 0.061 \\ 
        100 & 0.064 & 0.060 & 0.061 & 0.057 & 0.052 & 0.059 & 0.061 & 0.058 & 0.055 \\ 
        \hline \hline
      \end{tabular}\\
      {\footnotesize Legend: $n=1000$, $T=100$, $\gamma=0.1$, nominal size $=5\%$.}
\end{table}
\paragraph{Results with $T=50$, comparison with BFGS}
\newpage

\begin{table}[ht]  \caption{Dynamic Discrete Choice Model with Heterogeneity ($T=50$)} \label{tab:DDC_het_50}
      \centering
      \begin{tabular}{l|cccccccc}
        \hline \hline
       & $\mu_0^1$ & $\mu_1^1$ & $\mu_0^2$ & $\mu_1^2$ & $100 \sigma_1$ & $100 \sigma_2$ & $\omega$ & $\delta_1$  \\ 
       \hline
        $\theta^\dagger$ & -2.000 & 0.300 & -1.000 & 0.900 & 0.010 & 0.010 & 0.300 & 1.000 
        \\ \hline
        & \multicolumn{8}{c}{Average Estimates}
        \\\hline
        \rqn &  -2.032 & 0.302 & -0.983 & 0.906 & 1.132 & 0.907 & 0.294 & 0.954  \\ 
        \rqn-bc & -2.014 & 0.303 & -1.000 & 0.894 & 1.140 & 0.793 & 0.294 & 1.000\\
        \textsc{bfgs} & -1.823 & 0.222 & -0.950 & 0.945 & 4.029 & 3.620 & 0.329 & 0.860 \\
        \textsc{bfgs}-bc & -1.996 & 0.269 & -1.014 & 0.923 & 1.167 & 1.938 & 0.320 & 0.955 \\
        \hline
        & \multicolumn{8}{c}{Standard Deviation}
        \\\hline 
        \rqn & 0.044 & 0.012 & 0.025 & 0.015 & 0.165 & 0.365 & 0.013 & 0.023 \\ 
        \rqn-bc & 0.047 & 0.013 & 0.025 & 0.017 & 0.242 & 0.429 & 0.013 & 0.026 \\ 
        \textsc{bfgs} & 0.049 & 0.018 & 0.031 & 0.015 & 0.306 & 0.180 & 0.011 & 0.028 \\
        \textsc{bfgs}-bc & 0.056 & 0.021 & 0.032 & 0.016 & 0.368 & 0.258 & 0.013 & 0.031 \\
        \hline
        & \multicolumn{8}{c}{Rejection Rate}
        \\ \hline
        \rqn & 0.035 & 0.035 & 0.068 & 0.035 & 0.033 & 0.393 & 0.018 & 0.255\\
        \rqn-bc & 0.022 & 0.033 & 0.013 & 0.015 & 0.030 & 0.370 & 0.018 & 0.043 \\ 
        \rqn-bc$_{se}$ & 0.022 & 0.030 & 0.015 & 0.015 & 0.033 & 0.367 & 0.018 & 0.045 \\ 
        \hline \hline
     \end{tabular}\\
     {\footnotesize Legend: \rqn-bc = split panel bias-corrected \rqn; \rqn, \rqn-bc = quantile-based CIs, \rqn-bc$_{se}$  = std-error based CIs. $\gamma=0.1,n=1000,T=50,m=n/2, B=2000, \textsc{burn}=250,$ nominal level $=5\%.$ \textsc{bfgs}: R's optim with \textsc{bfgs} optimizer and tight convergence criteria, \textsc{bfgs}-bc: split panel biased corrected.}
\end{table}

\paragraph{Implementation details.} A grid of $60$ sobol points is used to integrate out the random coefficients and solve the fixed point problem. During the burn-in of 250 draws, a smaller grid of $30$ points is used. The split panel $\theta_{b,nT/2}^1,\theta_{b,nT/2}^2$ are only computed after $\frac{2}{3} \textsc{burn}$ draws and initialized at $\theta_{b,nT}$ to save time. The quasi-Newton matrix $P_b$ is pooled between split panel draws but computed separately for the full panel ones. The regularization schedule is $\lambda = 20$ for the first $\textsc{burn}/2$ draws and then an exponentially decaying schedule is used: $\lambda_{b+1} = 0.9 \lambda_{b}$.
\paragraph{Derivations for the second-order bias.} The following derives the second-order asymptotic bias of the random coefficient dynamic discrete choice model in Section \ref{sec:DDCh} under large-$T$ asymptotics and simplifying assumptions. In particular, the transition probability $\Pi$ is assumed to be known and a quadratic assumption is used to shorten integral derivations. Let $Q_{nT}(\theta)$ be the negative sample log-likelihood:
\[ Q_{nT}(\theta,\Pi) = -\frac{1}{nT} \sum_{i=1}^n \log[ \int \exp(T\ell_{iT}(\theta,\Pi,\beta))f(\beta|\theta)d\beta ], \]
where $\ell_{iT}(\theta,\Pi,\beta) = \sum_{t=1}^T \ell_{it}(\theta,\Pi,\beta)/T$ is the log-likelihood for individual $i$ using observations $t=1,\dots,T$ for a given set of parameters $(\theta,\Pi,\beta)$, $f(\beta|\theta)$ is the mixture distribution. The gradient $G_{nT}$ of the objective function $Q_{nT}$ is given by:
\[ G_{nT}(\theta,\Pi) = -\frac{1}{nT} \sum_{i=1}^n \frac{ \int (T\partial_\theta \ell_{iT}(\theta,\Pi,\beta) + \partial_\theta \log[f(\beta|\theta)])\exp(T\ell_{iT}(\theta,\Pi,\beta) + \log[f(\beta|\theta)])d\beta }{ \int \exp(T\ell_{iT}(\theta,\Pi,\beta) + \log[f(\beta|\theta)])d\beta }.\]
Under regularity conditions, the estimator $\hat\theta_{nT}$ is $\sqrt{n}$-consistent and asymptotically normal, with $T$ fixed with second order representation:\footnote{ The estimator is $\sqrt{n}$ not $\sqrt{nT}$ asympotically normal because of the dependence over $t$ for each $i$ due to $\beta_i$. See \citet{rsu-96} for general results on second-order bias in nonlinear models.}
\[ \sqrt{n}\left( \hat\theta_{nT} - \theta^\dagger \right) = -[H_{nT}(\theta^\dagger,\Pi^\dagger)]^{-1} G_{nT}(\theta^\dagger,\Pi^\dagger) + \frac{1}{\sqrt{n}}B_{nT}(\theta^\dagger,\Pi^\dagger) + o_p(n^{-1/2}),\]
and $B_{nT}(\theta^\dagger,\Pi^\dagger)$ converges in probability to some limit which depends on $T$. The main idea for the following is to capture some of the terms in $B_{nT}(\theta^\dagger,\Pi^\dagger)$ using a large-T asymptotic framework where $T$ and $n$ grow at the same rate. The main idea here is to approximate $\sqrt{n}G_{nT}(\theta^\dagger,\Pi^\dagger)$ as a asymptotically normal term which does not involve the log/integral/exponential transformation plus a $1/\sqrt{T} \asymp 1/\sqrt{n}$ term which accounts for the transformation. The formula is very similar to the incidental parameter bias but is of second rather than first order.
To further simplify the derivations below, the following assumption will be used.
\begin{assumption} \label{as:quad}
      The functions $\ell_{iT}(\theta,\Pi,\beta)$, $\partial_\theta\ell_{iT}(\theta,\Pi,\beta)$, $\log[f(\beta|\theta)]$ are quadratic in $\beta$ for each $\theta,\Pi$. 
\end{assumption}
Let $\beta_i$ be the true value of $\beta$ for each individual $i$, using the change of variable $\beta = \beta_i + h T^{-1/2}$, re-write:
\begin{align}
      &G_{nT}(\theta,\Pi) = \nonumber\\ 
      &-\frac{1}{n} \sum_{i=1}^n \partial_\theta \ell_{iT}(\theta,\Pi,\beta_i) \label{eq:expr1}\\
      &-\frac{1}{n\sqrt{T}} \sum_{i=1}^n \partial^2_{\theta,\beta} \ell_{iT}(\theta,\Pi,\beta_i) \frac{ \int h \exp(T\ell_{iT}(\theta,\Pi,\beta_i + hT^{-1/2}) + \log[f(\beta_i + hT^{-1/2})])dh}{\int \exp(T\ell_{iT}(\theta,\Pi,\beta_i + hT^{-1/2}) + \log[f(\beta_i + hT^{-1/2})])dh} \label{eq:expr2}\\
      &-\frac{1}{2n T} \sum_{i=1}^n \sum_{j=1}^{d_\theta}\partial^3_{\theta,\beta,\beta_j} \ell_{iT}(\theta,\Pi,\beta_i) \frac{ \int  h h_j \exp(T\ell_{iT}(\theta,\Pi,\beta_i + hT^{-1/2}) + \log[f(\beta_i + hT^{-1/2})])dh}{\int \exp(T\ell_{iT}(\theta,\Pi,\beta_i + hT^{-1/2}) + \log[f(\beta_i + hT^{-1/2})])dh}. \label{eq:expr3}
\end{align}
The first term, (\ref{eq:expr1}), is $\sqrt{n}$-asymptotically normal; it corresponds to a likelihood where all the heterogeneity is fully observed. The second term,  (\ref{eq:expr2}), is the linear expansion term of the transformation and is also $\sqrt{n}$-asymptotically normal; i.e. it contributes to the asymptotic variance by taking into account the estimation of the unobserved heterogeneous distribution. The third term,  (\ref{eq:expr2}), is the quadratic expansion term of the transformation which accounts for the non-linear effect of the transformation. It will contribute to the second-order asymptotic bias. 
Notice that, with the simplifying assumptions, the terms inside the exponential function are quadratic so that the corresponding distribution is Gaussian. To derive the mean and variance-covariance matrix of the Gaussian let $Q_{iT}(\theta,\Pi,\beta_i + h T^{-1/2})=-\ell_{iT}(\theta,\Pi,\beta_i + hT^{-1/2}) - \frac{1}{T}\log[f(\beta_i + hT^{-1/2})]$ and $G_{iT}(\theta,\Pi,\beta_i),H_{iT}(\theta,\Pi,\beta_i)$ be its gradient and hessian. Given the quadratic assumption, we have:
\begin{align*}
      &\exp(-TQ_{iT}(\theta,\Pi,\beta_i + hT^{-1/2})) =\\ &\exp \left( -TQ_{iT}(\theta,\Pi,\beta_i) +\frac{T}{2} G_{iT}(\theta,\Pi,\beta_i)^\prime H_{iT}(\theta,\Pi,\beta_i)G_{iT}(\theta,\Pi,\beta_i) \right)\\
      &\times \exp \left( - \frac{1}{2}\left[ h + \sqrt{T}H_{iT}(\theta,\Pi,\beta_i)^{-1}G_{iT}(\theta,\Pi,\beta_i) \right]^\prime H_{iT}(\theta,\Pi,\beta_i) \left[ h + \sqrt{T}H_{iT}(\theta,\Pi,\beta_i)^{-1}G_{iT}(\theta,\Pi,\beta_i) \right]   \right),
\end{align*}
which is, up to normalizing constants, equal to the density of a multivariate Gaussian with mean $-\sqrt{T}H_{iT}(\theta,\Pi,\beta_i)^{-1}G_{iT}(\theta,\Pi,\beta_i)$ and variance $H_{iT}(\theta,\Pi,\beta_i)^{-1}$. Notice that this is the asymptotic distribution of $\sqrt{T}(\hat\beta_i-\beta_i)$, the individual fixed effect estimator of $\beta$ when the information matrix equality holds. This implies that (\ref{eq:expr2}) simplifies to:
\[ (\ref{eq:expr2}) = \frac{1}{n} \sum_{i=1}^n \partial^2_{\theta,\beta} \ell_{iT}(\theta,\Pi,\beta_i) H_{iT}(\theta,\Pi,\beta_i)^{-1} G_{iT}(\theta,\Pi,\beta_i). \]
Likewise, (\ref{eq:expr3}) simplifies to:
\[ (\ref{eq:expr3}) = -\frac{1}{2nT} \sum_{i=1}^n \sum_{j=1}^{d_\theta}\partial^3_{\theta,\beta,\beta_j} \ell_{iT}(\theta,\Pi,\beta_i) [H_{iT}(\theta,Pi,\beta_i)^{-1}]_j, \]
where $[H_{iT}(\theta,Pi,\beta_i)^{-1}]_j$ is the $j$-th column of $H_{iT}(\theta,Pi,\beta_i)^{-1}$.
\begin{assumption} \label{as:cltlln}
      Suppose that as both $n,T \to \infty$:
      \[ \frac{1}{\sqrt{n}} \sum_{i=1}^n \sum_{t=1}^T  [\partial_\theta \ell_{it}(\theta^\dagger,\Pi^\dagger) + \partial^2_{\theta,\beta} \ell_{iT}(\theta^\dagger,\Pi^\dagger,\beta_i)H_{iT}(\theta^\dagger,\Pi^\dagger,\beta_i)^{-1}\sqrt{T}H_{iT}(\theta^\dagger,\Pi^\dagger,\beta_i)]   \overset{d}{\to} \mathcal{N}(0,\Sigma),  \]
      and that:
      \[ \frac{1}{n} \sum_{i=1}^n \sum_{j=1}^{d_\theta}\partial^3_{\theta^\dagger,\beta,\beta_j} \ell_{iT}(\theta^\dagger,\Pi^\dagger,\beta_i) [H_{iT}(\theta^\dagger,Pi^\dagger,\beta_i)^{-1}]_j \overset{p}{\to} 2\mathcal{B}(\theta^\dagger,\Pi^\dagger), \]
      where $\mathcal{B}(\theta^\dagger,\Pi^\dagger)=\mathbb{E}\left( \sum_{j=1}^{d_\theta}\partial^3_{\theta^\dagger,\beta,\beta_j} \ell_{iT}(\theta^\dagger,\Pi^\dagger,\beta_i) [H_{iT}(\theta^\dagger,Pi^\dagger,\beta_i)^{-1}]_j \right)/2.$ 
\end{assumption}

\begin{proposition}[Asymptotic Bias]
Suppose Assumptions \ref{as:quad}-\ref{as:cltlln} hold and $n/T \to \kappa^2 \in (0,+\infty)$, then:
\begin{align} \sqrt{n} G_{nT}(\theta^\dagger,\Pi^\dagger) \overset{d}{\to} \mathcal{N}(0, \Sigma ), \label{eq:pr0}\\
      \sqrt{nT} \left(G_{nT}(\theta^\dagger,\Pi^\dagger) - (\ref{eq:expr1}) - (\ref{eq:expr2}) \right) \overset{p}{\to} -\kappa \mathcal{B}(\theta^\dagger,\Pi^\dagger). \label{eq:pr1} \end{align}
Suppose $\hat\theta_{nT}$ satisfies the asymptotic expansion:
\begin{align} \sqrt{n}( \hat\theta_{nT} - \theta^\dagger ) = -  \sqrt{n}[H_{\infty}(\theta^\dagger,\Pi^\dagger)]^{-1}G_{nT}(\theta^\dagger,\Pi^\dagger) + \frac{1}{\sqrt{n}}B_{nT}(\theta^\dagger,\Pi^\dagger) + o_p(n^{-1/2}), \label{eq:pr2} \end{align}
where $H_{nT}(\theta^\dagger,\Pi^\dagger) \overset{p}{\to} H_{\infty}(\theta^\dagger,\Pi^\dagger)$ positive definite. The estimator is asymptotically normal:
\begin{align}
      \sqrt{n}\left( \hat\theta_{nT} - \theta^\dagger  \right) \overset{d}{\to} \mathcal{N}(0, V ), \label{eq:pr3}
\end{align}
where $V = [H_{\infty}(\theta^\dagger,\Pi^\dagger)]^{-1} \Sigma [H_{\infty}(\theta^\dagger,\Pi^\dagger)]^{-1}$. The second-order bias can be decomposed in two:
\begin{align*}
      \sqrt{nT}\left( \hat\theta_{nT} - \theta^\dagger - [H_{nT}(\theta^\dagger,\Pi^\dagger)]^{-1}[(\ref{eq:expr1}) + (\ref{eq:expr2})] \right) \overset{p}{\to} &\kappa [H_{\infty}(\theta^\dagger,\Pi^\dagger)]^{-1}\mathcal{B}(\theta^\dagger,\Pi^\dagger)\\ &+ \kappa^{-1}\text{plim}_{n,T \to \infty}B_{nT}(\theta^\dagger,\Pi^\dagger)
\end{align*}
\end{proposition}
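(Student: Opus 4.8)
My plan is to take as given the decomposition $G_{nT}(\theta^\dagger,\Pi^\dagger)=(\ref{eq:expr1})+(\ref{eq:expr2})+(\ref{eq:expr3})$ obtained above from the change of variables $\beta=\beta_i+hT^{-1/2}$, together with the closed forms of $(\ref{eq:expr2})$ and $(\ref{eq:expr3})$ implied by Assumption~\ref{as:quad} --- which makes the Laplace-type integrals exactly Gaussian, so that the integral ratios are the first and second moments of the $\mathcal{N}(-\sqrt{T}H_{iT}^{-1}G_{iT},\,H_{iT}^{-1})$ law, i.e.\ of the conditional distribution of $\sqrt{T}(\hat\beta_i-\beta_i)$. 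Reading $(\ref{eq:expr1})+(\ref{eq:expr2})$ as an ``effective score'' and $(\ref{eq:expr3})$ as a curvature term, I would prove the proposition by (i) taking probability limits of $(\ref{eq:expr3})$ at the $\sqrt{n}$ and $\sqrt{nT}$ scales, (ii) invoking the conditional CLT in Assumption~\ref{as:cltlln} for the effective score, and (iii) feeding the result into the $M$-estimator expansion $(\ref{eq:pr2})$. Throughout I would use $n/T\to\kappa^2$ in the form $T\asymp n$, $\sqrt{nT}\asymp n$, $\sqrt{n/T}\to\kappa$, $\sqrt{T/n}\to\kappa^{-1}$.

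\textbf{Equations $(\ref{eq:pr0})$, $(\ref{eq:pr1})$, $(\ref{eq:pr3})$.} Writing $(\ref{eq:expr3})=-\tfrac{1}{2nT}\sum_i\sum_j\partial^3_{\theta,\beta,\beta_j}\ell_{iT}(\theta^\dagger,\Pi^\dagger,\beta_i)[H_{iT}(\theta^\dagger,\Pi^\dagger,\beta_i)^{-1}]_j$, the LLN in Assumption~\ref{as:cltlln} gives $\tfrac1n\sum_i\sum_j(\cdots)\overset{p}{\to}2\mathcal{B}(\theta^\dagger,\Pi^\dagger)$, hence $\sqrt{nT}\,(\ref{eq:expr3})=-\tfrac12\sqrt{n/T}\cdot\tfrac1n\sum_i\sum_j(\cdots)\overset{p}{\to}-\kappa\mathcal{B}(\theta^\dagger,\Pi^\dagger)$, which is $(\ref{eq:pr1})$ after adding $(\ref{eq:expr1})+(\ref{eq:expr2})$ back in. The same identity at the $\sqrt{n}$ scale gives $\sqrt{n}\,(\ref{eq:expr3})=O_p(\sqrt{n}/T)=o_p(1)$, so the CLT of Assumption~\ref{as:cltlln} for $\sqrt{n}\,[(\ref{eq:expr1})+(\ref{eq:expr2})]$ yields $\sqrt{n}\,G_{nT}(\theta^\dagger,\Pi^\dagger)\overset{d}{\to}\mathcal{N}(0,\Sigma)$, i.e.\ $(\ref{eq:pr0})$. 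Plugging $(\ref{eq:pr0})$ into $(\ref{eq:pr2})$, using $H_{nT}\overset{p}{\to}H_\infty$ and $\tfrac{1}{\sqrt{n}}B_{nT}=o_p(1)$ (since $B_{nT}$ converges in probability to a finite limit), Slutsky's theorem gives $\sqrt{n}(\hat\theta_{nT}-\theta^\dagger)\overset{d}{\to}\mathcal{N}(0,[H_\infty]^{-1}\Sigma[H_\infty]^{-1})=\mathcal{N}(0,V)$, i.e.\ $(\ref{eq:pr3})$.

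\textbf{The bias decomposition.} After substituting $G_{nT}=[(\ref{eq:expr1})+(\ref{eq:expr2})]+(\ref{eq:expr3})$ into $(\ref{eq:pr2})$ and moving the leading first-order (effective-score) term to the left-hand side --- this is the term subtracted off in the statement --- the right-hand side is $-[H_\infty]^{-1}(\ref{eq:expr3})+\tfrac1n B_{nT}+o_p(n^{-1})$. Multiplying by $\sqrt{nT}$: the first piece $\to-[H_\infty]^{-1}\,(-\kappa\mathcal{B})=\kappa[H_\infty]^{-1}\mathcal{B}$ by $(\ref{eq:pr1})$; the second equals $\sqrt{T/n}\,B_{nT}\overset{p}{\to}\kappa^{-1}\,\text{plim}\,B_{nT}$; and the remainder is $o_p(\sqrt{T/n})=o_p(1)$. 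Finally $H_\infty^{-1}$ may be replaced by $H_{nT}^{-1}$ in the subtracted effective-score term at no cost, because $[(\ref{eq:expr1})+(\ref{eq:expr2})]=O_p((nT)^{-1/2})$ by the CLT while $H_{nT}-H_\infty=o_p(T^{-1/2})$ (it is an average over $nT\asymp n^2$ data points plus an $O_p(1/T)$ smoothing bias), so $\sqrt{nT}\,(H_{nT}^{-1}-H_\infty^{-1})[(\ref{eq:expr1})+(\ref{eq:expr2})]=o_p(1)$. Collecting terms gives the claimed limit.

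\textbf{Main obstacle.} The conceptual content lives in the pre-proposition expansion; what is left is a rate count, and the delicate point is that every term discarded along the way must be negligible at the \emph{fine} $\sqrt{nT}$ scale, not just at the coarse $\sqrt{n}$ one --- this is precisely what makes the second-order bias identifiable. Concretely one must: justify keeping only the posterior-variance part $[H_{iT}^{-1}]_j$ in $(\ref{eq:expr3})$; control $H_{nT}-H_\infty$ to order $o_p(T^{-1/2})$; and know that $(\ref{eq:pr2})$ carries a remainder sharp enough ($o_p(n^{-1})$ on $\hat\theta_{nT}-\theta^\dagger$) to survive multiplication by $\sqrt{nT}\asymp n$. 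All three rely on the per-individual score vanishing at the truth, $G_{iT}(\theta^\dagger,\Pi^\dagger,\beta_i)=O_p(T^{-1/2})$, and on the per-individual information-matrix equality, which identifies $H_{iT}^{-1}$ as the relevant posterior-variance object.
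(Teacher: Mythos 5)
Your argument reproduces the paper's proof essentially verbatim: the same decomposition $G_{nT}=(\ref{eq:expr1})+(\ref{eq:expr2})+(\ref{eq:expr3})$, the same application of Assumption \ref{as:cltlln} at the $\sqrt{n}$ and $\sqrt{nT}$ scales, and the same substitution into the expansion (\ref{eq:pr2}); the only addition is your explicit (and welcome) justification for swapping $H_\infty^{-1}$ for $H_{nT}^{-1}$ in the subtracted effective-score term, which the paper handles by simply pre-multiplying (\ref{eq:pr1}) by $H_{nT}^{-1}\overset{p}{\to}H_\infty^{-1}$. One small slip: by the CLT the effective score is $O_p(n^{-1/2})$, not $O_p((nT)^{-1/2})$, though the requirement you state, $H_{nT}-H_\infty=o_p(T^{-1/2})$, is the one consistent with the correct rate.
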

The asymptotic expansion condition (\ref{eq:pr2}) can be verified using conditions in \citet{rsu-96}, the convergence in probability of the Hessian can be derived from a law of large numbers. The positive definiteness is a local identification condition.
\begin{proof}
Note that $\sqrt{n}G_{nT}(\theta^\dagger,\Pi^\dagger) = \sqrt{n}[ (\ref{eq:expr1}) + (\ref{eq:expr2}) ] + \sqrt{n}(\ref{eq:expr3})$. Assumption \ref{as:cltlln} implies that the first term converges in distribution to a Gaussian distribution with zero mean and asymptotic variance-covariance matrix $\Sigma$. Using the same Assumption, the second term converges in probability to zero. Using the same identity, $\sqrt{nT} [G_{nT}(\theta^\dagger,\Pi^\dagger)-(\ref{eq:expr1}) - (\ref{eq:expr2}) ] = \sqrt{nT}(\ref{eq:expr3}) \overset{p}{\to} -\kappa \mathcal{B}(\theta^\dagger,\Pi^\dagger)$ which is the first result. Then (\ref{eq:pr1}) comes from Assumption \ref{as:cltlln}.
Pre-multiplying (\ref{eq:pr1}) by $H_{nT}(\theta^\dagger,\Pi^\dagger)^{-1} \overset{p}{\to} H_{\infty}(\theta^\dagger,\Pi^\dagger)^{-1}$ and using the asymptotic expansion (\ref{eq:pr2}) then implies the last result.
\end{proof}
\begin{proposition}[Spit Panel Bias Reduction]
Let $\hat\theta_{nT/2}^1,\hat\theta_{nT/2}^2$ be the estimators computed using the first $\lfloor T/2 \rfloor$ and the last $T-\lfloor T/2 \rfloor$ time-observations, both using all individuals $i=1,\dots,n$.
Suppose the associated objectives $Q_{nT/2}^1,Q_{nT/2}^2$ satisfy Assumptions (\ref{as:quad})-(\ref{as:cltlln}) using $\lfloor T/2 \rfloor$ and $T-\lfloor T/2 \rfloor$ instead of $T$.
Let $\hat\theta^{1,2}_{nT/2}$ be either estimator, suppose they both satisfy the asymptotic expansions:
\begin{align} \sqrt{n}( \hat\theta^{1,2}_{nT/2} - \theta^\dagger ) = -  \sqrt{nT}[H^{1,2}_{\infty}(\theta^\dagger,\Pi^\dagger)]^{-1}G^{1,2}_{nT/2}(\theta^\dagger,\Pi^\dagger) + \frac{1}{\sqrt{n}} B^{1,2}_{nT/2}(\theta^\dagger,\Pi^\dagger) + o_p(n^{-1/2}), \label{eq:pr22} \end{align}
where $H^{1,2}_{nT/2}(\theta^\dagger,\Pi^\dagger) \overset{p}{\to} H_{\infty}(\theta^\dagger,\Pi^\dagger)$ positive definite. Then the bias corrected estimator $\tilde \theta_{nT} = 2\hat \theta_{nT}-(\hat \theta^1_{nT/2}+\hat \theta^2_{nT/2})/2$ satisfies:
\begin{align} \sqrt{n} \left( \tilde\theta_{nT} - \theta^\dagger  \right) \overset{d}{\to} \mathcal{N}\left( 0, V \right), \label{eq:pr23_0}\\
 \sqrt{nT} \left( \tilde\theta_{nT} - \theta^\dagger - [H_{\infty}(\theta^\dagger,\Pi^\dagger)]^{-1}[ (\ref{eq:expr1}) + (\ref{eq:expr2}) ] \right) \overset{p}{\to} \kappa^{-1}\text{plim}_{n,T \to \infty} B_{nT}(\theta^\dagger,\Pi^\dagger). \label{eq:pr23} \end{align}
where $V$ is the same as the asymptotic variance of $\hat\theta_{nT}$. 
\end{proposition}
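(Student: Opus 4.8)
The argument is the standard half–panel jackknife, layered on top of the expansions \eqref{eq:pr2}, \eqref{eq:pr22} and the preceding Proposition on asymptotic bias, which I would apply once to the full panel and, verbatim, to each half panel (whose objectives satisfy the same hypotheses by assumption and whose time lengths $\lfloor T/2\rfloor$ and $T-\lfloor T/2\rfloor$ both diverge, with $\lim\sqrt{n/\lfloor T/2\rfloor}=\sqrt2\,\kappa$). The plan is to decompose each of the three gradients as in \eqref{eq:expr1}--\eqref{eq:expr3} --- writing $(\ref{eq:expr1})^j,(\ref{eq:expr2})^j,(\ref{eq:expr3})^j$ for the corresponding pieces of the $j$-th half–panel gradient --- then insert \eqref{eq:pr2} and \eqref{eq:pr22} into $\tilde\theta_{nT}=2\hat\theta_{nT}-\tfrac12(\hat\theta^1_{nT/2}+\hat\theta^2_{nT/2})$ and track how the three pieces (the linear/variance term, the $O(1/T)$ term $(\ref{eq:expr3})$, and the $O(1/n)$ remainder $B_{nT}$) transform under the jackknife weights $2,-\tfrac12,-\tfrac12$.

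\emph{Linear term.} Since $(\ref{eq:expr1})$ is an exact sample average of the per-period scores over $t=1,\dots,T$, it splits additively, $(\ref{eq:expr1})=\tfrac12[(\ref{eq:expr1})^1+(\ref{eq:expr1})^2]$. Under the quadratic Assumption \ref{as:quad}, $H_{iT}$ and $\partial^2_{\theta,\beta}\ell_{iT}$ are affine sample averages over $t$, so a within-individual law of large numbers gives $H^j_{iT/2}=H_{iT}+O_p(T^{-1/2})$ and likewise for $\partial^2_{\theta,\beta}\ell^j_{iT/2}$, whence $(\ref{eq:expr2})=\tfrac12[(\ref{eq:expr2})^1+(\ref{eq:expr2})^2]+r_n$ with $r_n$ of smaller order. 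Because $(\ref{eq:expr1})+(\ref{eq:expr2})=O_p((nT)^{-1/2})$ and $H_{nT},H^j_{nT/2}\overset{p}{\to}H_\infty$, the factors $[H_{nT}]^{-1},[H^j_{nT/2}]^{-1}$ may all be replaced by $[H_\infty]^{-1}$ at negligible cost, so the jackknife combination of the linear parts equals $-[H_\infty]^{-1}[(\ref{eq:expr1})+(\ref{eq:expr2})]$ plus a remainder; multiplied by $\sqrt n$ this converges to $\mathcal N(0,V)$ by Assumption \ref{as:cltlln} together with \eqref{eq:pr3}, which is what preserves the asymptotic variance.

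\emph{Bias terms.} The preceding Proposition gives $\sqrt{nT}\,(\ref{eq:expr3})\overset{p}{\to}-\kappa\mathcal B(\theta^\dagger,\Pi^\dagger)$ and $\sqrt{n\lfloor T/2\rfloor}\,(\ref{eq:expr3})^j\overset{p}{\to}-\sqrt2\,\kappa\,\mathcal B(\theta^\dagger,\Pi^\dagger)$, with the \emph{same} $\mathcal B$ because it is the $T\to\infty$ limit $\mathbb E\big(\sum_\ell\partial^3_{\theta,\beta,\beta_\ell}\ell_{i\infty}[H_{i\infty}^{-1}]_\ell\big)/2$, insensitive to the time-series length. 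Rescaling both statements to the common $(nT)^{-1/2}$ order yields $(\ref{eq:expr3})^j=2\,(\ref{eq:expr3})+o_p((nT)^{-1/2})$, hence $2\,(\ref{eq:expr3})-\tfrac12[(\ref{eq:expr3})^1+(\ref{eq:expr3})^2]=o_p((nT)^{-1/2})$ --- the leading $O(1/T)$ bias is annihilated. The split is along $t$ and leaves the cross-section untouched, so the same arguments that identify $\mathrm{plim}\,B_{nT}$ give $\mathrm{plim}\,B^j_{nT/2}=\mathrm{plim}\,B_{nT}$; collecting the $\frac1n B$ terms, the weights $2,-\tfrac12,-\tfrac12$ leave $\frac1n\mathrm{plim}\,B_{nT}+o_p((nT)^{-1/2})=\tfrac1{\kappa\sqrt{nT}}\mathrm{plim}\,B_{nT}+o_p((nT)^{-1/2})$ since $\sqrt{nT}/n\to\kappa^{-1}$. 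Assembling,
\begin{equation*}
\tilde\theta_{nT}-\theta^\dagger=-[H_\infty]^{-1}\big[(\ref{eq:expr1})+(\ref{eq:expr2})\big]+\frac{1}{\kappa\sqrt{nT}}\,\mathrm{plim}_{n,T\to\infty}B_{nT}+o_p\big((nT)^{-1/2}\big),
\end{equation*}
which gives \eqref{eq:pr23_0} upon multiplying by $\sqrt n$ (the last two terms being $O_p(T^{-1/2})=o_p(1)$) and \eqref{eq:pr23} upon multiplying $\tilde\theta_{nT}-\theta^\dagger+[H_\infty]^{-1}[(\ref{eq:expr1})+(\ref{eq:expr2})]$ by $\sqrt{nT}$, modulo the same sign bookkeeping between $(\ref{eq:expr1}),(\ref{eq:expr2})$ and their defined values as in the preceding Proposition.

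The main obstacle is the remainder control hidden in $r_n$ and in the $O_p(T^{-1/2})$ Hessian replacements of Step~1: one must show that the failure of exact additivity of $(\ref{eq:expr2})$ across the two halves --- coming from the $\tfrac1T\partial_\beta\log f$ factor in $G_{iT}$ (weight $\tfrac1T$ for the full panel versus $\tfrac2T$ in each half) and from replacing $H_{iT},\partial^2_{\theta,\beta}\ell_{iT}$ by their within-individual limits --- contributes only at order $o_p((nT)^{-1/2})$ \emph{after averaging over} $i$, and that the Laplace/Taylor remainders dropped in deriving \eqref{eq:expr1}--\eqref{eq:expr3} are negligible at the $\sqrt{nT}$ scale. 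This is exactly where Assumption \ref{as:quad} (making the integral expansion exact, so only mean-zero $O_p(T^{-1/2})$ per-period fluctuations survive) and the independence of the heterogeneity $\beta_i$ from the covariate path (so that the surviving $O_p(T^{-1})$ cross terms have zero limiting mean rather than merely being term-by-term small) do the work; establishing this cancellation, not bounding each term, is the delicate point.
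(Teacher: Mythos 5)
Your proposal follows essentially the same route as the paper's proof: insert the expansions \eqref{eq:pr2} and \eqref{eq:pr22} into the jackknife combination, use additivity over $t$ of \eqref{eq:expr1} and \eqref{eq:expr2} so that the linear parts reproduce $-[H_\infty]^{-1}[(\ref{eq:expr1})+(\ref{eq:expr2})]$, show the \eqref{eq:expr3}-type terms cancel under the weights $2,-\tfrac12,-\tfrac12$ because of the $T$ versus $T/2$ scaling (the paper's term \eqref{eq:extra}), and track the surviving $B_{nT}$ remainder via $\sqrt{T/n}\to\kappa^{-1}$. The only difference is that you explicitly flag the imperfect additivity of \eqref{eq:expr2} across half-panels as a remainder to be controlled, a point the paper simply asserts; this is a fair (and arguably more honest) treatment of the same argument.
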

The main takeaway of the Proposition is (\ref{eq:pr23}) which indicates that the part of the second-order bias which is associated with the integral transformed can be removed using the split panel jackknife.

\begin{proof}
Combining asymptotic expansions (\ref{eq:pr2}), (\ref{eq:pr22}), we have:
\begin{align*}
      \sqrt{n} \left( \tilde\theta_{nT} - \theta^\dagger \right) &= - \sqrt{n}H_{\infty}(\theta^\dagger,\Pi^\dagger)^{-1}\left( 2G_{nT}(\theta^\dagger,\Pi^\dagger) - [G^1_{nT/2}(\theta^\dagger,\Pi^\dagger)+G^2_{nT/2}(\theta^\dagger,\Pi^\dagger)]/2 \right)\\
      &+ \frac{1}{\sqrt{n}} \left( 2B_{nT} - [B_{nT/2}^1+B_{nT/2}^2]/2 \right).
\end{align*}
Notice that:
\begin{align*}
      2G_{nT}(\theta^\dagger,\Pi^\dagger) - [G^1_{nT/2}(\theta^\dagger,\Pi^\dagger)+G^2_{nT/2}(\theta^\dagger,\Pi^\dagger)]/2 = (\ref{eq:expr1}) + (\ref{eq:expr2}) + (\ref{eq:extra}),
\end{align*}
where (\ref{eq:extra}) is defined below. The identity comes from additivity over $t$ of $(\ref{eq:expr1}),(\ref{eq:expr2})$. The (\ref{eq:extra}) term comes from combining the (\ref{eq:expr3}) terms in each one of $G_{nT}$, $G^1_{nT/2}$, and $G^2_{nT/2}$:
\begingroup\leqnos
\begin{flalign} 
      \label{eq:extra}
      =-2\frac{1}{2nT} &\sum_{i=1}^n \sum_{j=1}^{d_\theta}\partial^3_{\theta,\beta,\beta_j} \ell_{iT}(\theta,\Pi,\beta_i) [H_{iT}(\theta,Pi,\beta_i)^{-1}]_j\\
      &-\frac{1}{2}\frac{1}{2nT/2} \sum_{i=1}^n \sum_{j=1}^{d_\theta}\partial^3_{\theta,\beta,\beta_j} \ell^1_{iT/2}(\theta,\Pi,\beta_i) [H^2_{iT/2}(\theta,Pi,\beta_i)^{-1}]_j \notag\\ 
      &-\frac{1}{2}\frac{1}{2nT/2} \sum_{i=1}^n \sum_{j=1}^{d_\theta}\partial^3_{\theta,\beta,\beta_j} \ell^2_{iT/2}(\theta,\Pi,\beta_i) [H^2_{iT/2}(\theta,Pi,\beta_i)^{-1}]_j. \notag
\end{flalign}
\endgroup
Using the assumptions and $\sqrt{n/[T/2]} \to \sqrt{2}\kappa$, $\sqrt{nT}(\ref{eq:extra})$ has the following limit:
\begin{align*} 
      \sqrt{nT}(\ref{eq:extra}) &=
      -2\sqrt{n/T}\frac{1}{2n} \sum_{i=1}^n \sum_{j=1}^{d_\theta}\partial^3_{\theta,\beta,\beta_j} \ell_{iT}(\theta,\Pi,\beta_i) [H_{iT}(\theta,Pi,\beta_i)^{-1}]_j \\
      &-\frac{1}{2}\sqrt{2}\sqrt{n/[T/2]}\frac{1}{2nT/2} \sum_{i=1}^n \sum_{j=1}^{d_\theta}\partial^3_{\theta,\beta,\beta_j} \ell^1_{iT/2}(\theta,\Pi,\beta_i) [H^2_{iT/2}(\theta,Pi,\beta_i)^{-1}]_j \\ 
      &-\frac{1}{2}\sqrt{2}\sqrt{n/[T/2]}\frac{1}{2nT/2} \sum_{i=1}^n \sum_{j=1}^{d_\theta}\partial^3_{\theta,\beta,\beta_j} \ell^2_{iT/2}(\theta,\Pi,\beta_i) [H^2_{iT/2}(\theta,Pi,\beta_i)^{-1}]_j. \\
      &\overset{p}{\to} 2[-2\kappa\mathcal{B}(\theta^\dagger,\Pi^\dagger)] - \sqrt{2}[  -2 \sqrt{2}\kappa \mathcal{B}(\theta^\dagger,\Pi^\dagger) ] = 0.
\end{align*}
Putting all the derivations together, we get the two sets of results:
\begin{align*}
      \sqrt{n}\left( 2G_{nT}(\theta^\dagger,\Pi^\dagger) - [G^1_{nT/2}(\theta^\dagger,\Pi^\dagger)+G^2_{nT/2}(\theta^\dagger,\Pi^\dagger)]/2 \right) = \sqrt{n}[H_{\infty}(\theta^\dagger,\Pi^\dagger)]^{-1}[(\ref{eq:expr1}) + (\ref{eq:expr2})] + o_p(1),
\end{align*}
from which the asymptotic normality result follows. Also,
\begin{align*}
      &\sqrt{nT}\left( 2G_{nT}(\theta^\dagger,\Pi^\dagger) - [G^1_{nT/2}(\theta^\dagger,\Pi^\dagger)+G^2_{nT/2}(\theta^\dagger,\Pi^\dagger)]/2 - [H_{\infty}(\theta^\dagger,\Pi^\dagger)]^{-1}[(\ref{eq:expr1}) + (\ref{eq:expr2})] \right)\\ &= o_p(1) + \kappa^{-1} \text{plim}_{n,T \to \infty} B_{nT}(\theta^\dagger,\Pi^\dagger),
\end{align*}
where the $o_p(1)$ term includes the bias reduction in (\ref{eq:extra}).
\end{proof}
\paragraph{A split panel implementation of \rqn.} The split panel bias corrected estimator is $\tilde \theta_{nT} = 2\hat\theta_{nT} - [\hat\theta^1_{nT/2}+\hat\theta^2_{nT/2}]/2$. The goal is to target $\tilde \theta_{nT}$ using \rqn. Let $\theta_{b,nT},\theta^1_{b,nT/2},\theta^2_{b,nT/2}$ be \rqn\, draws computed using the full and split panels, respectively but the same resampled individuals $i$. Using Proposition \ref{prop:coupling}, we have the three couplings:
\begin{align*}
      \theta^\star_{b+1,nT} &= \hat\theta_{nT} + (1-\gamma)( \theta^\star_{b,nT} - \hat\theta_{nT} ) - \gamma [H_{nT}(\theta^\dagger,\Pi^\dagger)]^{-1} G_{nT}(\theta^\dagger,\Pi^\dagger)\\
      \theta^{\star,1}_{b+1,nT/2} &= \hat\theta^1_{nT/2} + (1-\gamma)( \theta^{\star,1}_{b,nT/2} - \hat\theta^1_{nT/2} ) - \gamma [H^1_{nT/2}(\theta^\dagger,\Pi^\dagger)]^{-1} G^1_{nT/2}(\theta^\dagger,\Pi^\dagger)\\
      \theta^{\star,2}_{b+1,nT/2} &= \hat\theta^2_{nT/2} + (1-\gamma)( \theta^{\star,2}_{b,nT/2} - \hat\theta^2_{nT/2} ) - \gamma [H^2_{nT/2}(\theta^\dagger,\Pi^\dagger)]^{-1} G^2_{nT/2}(\theta^\dagger,\Pi^\dagger).
\end{align*}
Compute $\tilde \theta_{b+1,nT} = 2\theta_{b+1,nT}-[\theta^{1}_{b+1,nT/2}+\theta^{2}_{b+1,nT/2}]/2$, we have the associated coupling:
\begin{align*}
      \tilde \theta^\star_{b+1,nT} &= \tilde\theta_{nT} + (1-\gamma)( \tilde \theta^\star_{b,nT} - \tilde\theta_{nT}) - \gamma \Big( 2[H_{nT}(\theta^\dagger,\Pi^\dagger)]^{-1} G_{nT}(\theta^\dagger,\Pi^\dagger) \\ &- \Big[[H^1_{nT/2}(\theta^\dagger,\Pi^\dagger)]^{-1} G^1_{nT/2}(\theta^\dagger,\Pi^\dagger)+[H^2_{nT/2}(\theta^\dagger,\Pi^\dagger)]^{-1} G^2_{nT/2}(\theta^\dagger,\Pi^\dagger)\Big]/2 \Big),
\end{align*}
which has innovations that match the first-order expansion for $\tilde\theta_{nT}$.
\newpage
\subsection{Example 2: A Probit with Many Regressors} \label{apx:add_ex2}
Both \rqn\, and MCMC are initialized at the same starting value $\theta_0$ which is computed as follows. A linear probability model is estimated by OLS yielding coefficients $\tilde \theta_{n,\textsc{ols}}$. The starting value $\theta_0$ is then computed using the linear approximation: $x_i^\prime\tilde \theta_{n,\textsc{ols}} \simeq  \Phi(x_i^\prime \theta_0 ) \simeq \Phi(0) + \phi(0)x_i^\prime \theta_0$, where $\phi,\Phi$ are the Gaussian pdf and cdf, $\Phi(0)=1/2$, $\phi(0)=[2\pi]^{-1/2}$. For the intercept, this yields $\theta_{0,\text{const}} = [\tilde \theta_{n,\text{const},\textsc{ols}}-\Phi(0)]/\phi(0)$. For all other coefficients this yields $\theta_{0,j} = \tilde \theta_{n,j,\textsc{ols}}/\phi(0)$. 
\paragraph{Implementation of the MALA algorithm.}
For Bayesian inference, let $Q_n$ be the sample negative log-likelihood and $\pi$ the prior distribution. Let $G_n$ be the gradient of the negative log-posterior distribution and $H_n$ its Hessian.\footnote{The negative log-likelihood is minus one times the log-likelihood. The log-posterior distribution is the log-likelihood plus the log-prior distribution.} Given a draw $\theta_b$, the baseline MALA algorithm produces the next draw $\theta_{b+1}$ using the following steps:
\begin{align*}
      &\text{Draw }\tilde \theta = \theta_b - \gamma G_n(\theta_b) + \sqrt{2\gamma} Z_b,\, Z_b \sim \mathcal{N}(0,I), \text{ and } u\sim\mathcal{U}_{[0,1]},\\
      &\text{If } u \leq \frac{\exp[-Q_n(\tilde\theta)]\pi(\tilde\theta)}{\exp[-Q_n(\theta_b)]\pi(\theta_b)} \frac{q(\tilde \theta | \theta_b)}{q(\theta_b|\tilde \theta)}, \text{ set } \theta_{b+1} = \tilde \theta. \text{ Otherwise, set } \theta_{b+1} = \theta_{b},
\end{align*}
where $q(\tilde \theta | \theta_b) = \phi( [\tilde \theta - \theta_b + \gamma G_n(\theta_b)]/\sqrt{2\gamma} )$ and $q(\theta_b|\tilde \theta) = \phi( [ \theta_b - \tilde \theta + \gamma G_n(\tilde \theta)]/\sqrt{2\gamma} )$ are the transition probabilities, $\phi$ is the  pdf of the multivariate Gaussian. Compared to the random-walk Metropolis-Hastings algorithm which uses the proposal $\tilde \theta = \theta_b + \sqrt{2\gamma} Z_b$, MALA adds the gradient descent term, $ - \gamma G_n(\theta_b)$, which directs the proposal towards the maximum of the posterior distribution to get faster convergence and a higher acceptance rate. Here, the proposal is modified for more direct comparison with \rnr, \rqn. It requires computing $H_n(\hat\theta_n)$:
\[ \tilde \theta = \theta_b - \gamma [H_n(\hat\theta_n)]^{-1}G_n(\theta_b) + \sqrt{2\gamma} Z_b, \quad Z_b \sim \mathcal{N}(0,[H_n(\hat\theta_n)]^{-1}), \]
the transition probability $q(\cdot|\cdot)$ is adjusted accordingly.  Following \citet{roberts1998}, $\gamma$ is tuned to get an acceptance rate of 0.57, after the initial convergence phase. This requires $\gamma = O(1/d_\theta)$ as illustrated below. The desired acceptance rate is achieved for $\gamma \simeq 0.0034$ to be compared with $\gamma=0.1$ used for \rnr\, and \rqn.
\paragraph{Choice of learning rate $\gamma$ for the MALA algorithm.} Recall the modified proposal:
\[ \tilde \theta = \theta_b - \gamma [H_n(\hat\theta_n)]^{-1}G_n(\theta_b) + \sqrt{2\gamma} Z_b, \quad Z_b \sim \mathcal{N}(0,[H_n(\hat\theta_n)]^{-1}). \]
To illustrate how the dimension of the parameters $d_\theta$ affects the choice of $\gamma$, consider the case wher $\theta_b = \hat\theta_n$ so that $G_n(\theta_b)=0$. This implies that conditional on $\theta_b=\hat\theta_n$, $\tilde \theta = \hat\theta_n + \sqrt{2\gamma} Z_b \sim \mathcal{N}(\hat\theta_n,2\gamma[H_n(\hat\theta_n)]^{-1})$. The log-posterior is approximately:
\[ Q_n(\tilde\theta) - \log(\pi(\tilde\theta)) \simeq Q_n(\hat\theta_n) - \log(\pi(\hat\theta_n)) + \gamma  Z_b^\prime H_n(\hat\theta_n)Z_b, \]
where $Z_b^\prime H_n(\hat\theta_n)Z_b \sim \chi^2_p$. For $d_\theta$ large, we have $Z_b^\prime H_n(\hat\theta_n)Z_b/d_{\theta} = 1 + o_p(1).$ Plug this into the accept/reject step of the algorithm:
\[ \mathbb{P}\left( u \leq \frac{\exp[-Q_n(\tilde\theta)]\pi(\tilde\theta)}{\exp[-Q_n(\hat\theta_n)]\pi(\hat\theta_n)}\right) \simeq \mathbb{P}\left( u \leq \exp[ -\gamma d_\theta ]\right) = \exp[ -\gamma d_\theta ], \]
which equals $0.57$ for $\gamma = -\log(0.57)/d_{\theta}$. At $\theta_b = \hat\theta_n$, the optimal choice of $\gamma$ is inversely proportional to $d_\theta$. For $\theta_b \neq \hat\theta_n$, the acceptance rate is higher because the gradient term sets $\tilde \theta$ in directions of increasing posterior, on average. This implies that a larger value $\gamma > -\log(0.57)/d_{\theta}$ should be used to get the desired acceptance rate overall. In the empirical application, the learning rate was set to $\gamma = -2\log(0.57)/d_{\theta}$ which resulted in the desired acceptance rate.

\paragraph{Additional Empirical Results and Comparisons.}
\,\,
\begin{table}[H] \caption{Probit - Estimates and Standard Errors} \label{tab:BigProbit}
      \centering
      { \renewcommand{\arraystretch}{0.935} \setlength{\tabcolsep}{1.8pt}
      \begin{tabular}{l|ccccccccccc}
        \hline \hline
       & const & distance & border & island & landlock & legal & language & colonial & currency & FTA & religion \\ 
        \hline
      & \multicolumn{10}{c}{Estimates}\\
      \hline
      \textsc{mle} & 2.537 & -0.618 & -0.380 & 0.355 & 0.220 & 0.072 & 0.275 & 0.288 & 0.530 & 1.854 & 0.249 \\ 
      \rnr & 2.541 & -0.619 & -0.381 & 0.355 & 0.219 & 0.071 & 0.275 & 0.287 & 0.529 & 1.863 & 0.249 \\ 
      \rqn & 2.539 & -0.618 & -0.380 & 0.355 & 0.219 & 0.071 & 0.275 & 0.286 & 0.529 & 1.860 & 0.249 \\ 
      \textsc{mcmc} & 2.533 & -0.617 & -0.380 & 0.355 & 0.215 & 0.072 & 0.274 & 0.289 & 0.534 & 1.854 & 0.250 \\ \hline
      & \multicolumn{10}{c}{Standard Errors (iid)}\\
      \hline
      \textsc{mle} & 0.068 & 0.008 & 0.033 & 0.023 & 0.031 & 0.009 & 0.012 & 0.075 & 0.039 & 0.101 & 0.018 \\ 
      \rnr & 0.066 & 0.008 & 0.035 & 0.022 & 0.030 & 0.009 & 0.011 & 0.077 & 0.038 & 0.118 & 0.017 \\ 
      \rqn & 0.065 & 0.008 & 0.034 & 0.021 & 0.030 & 0.009 & 0.011 & 0.073 & 0.038 & 0.115 & 0.017 \\ 
      \textsc{mcmc} & 0.040 & 0.004 & 0.020 & 0.015 & 0.020 & 0.006 & 0.008 & 0.060 & 0.028 & 0.069 & 0.012 \\ \hline
      & \multicolumn{10}{c}{Standard Errors (clustered)}\\ \hline
      \textsc{mle} & 0.094 & 0.015 & 0.039 & 0.031 & 0.030 & 0.012 & 0.016 & 0.073 & 0.068 & 0.103 & 0.029 \\ 
      \rnr & 0.087 & 0.015 & 0.038 & 0.032 & 0.030 & 0.011 & 0.014 & 0.073 & 0.069 & 0.102 & 0.028 \\ 
      \rqn & 0.105 & 0.015 & 0.040 & 0.034 & 0.036 & 0.011 & 0.016 & 0.105 & 0.071 & 0.134 & 0.030 \\
      \hline \hline
      \end{tabular} }
{\footnotesize Legend: \rnr/\rqn\, computed using $\gamma=0.1$, Gaussian (iid) and exponential (clustered) weights.\\ MCMC computed using $\gamma = 0.0034$, Gaussian prior with mean zero, variance 10 for fixed effects, flat uniform prior on other parameters.}
\end{table}

\begin{figure}[H] \caption{Stochastic Gradient Descent} \centering
      \includegraphics[scale=0.55]{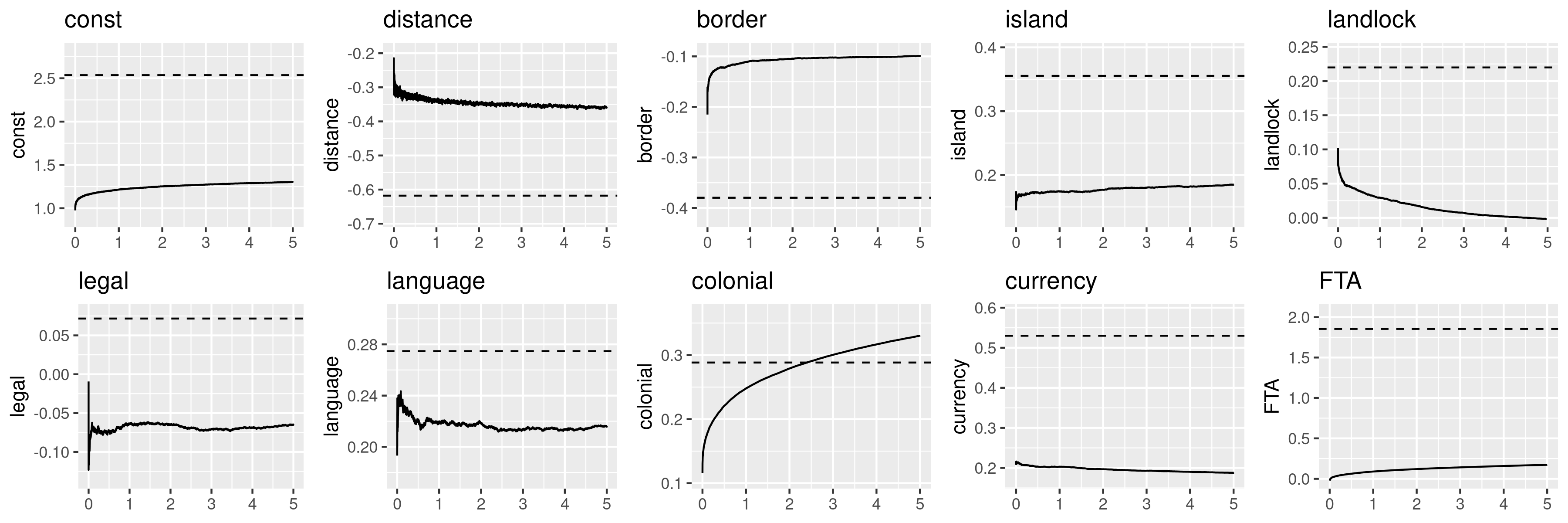}
      {\footnotesize Legend: x-axis $= k$ in millions, $P_b = I_d, m=1,\gamma_k = \gamma_0 k^{-5/8}$, 5 million iterations.}
\end{figure}

\begin{figure}[H] \caption{Infeasible Stochastic Newton-Raphson} \centering
      \includegraphics[scale=0.55]{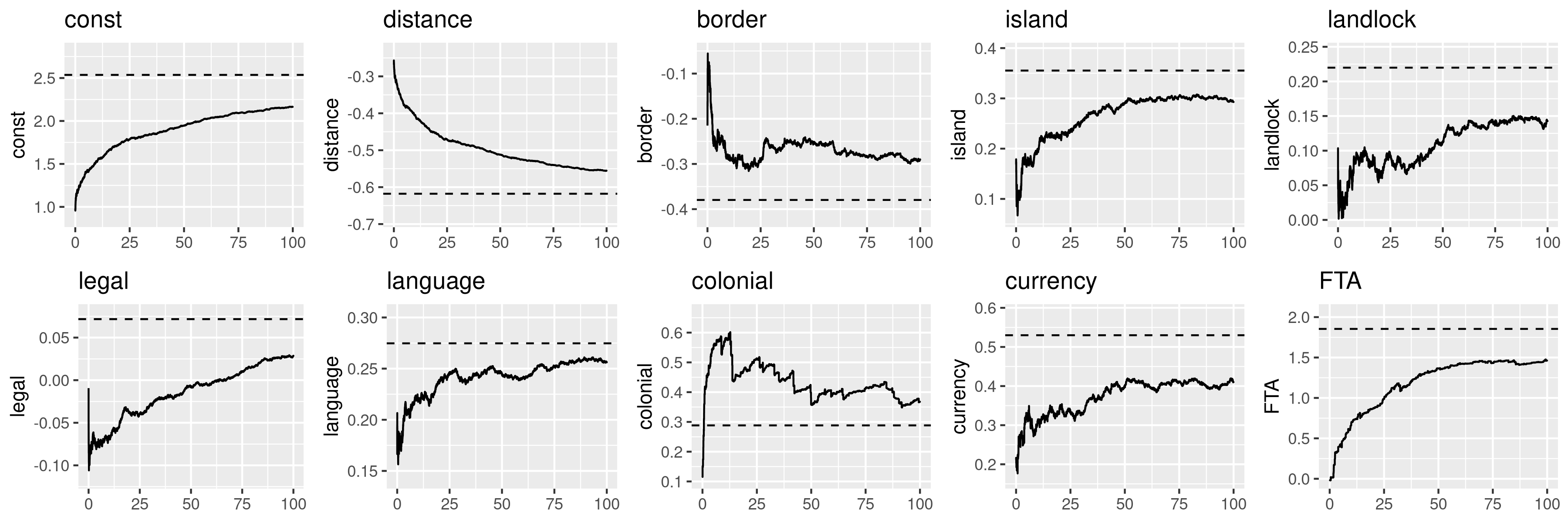}
      {\footnotesize Legend: x-axis $= k$ in thousands, $P_b= [H_n(\hat\theta_n) + \gamma_k I_d]^{-1}$, $m=1,\gamma_k = \gamma_0 k^{-5/8}$, $100$ thousand iterations.}
\end{figure}

\newpage

\subsection{Example 3: NLS Estimation of Transportation Costs} \label{apx:Ex3}
\paragraph{Implementation of \rnr.} The objective function can be written as:
\[ Q_{n}(\theta) =  \frac{1}{n}\sum_{odt} [y_{odt} - f(x_{odt},\theta)]^2, \]
with $\theta = (\alpha,\delta,\beta)$, $\beta$ are the intercept and fixed effect coefficients and $x_{odt}$ consists of $R_t$ and the intercept, fixed effect dummies. $n$ is the total number of observations. The gradient and hessian are:
\begin{align*}
      G_n(\theta) &= -\frac{2}{n}\sum_{odt} \partial_\theta f(x_{odt},\theta)[y_{odt} - f(x_{odt},\theta)],\\ 
      H_n(\theta) &= \frac{2}{n}\sum_{odt} \partial_\theta f(x_{odt},\theta)\partial_\theta f(x_{odt},\theta)^\prime  -\frac{2}{n}\sum_{odt} \partial^2_{\theta,\theta^\prime} f(x_{odt},\theta) [y_{odt} - f(x_{odt},\theta)].
\end{align*}
If $\mathbb{E}\left( y_{odt} - f(x_{odt},\theta^\dagger)| x_{odt} \right)=0$, then $\hat H_n(\hat\theta_n) = \frac{2}{n}\sum_{odt} \partial_\theta f(x_{odt},\hat\theta_n)\partial_\theta f(x_{odt},\hat\theta_n)^\prime$ is a consistent estimator of the population hessian $H(\theta^\dagger) = 2 \mathbb{E}[\partial_\theta f(x_{odt},\theta^\dagger)\partial_\theta f(x_{odt},\theta^\dagger)^\prime]$.

Using this estimator, a cost-effective implementation of \rnr\, with $m=n$ and multiplier weights clustered at the district level $d$ relies on:
\begin{align*}
      G_m^{(b)}(\theta) &= -\frac{2}{n}\sum_{odt} w^{(b)}_d \partial_\theta f(x_{odt},\theta)[y_{odt} - f(x_{odt},\theta)],\\ 
      H_m^{(b)}(\theta) &= \phantom{-}\frac{2}{n}\sum_{odt} w^{(b)}_d \partial_\theta f(x_{odt},\theta)\partial_\theta f(x_{odt},\theta)^\prime,
\end{align*}
where $w^{(b)}_d \sim \mathcal{N}(1,1)$ are iid over districts $d$ and iterations $b$. The appeal of this approach is that $H_m^{(b)}(\theta)$ is symmetric positive semidefinite for all $\theta$, even if there are values for which $H_n(\theta)$ is non-definite. The added penalty ensures that $H_m^{(b)}(\theta) + \lambda \partial^2_{\theta,\theta^\prime}\text{pen}(\theta)/n$ is symmetric positive definite. The derivative $\partial_\theta f(x_{odt},\theta)$ is computed once for each iteration $b$, and then used to evaluate both the gradient and hessian. More specifically $\partial_\alpha f(x_{odt},\theta)$ is computed using finite differences and $\partial_{(\delta,\beta)} f(x_{odt},\theta)$ is simply the vector of linear regressors, i.e. $\log(\text{LCRED}(R_t,\alpha)_{odt})$ and the intercept and fixed effect dummies. The gradient and hessian of the penalty term are then added to the re-weighted gradient and hessian above.  
\newpage
\section{Convergence in Some Non-Convex Settings} \label{sec:discussion}
\subsection{Recovering from a bad start} \label{sec:bad}
Lemmas \ref{lem:cv_non_stochastic} and \ref{lem:cv_stochastic} require the user-chosen learning rate $\gamma$ to satisfy $A(\gamma) < 1$ to get the desired contraction property, which leads to convergence. In practice, feasible values of $\gamma$, depend on both the choice of $P_b$ and the hessian $H_n$. The following discussion will focus on \nr\, in the context of Lemma \ref{lem:cv_non_stochastic} for simplicity. For values of $\gamma$ that are too large, $A(\gamma)<1$ may not hold and convergence may fail. The following illustrates how introducing a certain quadratic penalty in the first few iterations can restore the contraction property. 

\noindent From the starting value $\theta_0$, the firt \nr\, iteration has the form:
$\theta_{1} = \theta_0 - \gamma [H_n(\theta_0)]^{-1}G_n(\theta_0),$ 
which is only well-defined if $H_n(\theta_0)$ is non-singular. Using the mean-value theorem: $\theta_{1} - \hat\theta_n = [I_d -  \gamma [H_n(\theta_0)]^{-1}H_n(\tilde\theta_0)](\theta_0 - \hat\theta_n),$ for some intermediate value $\tilde\theta_0$ between $\theta_0$ and $\hat\theta_n$. A contraction only occurs if $\sigma_{\max}[I_d -  \gamma [H_n(\theta_0)]^{-1}H_n(\tilde\theta_0)] <1$. For $\|\theta_0-\hat\theta_n\|$ small, continuity of the hessian implies $[H_n(\theta_0)]^{-1}H_n(\tilde\theta_0)]$ is close to the identity matrix $I_d$ so that a small $\gamma < 1$ always leads to a contraction closer to $\hat\theta_n$.

Issues can arise for distant starting values $\theta_0$. Suppose $H_n(\theta_0)$ is close to singular, then $[H_n(\theta_0)]^{-1}H_n(\tilde\theta_0)]$ can be very large and $\gamma$ needs to be very small to get a contraction. In that situation, \nr\, iterations are poorly behaved if $\gamma$ is not sufficiently small. As a solution, add a quadratic penalty: $Q_n(\theta) + \frac{\lambda}{2n} \|\theta-\theta_0\|_2^2.$ Penalizing towards $\theta_0$ implies the gradient is unchaged but the hessian becomes $H_n(\theta_0) + \frac{\lambda}{n}I_d$. The first \nr\, iteration is now:
$\theta_{1} = \theta_0 - \gamma [H_n(\theta_0) + \lambda/nI_d]^{-1}G_n(\theta_0),$
which is always well-defined as long as $H_n(\theta_0)$ is positive semi-definite.\footnote{If $H_n(\theta_0)$ has negative eigenvalues it is preferable to use the positive semi-definite $[H_n(\theta_0)^\prime H_n(\theta_0)]^{1/2}$ instead, as illustrated in the next subsection.} For the same intermediate value, we now have:
$\theta_{1} - \hat\theta_n = [I_d -  \gamma [H_n(\theta_0) +\lambda/nI_d]^{-1}H_n(\tilde\theta_0)](\theta_0 - \hat\theta_n),$ which even for $\gamma=1$ is a contraction, provided $\lambda$ is sufficiently large. Note that the update above coincides with an iteration using the trust-region algorithm.\footnote{The trust-region method is an alternative to line-search which can improve on standard \nr\, iterations, see \citet[Ch4.1]{nocedal-wright:06} and \citet[Ch1.4.2]{Bertsekas2016}. } The modification is also similar to the $\tau_b$ used in the quasi-Newton update proposed in the paper. A simple strategy to is then to pick $\gamma \in (0,1)$, and add a large penalty $\lambda$ in the first couple of iterations. Valid inference requires the penalty to be asymptotically negligible, so after the burn-in one can either set $\lambda=0$, use a schedule where $\lambda_b \to 0$ sufficiently fast as $b$ increases, or use a smaller $\lambda_n = o(1)$. Figure \ref{fig:Ex1_lambda} below illustrates the effect of different $\lambda$s during the initial iterations for Example 1 in Section \ref{sec:examples} where $H_n(\theta_0)$ is singular. The baseline, $\lambda=20$, is used in the Monte-Carlo simulations, see Appendix \ref{apx:add_ex1} for implementation details.

\begin{figure} \caption{Illustration: initial iterations for different choice of $\lambda$} \label{fig:Ex1_lambda} \centering
      \includegraphics[scale=0.55]{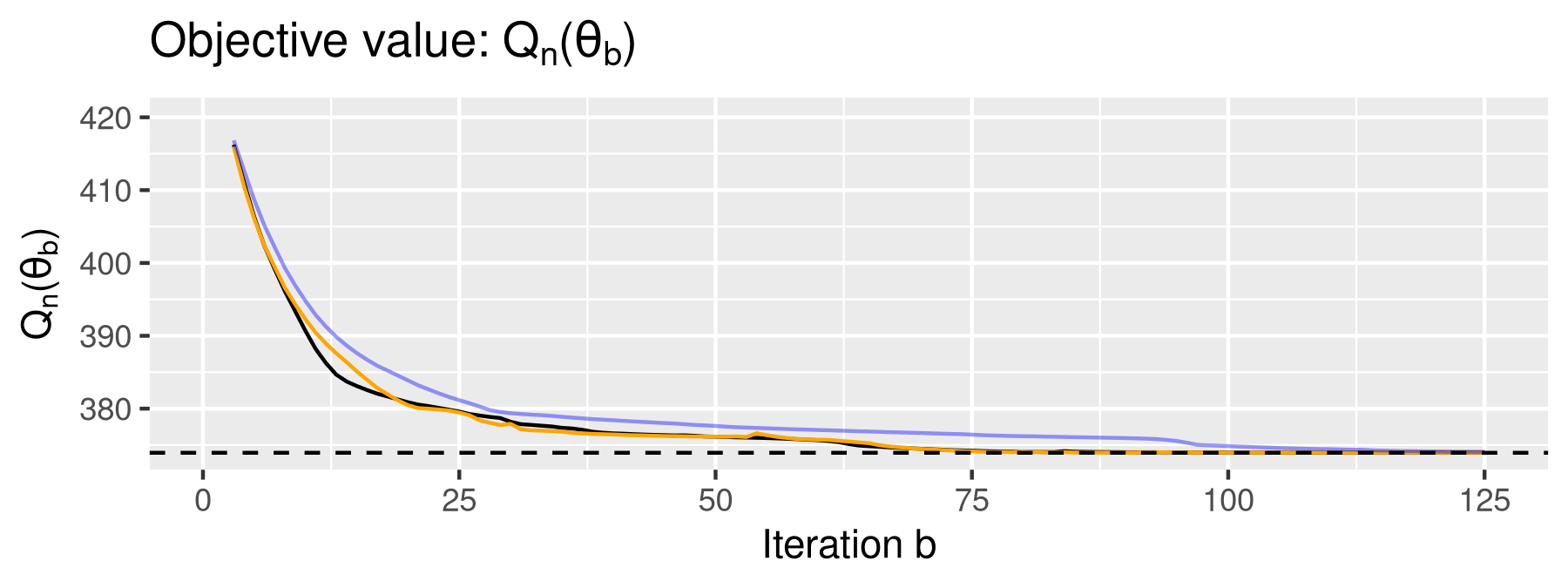}\includegraphics[scale=0.55]{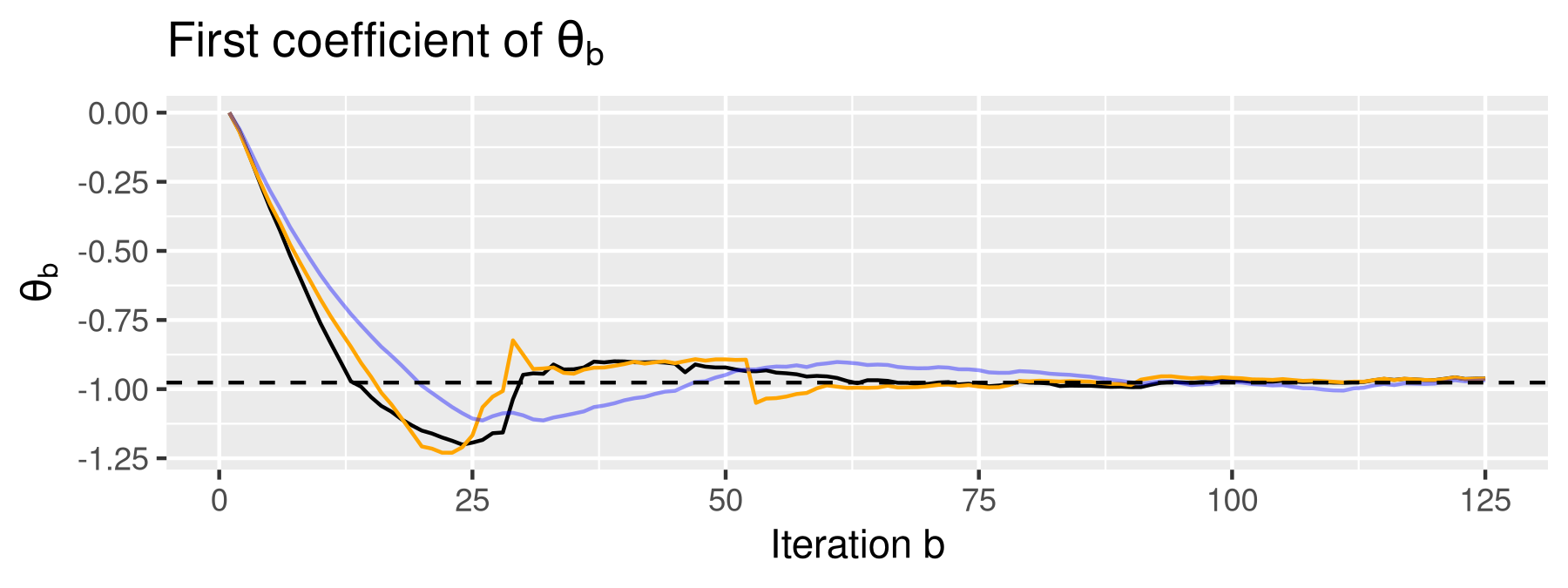}\\
      {\footnotesize Legend: $\gamma=0.1$. Solid lines: black $\lambda=20$, orange $\lambda=0.1$, blue $\lambda=100$. Dashed black line: $Q_n(\hat\theta_n)$.}
\end{figure}

\subsection{Suboptimal solutions} \label{sec:saddle}
An important concern is when a classical optimizer converges to a point that is suboptimal, i.e. is neither a local nor a global optimum. In these cases, the user has to interve and restart an optimization at another value which can be a time-consuming process. The following illustrates, using a stylized  example, the effect of the resampling noise in this scenario.\\
\noindent Consider a point $\theta_{\times}$ for which the Hessian is indefinite, i.e. $H_n(\theta_{\times})$ has both strictly positive and negative eigenvalues.\footnote{See Section 3.4 \citet{nocedal-wright:06} for more details on this issue.} For simplicity, the following assumes that the objective is quadratic: $Q_n(\theta) = \frac{1}{2}(\theta-\theta_{\times})^\prime H_n  (\theta-\theta_{\times})$, with eigendecomposition $H_n = Q \Lambda Q^{\prime}$, $\Lambda = \text{diag}(\lambda_1,\dots,\lambda_{d_\theta})$ where $0 < \lambda_S \leq \lambda_{S-1} \leq \dots \leq \lambda_1$ and $0 > \lambda_{S+1} \geq \dots \geq \lambda_{d_\theta}$. There are $S$ directions with positive curvature, and $d_\theta - S$ directions with negative curvature. Let $q_1,\dots,q_{d_\theta}$ be the corresponding eigenvectors. Without constraints, the global minimum is $Q_n(\theta) = -\infty$ attained at $\theta = +\infty \times q_{s}$, $s \geq S+1$, the optimizer should diverge: $\|\theta_b\| \to \infty$. 

\paragraph{Issues with classical Newton-Raphson:} Consider the Newton-Raphson iteration $\theta_{b+1} = \theta_b - \gamma H_n^{-1}G_n(\theta_b)$, here $G_n(\theta_b) = H_n(\theta_b - \theta_{\times})$ so that: $\theta_{b+1} - \theta_{\times} = (1-\gamma)(\theta_b-\theta_{\times}),$ \nr\, converges exponentially fast to the suboptimal solution $\theta_{\times}$. It is possible to improve the behaviour of \nr\, by using a Hessian modification: $\theta_{b+1} = \theta_b - \gamma (H_n^2)^{-1/2}G_n(\theta_b)$ which enforces $(H_n^2)^{-1/2} = Q (\Lambda^2)^{-1/2} Q^\prime$ positive definite. \nr\, iterations are now such that:
\[ \theta_{b+1} - \theta_{\times} = (I + Q \Gamma Q^\prime)(\theta_b-\theta_{\times}), \quad \Gamma = \text{diag}(\underbrace{-\gamma,\dots,-\gamma}_{S \text{ times}},\underbrace{+\gamma,\dots,+\gamma}_{d_\theta-S \text{ times}}). \]
If $\theta_b-\theta_{\times}$ equals zero or is orthogonal to $q_{S+1},\dots,q_{d_\theta}$, then \nr\, will converge to $\theta_{\times}$ exponentially fast. However, \nr\, will diverge exponentially fast if $\theta_b-\theta_{\times} = q_{S+1}$, for instance, because of the explosive root $1+\gamma >1$ which is outside the unit circle. 

\paragraph{The case of resampled Newton-Raphson:} The main difference between \nr\, and \rnr\, in this example is that \rnr\, diverges exponentially fast with high-probability even when $\theta_0=\theta_{\times}$, which is always problematic for classical optimizers such as \gd, \nr, or \textsc{bfgs}. To simplify notation, assume that $Q=I$ and $G_{m}^{(b)}(\theta) \sim \mathcal{N}\left( G_n(\theta), \Sigma/m \right)$, $\Sigma$ non-singular, and $H_m^{(b)} = H_n$. For \rnr: $\theta_{b+1} = \theta_b - \gamma ([H_m^{(b+1)}]^2)^{-1/2}G_m^{(b+1)}(\theta_b)$, which implies:
\[ \theta_{b+1} - \theta_\times = (I + \Gamma)(\theta_b-\theta_{\times}) + \Gamma Z_m^{(b+1)}, \quad Z_m^{(b+1)} = G_{m}^{(b+1)}(\theta_b)-G_n(\theta_b),  \]
where $\Gamma$ is the same as above. Notice that $(\theta_{b+1} - \theta_\times)$ follows a VAR(1) process with transition matrix $I+\Gamma$ which has $d_\theta-S$ explosive roots outside the unit circle, all equal to $1+\gamma$. Suppose $\theta_0=\theta_{\times}$ which is problematic for \nr. Let $\theta_{j,b+1}$ be the s-th row of $\theta_{b+1}$ and $Z_{m,j}^{(b-j)}$ the $j$-th row of $Z_m^{(b-j)}$, denote its variance by $\sigma^2_{j}>0$. For $j \geq S+1$ we have the AR(1) representation \[ \theta_{j,b+1}-\theta_{j,\times} = \gamma(1+\gamma)^b \sum_{j=0}^b (1+\gamma)^{-j} Z_{m,j}^{(j)} \sim \mathcal{N}\left(0,\gamma^2(1+\gamma)^{2b} \frac{1-(1+\gamma)^{-2b-2}}{1-(1+\gamma)^{-2}}\sigma^2_{j} \right).\]
Using Gaussianity and the Paley-Zygmund inequality we have the probability bound:
\[ \mathbb{P}^\star \left(  \frac{|\theta_{j,b+1}-\theta_{j,\times}|}{\gamma (1+\gamma)^b}\frac{[1-(1+\gamma)^{-2}]^{1/2}}{\sqrt{2/\pi} \sigma_j[1-(1+\gamma)^{-2b-2}]^{1/2}} \geq \varepsilon\right) \geq  \frac{2(1-\varepsilon)^2}{\pi},\]
for any $\varepsilon \in [0,1]$. This implies that, with high probability, $\theta_{b+1}$ diverges exponentially fast.

\paragraph{Numerical Illustration:} To illustrate the above numerically, consider a two-dimensional optimization problem where $\theta_{\times} = (0,0)$. We will set $\theta_0 = q_1 + c q_2$ and vary $c$ to compare the performance of R's \textsc{bfgs} optimizer, which relies on a line-search, with \nr\, and \rnr\, which use a fixed learning rate $\gamma=0.1$ and the hessian modification. For $c=0$, \nr\, will converge to the saddle point and $c\neq 0$ implies that \nr\, will eventually diverge. Table \ref{tab:saddle} reports the results for a range of $c$. Note that \textsc{bfgs} converges to $\theta_{\times}$ even when $c \neq 0$. This is because, the \textsc{bfgs} update does not guarantee positive definiteness of the conditioning matrix so that it behaves similarly to \nr\, without the hessian modification. \nr\, diverges for $c\neq 0$ but slowly for $c<1$. As predicted, \rnr\, diverges quickly for all $c$.

\begin{table}[ht] \caption{Behaviour of \textsc{bfgs}, \nr, and \rnr\, around a saddle point} \label{tab:saddle}
      \centering
      \begin{tabular}{l|ccc|ccc|ccc}
        \hline \hline
      & \multicolumn{3}{c|}{\textsc{bfgs}} & \multicolumn{3}{c|}{\nr} & \multicolumn{3}{c}{\rnr}\\ \hline
       $c$ & $\theta_1$ & $\theta_2$ & $Q_n(\theta)$ & $\theta_1$ & $\theta_2$ & $Q_n(\theta)$ & $\theta_1$ & $\theta_2$ & $Q_n(\theta)$ \\ 
        \hline
      0.0 & 0.00 & 0.00 & 0.00 & 0.01 & -0.01 & 0.00 & -0.01 & -90.42 & -91.35 \\ 
      0.1 & 0.05 & 0.05 & -0.00 & 4.78 & 5.30 & -0.25 & 5.30 & -85.65 & -81.95 \\ 
      0.5 & 0.23 & 0.25 & -0.00 & 23.88 & 26.53 & -6.37 & 26.53 & -66.54 & -49.47 \\ 
      1.0 & 0.45 & 0.50 & -0.00 & 47.76 & 53.06 & -25.48 & 53.06 & -42.66 & -20.33 \\ 
      5.0 & 2.26 & 2.51 & -0.06 & 238.80 & 265.32 & -637.09 & 265.32 & 148.37 & -245.96 \\ 
         \hline \hline
      \end{tabular}\\
      {\footnotesize Legend: \textsc{bfgs}: output of R's optimizer (optim). \nr\, and \rnr: output after $50$ iterations.\\ True minimum is $Q_n = -\infty$. Saddle point is $\theta_{\times} = (0,0)$.}
\end{table}

\end{appendices}
\end{document}